\numberwithin{equation}{section}
\newtheorem{Theorem}{Theorem}%[section]
\newtheorem{Lemma}{Lemma}
\newtheorem{Corollary}{Corollary}
\newtheorem{Proposition}{Proposition}
\newtheorem{Example}{Example}
\definecolor{DarkRed}{rgb}{.7,0,.4}
\def\ci{\cite}
\def\cp{\citep}
\def\mc{\mathcal}
\def\bco{\iffalse}
\newcommand{\be}{\begin{eqnarray}}
\newcommand{\ee}{\end{eqnarray}}
\newcommand{\R}{\mathcal{R}}
\newcommand{\ip}[2]{\langle #1, #2 \rangle}
\def\sn{\sum_{i=1}^n}
\def\eps{\varepsilon}
\def\om{\omega}
\def\mp{m_{\oplus}}
\def\hm{\hat{m}}
\def\hl{\hat{l}}
\def\tl{\tilde{l}}
\def\hmp{\hm_{\oplus}}
\def\tlp{\tl_{\oplus}}
\def\hlp{\hl_{\oplus}}
\def\omp{\om_{\oplus}}
\def\homp{\hat{\om}_\oplus}
\def\hbeta{\hat{\beta}}
\def\xbar{\bar{X}}
\def\s2{\sigma^2}
\def\hs2{\hat{\sigma}^2}
\def\syx{\sigma_{YX}}
\def\Si{\Sigma}
\def\hSi{\hat{\Sigma}}
\def\Var{{\rm Var}}
\def\Cov{{\rm Cov}}
\def\ra{\rightarrow}
\DeclareMathOperator*{\argmax}{argmax}
\DeclareMathOperator*{\argmin}{argmin}
\def\Om{\Omega}
\def\om{\omega}
\newcommand{\bn}{\beta_0}
\newcommand{\hbn}{\hat{\beta}_0}
\newcommand{\bo}{\beta_1}
\newcommand{\hbo}{\hat{\beta}_1}
\def\sbo{{\beta_1}^{\ast}}
\def\sbn{\beta_0^{\ast}}
\def\dyx{{\rm d}F_{Y|X}(x, y)}
\def\dx{{\rm d}F_X(x)}
\def\dF{{\rm d}F(x, y)}
\def\dFz{{\rm d}F(z, y)}
\def\dy{{\rm d}F_Y(y)}
\def\inv{^{-1}}
\def\Ltwoip#1#2{\langle #1, #2\rangle_{L^2}}
\def\Ltwonorm#1{\lVert #1 \rVert_{L^2}}
\def\Ltwo#1#2{d_{L^2}(#1, #2)}
\def\Froip#1#2{\langle #1, #2\rangle_{F}}
\def\Fronorm#1{\lVert #1 \rVert_F}
\def\Fro#1#2{d_F(#1,#2)}
\def\hmu{\hat{\mu}}
\def\tL{\tilde{L}_n}
\def\hL{\hat{L}_n}
\def\diam{\textrm{diam}}
\def\ed{\end{document}}
\def\vecop{{\rm vec}}
\def\vechop{{\rm vech}}
\def\dw{d_W}
\begin{document}

\begin{frontmatter}

% "Title of the paper"
\title{Fr\'echet Regression For  Random Objects With Euclidean Predictors}
\runtitle{Fr\'echet Regression}

% indicate corresponding author with \corref{}
% \author{\fnms{John} \snm{Smith}\corref{}\ead[label=e1]{smith@foo.com}\thanksref{t1}}
% \thankstext{t1}{Thanks to somebody}
% \address{line 1\\ line 2\\ printead{e1}}
% \affiliation{Some University}

\begin{aug}
\author{\fnms{Alexander} \snm{Petersen}\thanksref{m1}\corref{}\ead[label=e1]{petersen@pstat.ucsb.edu}}
 \and
\author{\fnms{Hans-Georg} \snm{M\"uller}\thanksref{m2,t1}\ead[label=e2]{hgmueller@ucdavis.edu}}
\affiliation{Department of Statistics, University of California, Santa Barbara\thanksmark{m1}}
\affiliation{Department of Statistics, University of California, Davis\thanksmark{m2}}
\thankstext{t1}{Supported in part by National Science Foundation grants DMS-12-28369 and DMS-14-07852}
\address{Address of Alexander Petersen\\
Statistics and Applied Probability \\
University of California \\
Santa Barbara, CA 93106-3110 \\
\printead{e1}}
\address{Address of Hans-Georg M\"uller\\
Department of Statistics \\
Mathematical Sciences Building 4118 \\
399 Crocker Lane \\
University of California, Davis \\
One Shields Avenue \\
Davis, CA 95616 \\
\printead{e2}}

\runauthor{Petersen and M\"uller}

\end{aug}

\begin{abstract}
Increasingly, statisticians are faced with the task of analyzing complex data that are non-Euclidean and specifically do not lie in a vector space.  To address the need for statistical methods for such data, we introduce the concept of Fr\'echet regression. This is a general approach to regression when responses are complex random objects in a metric space and predictors are in $\mathcal{R}^p$, achieved by extending the classical concept of a Fr\'echet mean to the notion of a conditional Fr\'echet mean. We develop generalized versions of both global least squares regression and local weighted least squares smoothing. The target quantities are appropriately defined population versions of global and local regression for response objects in a metric space.  We derive asymptotic rates of convergence for the corresponding fitted regressions using observed data to the population targets under suitable regularity conditions by applying empirical process methods. For the special case of random objects that reside in a Hilbert space, such as regression models with vector predictors and functional data as responses, we obtain a limit distribution. The proposed methods have broad applicability.  Illustrative examples include responses that consist of probability distributions and correlation matrices, and we demonstrate both global and local Fr\'echet regression for demographic and brain imaging data.  Local Fr\'echet regression is also illustrated via a simulation with response data which lie on the sphere.
\end{abstract}

\begin{keyword}[class=MSC]
\kwd[Primary ]{62G05}
\kwd[; secondary ]{62J99, 62G08}
\end{keyword}

\begin{keyword}
\kwd{Least Squares Regression; Random Objects; Metric Spaces; Local Linear Regression; Functional Connectivity; Densities as Objects}
\end{keyword}

\end{frontmatter}

\section{Introduction}
\label{sec: intro}

The regression relationship between a response variable and one or more predictor variables constitutes the target of many statistical methodologies.  The most basic form is linear regression, where all variables are real-valued, and the conditional mean of the response variable is linear in the predictors.  The linear regression model is quite flexible, includes polynomial fits and categorical predictor variables, among others, and remains one of the most popular  tools for data analysis.  In addition to the superb interpretability of linear models and simple model fitting via least squares, powerful inferential methods, with well-established theory, are available for estimation and testing.  Linear regression ideas also motivate local polynomial smoothing, further adding to their vast applicability.

In recent years, as data types are becoming  more complex, attention has turned to regression in more abstract settings. The importance of the analysis of such object data has recently been highlighted  \cp{marr:14,wang:07:1}. A setting that is increasingly encountered is that of a response variable taking values in a metric space, which may or may not have algebraic structure.  The presence of a metric provides a natural connection to the work of \ci{frec:48}, where the Fr\'echet mean is defined for random elements of a metric space as a direct generalization of the standard mean, which is defined by integration over a probability space.  This generalization has been increasingly exploited in statistical analyses due to its inherent flexibility. Specifically, no ambient vector space needs to be assumed and only a distance between data objects is required. As regression can be viewed as the modeling of conditional means, a key feature of our approach is that we introduce the concept of a conditional Fr\'echet mean, generalizing the classical Fr\'echet mean.

One important class of random objects, which has been extensively studied, consists of observations on a finite-dimensional differentiable Riemannian manifold.  Due to local Euclidean properties of the space, one can mimic both parametric (global) and nonparametric (local) regression techniques for standard Euclidean data quite effectively by local Euclidean approximations. Regression models for this special case have been well studied \cp{fish:87, chan:89, pren:89, fish:95}, including intrinsic models for geodesic regression \cp{flet:13,niet:11,corn:17}, semiparametric regression \cp{shi:09} and local kernel regression as a generalization of the classical Nadaraya-Watson smoother \cp{pell:06,davi:07,hink:12,yuan:12}.  Recently, the extrinsic regression model in \ci{lin:15} extends the notion of extrinsic means \cite[see, e.g., Ch. 11 and 18 of ][]{patr:15}, where  extrinsic approaches have been reported to have  computational advantages \cp{bhat:12}.

In this paper, however, we go beyond manifolds and our focus is on a more general case of random objects in metric spaces with little structure, where only distances between response objects are computable.  To our knowledge, in general metric spaces, the only global or parametric model which has been proposed is that of \ci{fara:14}, where data are represented as scores in a Euclidean space based on their pairwise distances, followed by the use of classical regression techniques.  This method requires a complicated ``backscoring" step, where vectors in Euclidean space are then represented in the original metric space, and its theoretical properties have not been studied.   Local regression methods on generic metric spaces are limited to Nadaraya-Watson type estimators \cp{davi:07,hein:09,stei:09,stei:10} and lack a comprehensive asymptotic analysis. Thus, there is a need for additional statistical models to tackle this type of data that is increasingly common. Accordingly, we present here methodology and theory for both global and local regression analysis of complex random objects.

Specifically, we consider regression relationships between responses which are complex random objects and vectors of real-valued predictors.  To this end, we develop a global regression relation as a generalization of multiple linear regression, as well as a class of more flexible local regression methods that generalizes local linear or polynomial regression. As the proposed regression approach for random objects incorporates the geometry implied by the metric and can be viewed as an extension of the Fr\'echet mean, we refer to our methods as Fr\'echet regression.  {\it Global Fr\'echet regression} provides an improvement on the global method of \ci{fara:14}, as the proposed model defines the regression directly on the object space and does not require backscoring.  The global Fr\'echet regression model constitutes a class of regression functions on arbitrary metric spaces which can be fitted without a tuning parameter or the need for any local smoothing technique.  We also propose \emph{local Fr\'echet regression}, which generalizes local linear estimation to a framework where responses are random objects, extending the available nonparametric regression methodology for object data. A challenge for the development of  local Fr\'echet regression is to define an appropriate population model, which serves as the target to which the fitted local Fr\'echet regression converges. We establish consistency and rates of convergence for both global and local Fr\'echet regression.

The proposed global Fr\'echet regression model is introduced in Section~\ref{sec: regress}, and theory quantifying the convergence rates of these estimators is given in Section~\ref{sec: theory}, along with some concrete examples which are shown to satisfy the necessary regularity conditions.  Local  Fr\'echet regression is introduced in 
Section~\ref{sec: local}, along with  asymptotic convergence theory. 
All proofs can be found in the Appendix. %article \cp{pete:16}.  
For the special case where the random objects take values in a Hilbert space, a limiting distribution can be obtained, as demonstrated in Section~\ref{sec: hilbert}.   

Our primary application examples deal with samples of probability distributions and correlation matrices, which are illustrated with data from demography and neuroimaging, with details in Sections~\ref{sec: density} and \ref{sec: correlation}, respectively. Here, we also include a discussion of practical issues, such as a suitable notion of the coefficient of determination $R^2$ when the responses are random objects.  For the space of probability distributions, we utilize the Wasserstein metric to conduct a simulation experiment as well as analyze the evolution of mortality profiles for two countries.  For the case where responses are correlation matrices, we examine the relationship between functional connectivity in the brain, as quantified by pairwise correlations of fMRI signals, with age as predictor.  Lastly, although the proposed methodology does not require any particular metric structure, it is nevertheless applicable to structured spaces such as manifolds.  To demonstrate this, the local Fr\'echet regression technique is also illustrated with simulated manifold data on the sphere $S^2 \subset \R^3$ in Section~\ref{sec: sphere}.

\section{Global Fr\'echet Regression}
\label{sec: regress}

\subsection{Preliminaries}
\label{ss: prelim}

Let $(\Om, d)$ be a metric space.  We consider a random process $(X, Y) \sim F$, where $X$ and $Y$ take values in $\R^p$ and $\Om$, respectively, and $F$ is the joint distribution of $(X, Y)$ on $\R^p \times \Om$. We denote the marginal distributions of $X$ and $Y$ as $F_X$ and $F_Y$, respectively, and assume that $\mu = E(X)$ and $\Sigma = \Var(X)$ exist, with $\Sigma$ positive definite. The conditional distributions $F_{X|Y}$ and $F_{Y|X}$ are also assumed to exist.  In this general setting, we refer to $Y$ as a random object.  The usual notions of mean and variance were generalized to random objects in metric spaces in \ci{frec:48}, where
\begin{equation}
\label{eq: frechet_mv}
\omp = \argmin_{\om \in \Om} E(d^2(Y, \om)), \quad V_\oplus = E(d^2(Y, \omp))
\end{equation}
were defined, now commonly referred to as Fr\'echet mean and Fr\'echet variance, respectively.

Building on these concepts, we introduce the Fr\'echet regression function of $Y$ given $X = x$,
\begin{equation}
\label{eq: frechet_regression}
\mp(x) = \argmin_{\om \in \Om} M_\oplus(\om, x), \quad M_\oplus(\cdot, x) = E(d^2(Y, \cdot)|X = x),
\end{equation}
where we refer to $M_\oplus(\cdot, x)$ as  the (conditional) Fr\'echet function.  For the special case $\Om = \R$, various nonparametric regression methods have been developed which are based on kernel or local linear polynomial fitting \cp{fan:96}, splines \cp{crav:79,marx:96} or other smoothers.

A basic statistical task is to fit a global regression model for response $Y$ and predictor $X$, in order to provide ease of implementation and interpretation and allow for good options for overall inference and testing.  Fitting of such a global model also does not require the choice of a tuning parameter, as all local fitting methods do, since global models are usually fitted under the assumption that there is no bias.  Given that no algebraic structure is assumed, it is not feasible to directly generalize parametric models to a parametric function on $\Om$, as has been done in the special case when $\Om$ is a Riemannian manifold.  However, an alternative solution that we will develop is to recharacterize the standard multiple linear regression model as a function of weighted Fr\'echet means, where the weights have a known form and vary with $x$.
%to introduce a weighted Fr\'echet mean, where the weights have a known form and vary with $x$.  In the following section, we demonstrate that this intuitive global regression model in the form of a weighted Fr\'echet mean arises naturally by using multiple linear regression, which corresponds to the special case where $\Om = \R$, as a guide.

\subsection{Generalizing Linear Regression}
\label{ss: frechet}

We begin by considering the standard setup for linear regression, for which $\Om = \R$, and then write $m = m_\oplus$ in (\ref{eq: frechet_regression}).  The model for linear regression is
\begin{equation}
\label{eq: lin}
m(x) := E(Y|X = x) = \sbn + (\sbo)^T(x - \mu),
\end{equation}
where the scalar intercept $\sbn$ and slope vector $\sbo$ are the solutions
\begin{equation}
\label{eq: lin_ls}
(\beta_0^{\ast},\beta_1^{\ast})=\argmin_{\beta_0\in \R, \beta_1\in\R^p} \int\left[\int y\dyx-(\beta_0 +\beta_1^T(x - \mu))\right]^2\dx.
\end{equation}
Similar to the Fr\'echet mean, the goal is to characterize the regression values in (\ref{eq: lin}) as minimizers of weighted least squares problems, where the weights depend on predictor values and the squared distances depend on response values.  %The following arguments demonstrate this development. 
Setting $\mu = E(X)$, $\Si = \Var(X)$ and $\syx = E\left[Y(X - \mu)\right]$, the normal equations for the right-hand side of (\ref{eq: lin_ls}) lead to
\[
E(Y) - \bn =0, \quad  \syx - \Sigma \bo=0,
\]
with solutions $\sbo=\Si\inv\syx$ and $\sbn = E(Y).$  Plugging these into (\ref{eq: lin}),
\begin{align}
m(x) &=E(Y) + \syx^T\Si\inv(x - \mu) = \int y\left\{1 + (z - \mu)^T\Si\inv(x - \mu)\right\}\dFz  \label{eq: int_weight}\\
%&= \int\int y \left\{1 + (z - \mu)\Si\inv(x - \mu)\right\}{\rm d}F(z,y)\right \nonumber \\
&= \int y s(z, x) {\rm d}F(z, y), \nonumber
\end{align}
where the weight function $s$ is
\begin{equation}
\label{eq: weight}
s(z, x)= 1 + (z - \mu)^T\Si\inv(x - \mu).
\end{equation}
%When $X$ and $Y$ have a standard bivariate normal distribution with correlation $\rho = 0.5$, the weight function has the form $s(y, x) = 1 + \rho Yx.
Because $\int s(z, x)\dFz = 1$, the last line of (\ref{eq: int_weight}) reveals that the standard linear regression function value $m(x)$ is the solution
\begin{equation}
\label{eq: regress}
m(x) = \argmin_{y \in \R} E\left[s(X, x)d_E^2(Y, y)\right],
\end{equation}
where $d_E$ is the standard Euclidean metric.  This alternative formulation of the linear regression function provides the key to defining the proposed global Fr\'echet regression function $\mp$ on an arbitrary metric space $(\Om, d)$, by simply replacing the Euclidean metric $d_E$, which is the default metric for real valued responses, by a more general metric  $d$ that is suitable for responses in  $\Om$.   The global Fr\'echet regression model then becomes
\begin{equation}
\label{eq: mp}
\mp(x):= \argmin_{\om \in \Om} M(\om, x), \quad M(\cdot, x) = E\left[s(X, x)d^2(Y, \cdot)\right].\end{equation}

Hence, generalizing multiple linear regression to the case of a metric-valued response is achieved by viewing the regression function as a sequence of weighted Fr\'echet means, with weights that are derived from those of the corresponding standard linear regression.  Although $\Omega$ is not a linear space, the weight function $s$ is a sensible choice for a number of reasons.  First, any coherent generalization of multiple linear regression to a global model for random object regression should result in a regression function passing through the point $(\mu, \omega_\oplus)$, which holds for the proposed model since $s(\cdot, \mu) \equiv 1$ implies that $m_\oplus(\mu) = \omega_\oplus$.  Second, in contrast to local regression in metric spaces, where the weights are given by a nonnegative kernel function, the weights given by $s$ can be negative and do not go to zero away from $x$, both of which are natural properties of a global regression relationship.  Lastly, despite being defined as a minimizer of a weighted Fr\'echet function, the proposed global Fr\'echet regression function can be computed analytically in some cases, in addition to the obvious case $\Omega = \mathcal{R}$.  As an illustrative example, when $\Om$ is the space of probability distributions on the real line equipped with the Wasserstein metric (see Example~\ref{exm: wass} and Section~\ref{ss: wass_sim} below) and the random objects $Y$ are distributions from a location-scale family with random location $\nu$ and scale $\sigma$, the global Fr\'echet regression model is equivalent to modeling the conditional means of $\nu$ and $\sigma$ as linear functions of the predictor $x.$   In fact, when the location-scale family is the Gaussian family, this space has a curved manifold structure, with properties studied extensively in the literature (e.g. Takatsu, 2011).  This provides an example of a curved manifold for which the global Fr\'echet regression relationship is sensible.

\subsection{Estimation}
\label{ss: estimation}

Assume that $(X_i, Y_i) \sim F$, $i = 1,\ldots,n$, are independent.  We take the standard approach to estimate the minimizer in (\ref{eq: mp}) by substituting the empirical distribution ${\rm d}F_n$ for ${\rm d}F$ in the integral in (\ref{eq: mp}).  Additionally, the unknown parameters $\mu$ and $\Sigma$ in (\ref{eq: weight}) are replaced by their empirical estimates $\xbar = n\inv\sn X_i$ and $\hSi = n\inv\sn(X_i - \xbar)(X_i - \xbar)^T$, respectively.

The empirical weights
\begin{equation}
\label{eq: emp_weights}
s_{in}(x) := 1 + (X_i - \xbar)^T\hSi\inv(x - \xbar)
\end{equation}
then lead to the estimator
\begin{equation}
\label{eq: mp_est}
\hmp(x) = \argmin_{\om \in \Om} M_n(\om, x)
\end{equation}
of $\mp(x)$ for $x \in \R^p$,
where $M_n(\cdot, x) = n\inv\sn s_{in}(x)d^2(Y_i, \om).$

\section{Theory}
\label{sec: theory}

We first consider the estimation of the regression relation in (\ref{eq: mp}) by the corresponding estimator in (\ref{eq: mp_est}) in the case of a totally bounded metric space $(\Om, d)$.  Recall the functions
%\begin{equation}
%\label{eq: M}
\[
M(\om, x) := E\left[s(X, x)d^2(Y, \om)\right], \quad M_n(\om, x) := n\inv \sn s_{in}(x)d^2(Y_i, \om).
\]
%\end{equation}
With regard to the objects in (\ref{eq: mp}) and (\ref{eq: mp_est}), we require the following assumptions for a fixed $x \in \R^p$.
\begin{itemize}
  \item[(P0)] The objects $\mp(x)$ and $\hmp(x)$ exist and are unique, the latter almost surely, and, for any $\eps > 0$, \mbox{$\inf_{d(\om, \mp(x)) > \eps} M(\om, x) > M(\mp(x), x)$.}
  \item[(P1)] %\red{
  Let $B_\delta(m_\oplus(x)) \subset \Omega$ be the ball of radius $\delta$ centered at $m_\oplus(x)$ and $N(\epsilon, B_\delta(m_\oplus(x)), d)$ be its covering number using balls of size $\epsilon.$ Then
  $$
  \int_0^1 \sqrt{1 + \log N(\delta\epsilon, B_\delta(\mp(x)), d)} \;d\epsilon = O(1) \quad \textrm{as} \quad \delta \rightarrow 0.
  $$%}
  \item[(P2)] There exist $\eta > 0$, $C > 0$ and $\beta > 1$, possibly depending on $x$, such that, whenever $d(\mp(x),\om) < \eta$, we have $M(\om, x) - M(\mp(x), x) \geq Cd(\om, \mp(x))^\beta$.
%\vspace{1cm}
\end{itemize}
Assumption (P0) is common to establish the consistency of an $M$-estimator such as $\hmp(x)$; see Chapter 3.2 in \ci{well:96}.  In particular, it ensures that weak convergence of the empirical process $M_n$ to the population process $M$ in turn implies convergence of their minimizers.  Furthermore, existence follows immediately if $\Om$ is compact.  The conditions on the covering number in (P1) and curvature in (P2) arise from empirical process theory and control the behavior of $M_n - M$ near the minimum in order to obtain rates of convergence.

We also consider uniform convergence results for predictor values $x$, requiring stronger versions of the above assumptions.  Let $\lVert \cdot \rVert_E$ be the Euclidean norm on $\R^p$ and $B > 0$.
\begin{itemize}
  \item[(U0)] Almost surely, for all $\lVert x \rVert_E \leq B$, the objects $\mp(x)$ and $\hmp(x)$ exist and are unique.  Additionally, for any $\eps > 0$,
      \[
      \inf_{\lVert x \rVert_E \leq B} \inf_{d(\om, \mp(x)) > \eps} M(\om, x) - M(\mp(x), x) > 0
      \]
      and there exists $\zeta = \zeta(\eps) > 0$ such that
      \[
      P\left(\inf_{\lVert x \rVert_E \leq B} \inf_{d(\om, \hmp(x)) > \eps} M_n(\om, x) - M_n(\hmp(x), x) \geq \zeta \right) \ra 1.
      \]
  \item[(U1)] %\red{
  With $B_\delta(m_\oplus(x))$ and $N(\epsilon, B_\delta(m_\oplus(x)), d)$ as in (P1),
  $$
  \int_0^1 \sup_{\lVert x \rVert_E \leq B}\sqrt{1 + \log N(\delta\epsilon, B_\delta(m_\oplus(x)), d)} \;d\epsilon = O(1) \quad \textrm{as} \quad \delta \rightarrow 0.
  $$%}
  \item[(U2)] There exist $\tau>0$, $D>0$, and $\alpha > 1$, possibly depending on $B$, such that
  \[
  \inf_{\lVert x \rVert_E \leq B} \inf_{d(\om, \mp(x)) < \tau} \left\{ M(\om, x) - M(\mp(x), x) - Dd(\om, \mp(x))^\alpha\right\} \geq 0.
  \]
%  \item[(U3)] There exist $\tau > 0$, $D > 0$ and $a > 0$, possibly depending on $B$, such that
%  \[
%  P\left(\inf_{\lVert x \rVert_E \leq B} \inf_{d(\om, \hmp(x)) < \tau} \left\{M_n(\om, x) - M_n(\hmp(x), x) - Dd(\om, \hmp(x))^a\right\} \geq 0\right) \ra 1.
%  \]
\end{itemize}
%Assumptions (U0)--(U2) are direct counterparts of (P0)--(P2).  Assumption (U3) defines the limiting curvature of the empirical process $M_n$, and is needed in Lemma~\ref{lma: con} below to establish that $\sup_{\lVert x \rVert_E \leq B}d(\mp(x), \hmp(x)) = o_p(1)$.

The following examples of classes of random objects correspond to the applications and simulations that will be discussed in Sections~\ref{sec: density}, \ref{sec: correlation} and \ref{sec: sphere}.

\begin{Example}
\label{exm: wass}
  Take $\Om$ to be the set of probability distributions $G$ on $\R$ such that $\int_\R x^2{\rm d}G(x) < \infty$, equipped with the Wasserstein metric $d_W$.  For two such distributions $G_1$ and $G_2$, the Wasserstein distance is given by
\[
d_W(G_1, G_2)^2 = \int_0^1 (G_1\inv(t) - G_2\inv(t))^2\;dt,
\]
where $G_1\inv$ and $G_2\inv$ are the quantile functions corresponding to $G_1$ and $G_2$, respectively.
\end{Example}

\begin{Example}
\label{exm: correlation}
  Take $\Om$ as the set of correlation matrices of a fixed dimension $r$, i.e. symmetric, positive semidefinite $r\times r$ matrices with unit diagonal, and equip $\Om$ with the Frobenius metric, $d_F$.
\end{Example}

\begin{Example}
  \label{exm: manifold}
  Let $\Om$ be a (bounded) Riemannian manifold of dimension $r$ and let $d$ be the geodesic distance implied by the Riemannian metric.
\end{Example}

Propositions~\ref{prop: wass} and \ref{prop: correlation} in the Appendix %al article \cp{pete:16} 
demonstrate that all of the above assumptions are satisfied for the random objects in Examples~\ref{exm: wass} and \ref{exm: correlation}, with $\beta = \alpha = 2$ in (P2) and (U2).  We note that Example~\ref{exm: wass} refers to objects in the Wasserstein space, a complex smooth manifold that is characterized by the Wasserstein geodesics \cp{taka:11} and thus provides an example of random objects on a manifold, for which we obtain consistent estimation of global and local Fr\'echet regression as demonstrated below. Example~\ref{exm: correlation} refers to  a convex space and, at first glance, it seems straightforward to implement local regression using kernel weights on such spaces. This is however not the case; a major difficulty is that global and local regression assign negative weights near the boundaries, where the boundary is a very substantial part of the domain especially in the global regression case. %and therefore cannot be implemented in a straightforward way. %; secondly, straightforward implementations work only for the Frobenius metric which often is suboptimal \cp{dryd:09,pigo:14,mull:16:2}. 

For Example~\ref{exm: manifold}, Proposition~\ref{prop: manifold} shows that (P1) and (U1) hold automatically and, if (P0) (respectively (U0)) holds, then (P2) (respectively (U2)) is equivalent to the Hessian on the tangent space at $\mp(x)$ being positive definite at $0$, and in this case we may take $\alpha = \beta = 2$. Thus, for manifolds, local curvatures do not influence the convergence rates below.  Uniqueness of Fr\'echet means for manifolds is challenging in general, but can be guaranteed under certain circumstances, for example restricting the support of the underlying distribution $F_Y$ \cp{afsa:11}.  Alternatively, one can consider Fr\'echet mean sets \cp{ziez:77}; see also the last paragraph in Section~\ref{sec: dis}.

The following two results demonstrate the consistency of our proposed estimators and also provide rates of convergence.  All proofs can be found in the Appendix.

\begin{Theorem}
\label{lma: con}
  Suppose (P0) holds and $\Om$ is bounded.  Then, for any fixed $x \in \R$, $d(\hmp(x), \mp(x)) = o_p(1)$.  For $B > 0$, if (U0) holds then \newline $\sup_{\lVert x \rVert_E \leq B} d(\hmp(x), \mp(x)) = o_p(1)$.
\end{Theorem}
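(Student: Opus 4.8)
My plan is to treat $\hmp(x)$ as an $M$-estimator and invoke the standard argmax consistency theorem (Corollary 3.2.3 of \ci{well:96}, the same source cited for (P0)), whose two nontrivial inputs are the well-separation of the population minimizer --- supplied directly by (P0), respectively (U0) --- and a uniform law of large numbers $\sup_{\om \in \Om}|M_n(\om, x) - M(\om, x)| = o_p(1)$. Since $\hmp(x)$ exactly minimizes $M_n(\cdot, x)$, once this uniform convergence is in hand the conclusion follows from the ``sandwich'' inequality $M(\hmp(x), x) - M(\mp(x), x) \le 2\sup_\om |M_n(\om,x) - M(\om,x)|$ combined with the separation bound. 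The entire difficulty is therefore concentrated in the uniform convergence, which must contend both with the plug-in estimation of $\mu, \Si$ inside the weights and with the generality of the (totally bounded) metric space $\Om$.

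To organize this I would first exploit the affine-in-$x$ structure of the weights. Writing $A_n(\om) = n\inv\sn d^2(Y_i, \om)$, $b_n(\om) = n\inv\sn (X_i - \mu) d^2(Y_i, \om)$ and $\hat{b}_n(\om) = n\inv\sn (X_i - \xbar) d^2(Y_i, \om)$, and setting $A(\om) = E[d^2(Y, \om)]$, $b(\om) = E[(X - \mu)d^2(Y, \om)]$, one has
\[
M_n(\om, x) = A_n(\om) + (x - \xbar)^T \hSi\inv \hat{b}_n(\om), \qquad M(\om, x) = A(\om) + (x - \mu)^T \Si\inv b(\om).
\]
This reduces matters to the uniform-in-$\om$ convergence of the coefficient functions $A_n \ra A$ and $\hat{b}_n \ra b$; the dependence on $x$ is then handled for free on any compact set $\lVert x \rVert_E \le B$, since the remaining factor $(x - \xbar)^T \hSi\inv$ is affine in $x$ with coefficients bounded uniformly over that set.

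The plug-in errors are routine: $\xbar \ra \mu$ and $\hSi \ra \Si$ by the strong law (finite second moments), whence $\hSi\inv \ra \Si\inv$ by continuity of inversion at the positive definite $\Si$; and since $\hat{b}_n(\om) = b_n(\om) - (\xbar - \mu)A_n(\om)$ with $A_n$ bounded by $\diam(\Om)^2$, we get $\sup_\om \lVert \hat{b}_n(\om) - b_n(\om) \rVert_E = o_p(1)$. The genuine obstacle is showing that $\sup_\om |A_n(\om) - A(\om)|$ and $\sup_\om \lVert b_n(\om) - b(\om) \rVert_E$ are $o_p(1)$, i.e.\ that the classes $\{d^2(\cdot, \om) : \om \in \Om\}$ and $\{(X_j - \mu_j)d^2(\cdot, \om) : \om \in \Om\}$ are Glivenko--Cantelli. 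I would establish this by bracketing: total boundedness of $\Om$ gives finite covering numbers, the map $\om \mapsto d^2(Y, \om)$ is Lipschitz with constant $2\diam(\Om)$ (so nearby centers yield $L^1(F)$-brackets), and the envelopes $\diam(\Om)^2$ and $\diam(\Om)^2 |X_j - \mu_j|$ are integrable because $X$ has a finite first moment. Assembling the three pieces yields $\sup_{\lVert x \rVert_E \le B}\sup_\om |M_n(\om, x) - M(\om, x)| = o_p(1)$.

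Finally I would combine the two statements. For fixed $x$, this uniform convergence together with (P0) feeds directly into the argmax theorem to give $d(\hmp(x), \mp(x)) = o_p(1)$. For the uniform-in-$x$ claim I would run the sandwich argument directly: on the event $\{\sup_{\lVert x \rVert_E \le B} d(\hmp(x), \mp(x)) > \eps\}$ there is a data-dependent $x^\ast$ at which the separation in (U0) forces $M(\hmp(x^\ast), x^\ast) - M(\mp(x^\ast), x^\ast) \ge \delta_\eps > 0$, while the sandwich inequality bounds this gap by $2\sup_{\lVert x \rVert_E \le B}\sup_\om |M_n - M|$, so uniform convergence drives the probability of the event to $0$. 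The chief technical hurdle throughout is the metric-space Glivenko--Cantelli step, as the argmax machinery and the plug-in bookkeeping are standard.
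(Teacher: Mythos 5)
Your proposal is correct, and while the skeleton (argmax consistency fed by a uniform law of large numbers over $\Om$, with the separation coming from (P0)/(U0)) matches the paper, the way you carry out both halves differs. For the uniform convergence $\sup_{\om}|M_n(\om,x)-M(\om,x)|=o_p(1)$, the paper does not use bracketing: it establishes weak convergence $M_n\rightsquigarrow M$ in $l^\infty(\Om)$ by checking marginal convergence (splitting off the plug-in error via $s_{in}-s_i=W_{0n}+W_{1n}^TX_i$) together with asymptotic equicontinuity from the bound $|M_n(\gamma_1)-M_n(\gamma_2)|=O_p(d(\gamma_1,\gamma_2))$, and then invokes vdVW 1.3.6; your affine decomposition $M_n(\om,x)=A_n(\om)+(x-\xbar)^T\hSi\inv\hat b_n(\om)$ plus a bracketing Glivenko--Cantelli argument for $\{d^2(\cdot,\om)\}$ and $\{(X_j-\mu_j)d^2(\cdot,\om)\}$ reaches the same conclusion more directly, and it cleanly separates the $x$-dependence so that uniformity over $\lVert x\rVert_E\le B$ comes for free. (Both arguments need $\Om$ totally bounded, as announced at the start of Section~3, not merely bounded as in the theorem statement; you note this correctly.) The more substantive divergence is in the uniform-in-$x$ claim: the paper treats $Z_n(x)=d(\hmp(x),\mp(x))$ as a process, proves uniform continuity of $\mp(\cdot)$ and a stochastic equicontinuity bound for $\hmp(\cdot)$ --- the latter is precisely where the second, $\zeta$-condition of (U0) is used --- and then applies vdVW 1.5.4 again. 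Your uniformized sandwich argument at a data-dependent $x^\ast$ bypasses all of this and uses only the first (population separation) part of (U0), so it is both shorter and operates under a nominally weaker hypothesis; what it gives up is the continuity/equicontinuity information about $x\mapsto\mp(x)$ and $x\mapsto\hmp(x)$, which the paper's route produces as a by-product. Both proofs are valid.
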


\begin{Theorem}
\label{thm: rate}
  Suppose that, for a fixed $x \in \R^p$, (P0)--(P2) hold.  Then
  $$
  d(\hmp(x), \mp(x)) = O_p\left(n^{-\frac{1}{2(\beta - 1)}}\right).
  $$
  %\red{
  Furthermore, for a given $B > 0$, if (U0)--(U2) hold,
  $$
  \sup_{\lVert x \rVert_E \leq B} d(\hmp(x), \mp(x)) = O_p\left(n^{-\frac{1}{2(\alpha' - 1)}}\right)
  $$
  for any $\alpha' > \alpha.$%}
\end{Theorem}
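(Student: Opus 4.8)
The plan is to treat $\hmp(x)$ as an $M$-estimator and apply the rate-of-convergence machinery for such estimators (Theorem 3.2.5 of \ci{well:96}), combining the curvature condition (P2) with a modulus-of-continuity bound for the centered process $M_n - M$. Throughout, recall that $\Om$ is totally bounded, so $\diam(\Om) < \infty$. First I would localize: by Theorem~\ref{lma: con}, $d(\hmp(x), \mp(x)) = o_p(1)$, so with probability tending to one $\hmp(x)$ lies in the neighborhood $\{d(\om,\mp(x))<\eta\}$ on which (P2) applies. On this event, since $\hmp(x)$ minimizes $M_n(\cdot, x)$, the basic inequality $M_n(\hmp(x),x)\le M_n(\mp(x),x)$ together with (P2) yields
$$ C\,d(\hmp(x),\mp(x))^\beta \le M(\hmp(x),x) - M(\mp(x),x) \le (M_n-M)(\mp(x),x) - (M_n-M)(\hmp(x),x). $$

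The heart of the argument is to control the right-hand side uniformly over small balls, i.e. to show
$$ E^*\!\!\sup_{d(\om,\mp(x))<\delta}\big|(M_n-M)(\om,x) - (M_n-M)(\mp(x),x)\big| \;\lesssim\; \frac{\delta}{\sqrt n}. $$
I would split $M_n - M = (M_n - \tilde M_n) + (\tilde M_n - M)$, where $\tilde M_n(\om,x) = n^{-1}\sum_i s(X_i,x)\,d^2(Y_i,\om)$ uses the true weights. For the empirical-process term $\tilde M_n - M$, the key observation is the Lipschitz bound $|d^2(y,\om)-d^2(y,\mp(x))|\le 2\diam(\Om)\,d(\om,\mp(x))$, so the class $\{(z,y)\mapsto s(z,x)[d^2(y,\om)-d^2(y,\mp(x))]:d(\om,\mp(x))<\delta\}$ has an envelope whose $L^2$-norm is $\lesssim\delta\,\|s(X,x)\|_{L^2}$, finite because $E[s(X,x)^2]=1+(x-\mu)^T\Si^{-1}(x-\mu)<\infty$. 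A standard local maximal inequality (cf. Section 3.4 of \ci{well:96}) then bounds the centered process by $\delta/\sqrt n$ times an entropy integral, and the Lipschitz bound lets me replace the covering numbers of this function class by those of $B_\delta(\mp(x))$, so that (P1) makes the entropy integral $O(1)$. For the plug-in term $M_n - \tilde M_n$, I would factor out $|d^2(Y_i,\om)-d^2(Y_i,\mp(x))|\le 2\diam(\Om)\delta$, reducing matters to $n^{-1}\sum_i|s_{in}(x)-s(X_i,x)|$; expanding $s_{in}-s$ and using $\xbar-\mu=O_p(n^{-1/2})$ and $\hSi^{-1}-\Si^{-1}=O_p(n^{-1/2})$ shows this is $O_p(n^{-1/2})$, so the plug-in contribution to the modulus is also $O_p(\delta/\sqrt n)$.

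With the modulus bound in hand, the rate follows from the standard peeling argument: on shells $\{2^{j-1}<r_n\, d(\om,\mp(x))\le 2^j\}$ the curvature forces the left side of the basic inequality to exceed $C\,2^{(j-1)\beta}r_n^{-\beta}$, while the modulus bound and Markov's inequality bound the probability that the right side is that large by a multiple of $2^{j-(j-1)\beta}r_n^{\beta-1}n^{-1/2}$; choosing $r_n = n^{1/(2(\beta-1))}$ makes the resulting series summable (using $\beta>1$) and its tail tend to zero, which gives $d(\hmp(x),\mp(x))=O_p(n^{-1/(2(\beta-1))})$. The main obstacle is the modulus-of-continuity bound: correctly producing the envelope of order $\delta$ from the Lipschitz property of $d^2$, converting (P1) into a finite entropy integral, and separately disposing of the estimated-weight error.

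For the uniform statement I would repeat the three steps with (U0)--(U2), the uniform consistency from Theorem~\ref{lma: con} and the separation in (U0) providing localization, (U2) the curvature with exponent $\alpha$, and (U1) the uniform entropy. The new difficulty, which I expect to be the binding one here, is that the supremum now runs over the continuum $\{\lVert x\rVert_E\le B\}$: the relevant function class is indexed by both $\om$ and $x$, and taking $\sup_x$ degrades the maximal inequality, so that the uniform modulus can only be bounded by $\delta^{1-\xi}/\sqrt n$ for each $\xi>0$ (with a constant depending on $\xi$) rather than by the sharp $\delta/\sqrt n$. Running the peeling argument with curvature exponent $\alpha$ and modulus exponent $1-\xi$ then yields the rate $n^{-1/(2(\alpha-1+\xi))}$; setting $\alpha'=\alpha+\xi$ and letting $\xi\downarrow0$ gives $\sup_{\lVert x\rVert_E\le B}d(\hmp(x),\mp(x))=O_p(n^{-1/(2(\alpha'-1))})$ for every $\alpha'>\alpha$, which is precisely the vanishingly small loss in the exponent that the theorem permits.
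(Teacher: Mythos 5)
Your proposal is correct and follows essentially the same route as the paper's proof: the same split of $M_n-M$ into the estimated-weight (plug-in) error and the centered empirical process with true weights $s(X_i,x)$, the same use of the Lipschitz bound on $d^2$ to produce an envelope of order $\delta$ so that (P1) yields an $O(\delta/\sqrt n)$ modulus, and the same peeling argument driven by (P2). Your account of the uniform case also matches the paper, where the doubly-indexed class over $(x,\om)$ produces a bracketing entropy integral of order $-\log\delta$, hence a modulus of order $\delta^{1-\xi}/\sqrt n$ and exactly the $\alpha'>\alpha$ loss you describe.
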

%We remark that, if $\beta = 2$ in (P2) and $\alpha = 2$ in (U2), which is the case when $\Om$ is a Euclidean space, this result gives the usual parametric convergence rate of $n^{-1/2}$. 
In general, the rate of convergence is determined by the local geometry near the minimum as quantified in (P2) and (U2).  The proof of the pointwise result follows along the lines of Theorem 3.2.5 in \ci{well:96} which deals with $M$-estimators, where some additional considerations are needed to deal with the necessary estimation of the mean and covariance of $X$.  The uniform result is more difficult, as an uncountable number of $M$-estimators are considered simultaneously and no parametric form of the regression function is available.  When  $\Om$ has a smooth structure, e.g.,  the Wasserstein space in Example~\ref{exm: wass} or a smooth Riemannian manifold, one can conceivably also obtain a limiting distribution.  We demonstrate this for the case where $\Om$ is a Hilbert space in Section \ref{sec: hilbert}.

\section{Local Fr\'echet Regression}
\label{sec: local}

As the success of nonparametric regression methods over the last decades has shown, there is often the need for local rather than global fitting of regression functions. Local regression is more flexible but on the other hand requires choosing a tuning parameter that balances bias and variance. As far as we know, to date, local estimation of \eqref{eq: frechet_regression} for responses in general metric spaces has been exclusively done with the Nadaraya-Watson estimator \cp{davi:07,hein:09,stei:09,stei:10}
\begin{equation}
\label{eq: NWS}
\hmp^{\textrm{NW}}(x) = \argmin_{\om \in \Om} \frac{1}{n}\sum_{i = 1}^n K_h(X_i - x)d^2(Y_i, \om),
\end{equation}
where $K$ is a smoothing kernel that corresponds to a probability density and $h$ is a bandwidth, with $K_h(\cdot) = h\inv K(\cdot/h).$  In this section, the proposed Fr\'echet regression analysis is extended from the global setting, as described in the previous sections, to a local version.  The idea is to adopt the concepts of local linear regression, which has been established for real-valued responses, and then to extend them to the case where responses are random objects, in analogy to the developments in Section~\ref{ss: frechet} for global Fr\'echet regression.  Thus, we develop a novel local version of smoothing in general metric spaces which goes beyond the Nadaraya-Watson smoother \eqref{eq: NWS}.  As is the case for Euclidean data, this local Fr\'echet regression proves to be superior to Nadaraya-Watson smoothing, especially near the boundaries, as demonstrated in the experiments in Sections~\ref{ss: mortality} and \ref{sec: sphere}. Moreover, our analysis of these estimators separates bias and stochastic variation of the corresponding estimators. 

For ease of representation, we consider here the case of a scalar predictor $X \in \R^p$, where  $p = 1$; the local method can also be developed for any $p$ with  $p>1$. The target is again (\ref{eq: frechet_regression}), where we make no structural assumptions on $\mp$.  Consider the preliminary case $\Om = \R$, and again write $m = \mp$.  In this case, the local linear estimate \cp{fan:96} of $m(x)$ is $\hl(x) = \hbn$, where
\[
(\hbn, \hbo) = \argmin_{\bn, \bo}\frac{1}{n}\sum_{i = 1}^n K_h(X_i - x)(Y_i - \bn - \bo(X_i - x))^2.
\]
In this sense, the estimates $\hbn$ and $\hbo$ can be viewed as $M$-estimators of
\begin{equation}
\label{eq: local_target}
(\sbn, \sbo) = \argmin_{\bn, \bo}\int K_h(z - x)\left[\int y {\rm d}F_{Y|X}(z, y) - (\bn + \bo(z - x))\right]^2{\rm d}F_X(z) .
\end{equation}

Defining $\mu_j = E\left[K_h(X - x)(X-x)^j\right]$, $r_j = E\left[K_h(X - x)(X-x)^jY\right]$ and $\sigma_0^2 = \mu_0\mu_2 - \mu_1^2$, the solutions to (\ref{eq: local_target}) are
%\begin{equation}
%\label{eq: local_sol}
\[
\sbn = \sigma_0^{-2}(\mu_2r_0 - \mu_1r_1),\quad \sbo = \sigma_0^{-2}(\mu_0r_1 - \mu_1r_0).
\]
%\end{equation}
This means that $\hl(x) = \hat{\beta}_0$ can be viewed as an estimator of the intermediate target
\begin{align}
\tl(x) = \sbn &= \frac{\mu_2r_0 - \mu_1r_1}{\sigma_0^2} = \frac{1}{\sigma_0^2}\int yK_h(z - x)\left[\mu_2 - \mu_1(z-x)\right]{\rm d}F(z,y) \label{eq: local_sol2} \\
&= E[s(X, x, h)Y] \nonumber
\end{align}
for the weight function
%\begin{equation}
%\label{eq: local_weight}
\[
s(z, x, h) = \frac{1}{\sigma_0^2}\left\{K_h(z - x)\left[\mu_2 - \mu_1(z - x)\right]\right\}.
\]
%\end{equation}
Observing that $\int s(z, x, h) \dFz \equiv 1,$ it follows that $\tl(x)$ in (\ref{eq: local_sol2}) corresponds to a localized Fr\'echet mean,
\begin{equation}
\tl(x)=\argmin_{y \in \R}E \left[s(X, x, h)(Y - y)^2\right]. \label{eq: local_smooth}
\end{equation}
The minimizer $\tl(x)$  in (\ref{eq: local_smooth}) can be viewed as a smoothed version of the true regression function, with the bias $m(x) - \tl(x) = o(1)$ as $h \ra 0$.  Under mild assumptions on the kernel and distribution $F$, this bias is $O(h^2)$, which follows from a Taylor expansion argument.

Now we are in a position to define the local regression concept for random objects $Y \in \Om$, in analogy to the global Fr\'echet regression.   Specifically, (\ref{eq: local_smooth}) can be generalized by defining $\tL(\om) = E\left[s(X, x, h)d^2(Y, \om)\right]$, where the dependency on $n$ is through the bandwidth sequence $h=h_n$, and then setting
%\begin{equation}
%\label{eq: local_frechet}
\[
\tlp(x) = \argmin_{\om \in \Om} \tL(\om).
\]
%\end{equation}
In contrast to Euclidean spaces or Riemannian manifolds \cp{yuan:12}, no version of a Taylor expansion argument is available on general metric spaces $\Om$.  So one can ask why this weighted Fr\'echet mean provides a good approximation to the conditional mean in (\ref{eq: frechet_regression}). It turns out that this  is due to the fact (shown in the proof of Theorem~\ref{thm: local_bias} below) that
\[
\left[\int s(z, x, h){\rm d}F_{X|Y}(z,y)\right]\dy = {\rm d}F_{Y|X}(x, y) + O(h^2),
\]
so that minimizing $\tL$ is approximately the same as minimizing the conditional Fr\'echet function $M_\oplus(\cdot, x)$.

The target $\tlp(x)$ can be estimated by using preliminary estimates $\hmu_j = n\inv\sn K_h(X_i - x)(X_i-x)^j$, $\hat{\sigma}_0^2 = \hmu_0\hmu_2 - \hmu_1^2$, and the empirical weights
\[
s_{in}(x, h) = \frac{1}{\hat{\sigma}_0^2}K_h(X_i - x)\left[\hat{\mu}_2 - \hat{\mu}_1(X_i-x)\right].
\]
Then, setting $\hL(\om) = n\inv\sn s_{in}(x, h)d^2(Y_i, \om)$, the local Fr\'echet regression estimate is
%\begin{equation}
%\label{local_frechet_est}
\begin{equation}
\label{eq: local_frechet_est}
\hlp(x) = \argmin_{\om \in \Om}\hL(\om).
\end{equation}
%\end{equation}

While this local estimation technique is developed here for general metric space data, it is of interest to compare it to other local estimators that have been previously considered for spaces with additional structure, specifically the intrinsic local polynomial (ILPR) estimator for manifold data proposed in \ci{yuan:12}, where  covariance matrices as objects are  regressed against scalar predictors.  Whereas the ILPR estimator requires various technical steps involving exponential, logarithmic and parallel transport maps on the manifold, one advantage of the methodology proposed here is its simplicity, only requiring distances between data objects.  In terms of computation on manifolds, the current method also enjoys the distinct advantage of requiring optimization only for a single object, unlike the ILPR for which one has to fit both intercept and derivative terms. It is of course also much more general, providing consistent estimators in unstructured metric spaces.  Furthermore, the function to be minimized is merely a weighted least squares problem, potentially with negative weights.  Thus, any metric space for which a Nadaraya-Watson smoother \cp{hein:09} is computationally feasible, or any manifold for which the ILPR can be computed, is also feasible for both local and global Fr\'echet regression.  In the manifold case, expressions for the Riemannian gradient and Hessian are available for a variety of complex manifolds \cp{ferr:13:2}, which can be used for Newton-type algorithms, possibly in conjunction with stochastic optimization techniques, such as the annealing algorithm of \ci{yuan:12}.

For  a concrete comparison of local Fr\'echet regression with the ILPR, take $\Om$ to be the space of covariance matrices with $d$ being the Log-Euclidean metric, that is, $d(\om_1,\om_2) = d_F(\rm{Log} \,\om_1, \rm{Log}\,  \om_2)$, where $d_F$ is the Frobenius metric and $\rm{Log}$ is the inverse of the matrix exponential \rm{Exp} \cp{arsi:07}.   In this case, both the ILPR and local Fr\'echet regression estimates can be computed analytically.  %Let $\rm{Exp}$ be the matrix exponential map.
For a sample $(X_i, Y_i)$, with $Y_i$ a positive definite covariance matrix, both methods yield the estimate
\[
\hmp(x) = \rm{Exp}\left(\frac{\hmu_2\hat{r}_0 - \hmu_2\hat{r}_1}{\hat{\sigma}_0^2}\right),
\]
where $\hat{r}_j = n\inv\sum_{i = 1}^n K_h(X_i-x)(X_i-x)^j{\rm{Log}}(Y_i)$. That these two methods coincide is not altogether surprising due to the metric being the Euclidean metric on transformed matrices.  However, it shows that in this situation local Fr\'echet regression gives a sensible and intuitive estimate which coincides with the previously established manifold-based estimator.

Returning  to theory, in order to obtain the rate of convergence for the quantity $d(\mp(x), \hlp(x))$, we need to quantify the convergence of the bias term $d(\mp(x), \tlp(x))$ and the stochastic term $d(\tlp(x), \hlp(x))$.  This requires the assumptions below.  Recall that \mbox{$M_\oplus(\om, x) = E(d^2(Y, \om)|X = x)$}.  For simplicity, we assume that the marginal density $f$ of $X$, within the joint distribution $F$, has unbounded support, and consider points $x \in \R$ for which $f(x) > 0$. We need the following assumptions.
\begin{itemize}
  \item[(K0)] The kernel $K$ is a probability density function, symmetric around zero. Furthermore, defining $K_{kj} = \int_\R K^k(u)u^j \; du$, $|K_{14}|$ and $|K_{26}|$ are both finite.
  \item[(L0)]  The object $\mp(x)$ exists and is unique.  For all $n$, $\tlp(x)$ and $\hlp(x)$ exist and are unique, the latter almost surely.  Additionally, for any $\eps > 0$,
  \begin{align*}
  \inf_{d(\om, \mp(x)) > \eps} \left\{M_\oplus(\om, x) - M_\oplus(\mp(x), x)\right\}>0, \\
  \liminf_n \inf_{d(\om, \tlp(x)) > \eps}\left\{\tL(\om) - \tL(\tlp(x))\right\}>0.
  \end{align*}
  \item[(L1)]  The marginal density $f$ of $X$, as well as the conditional densities $g_y$ of \mbox{$X|Y = y$}, exist and are twice continuously differentiable, the latter for all $y \in \Om$, and $\sup_{x,y} |g_y''(x)| < \infty$.  Additionally, for any open $U\subset \Om$, $\int_U {\rm d}F_{Y|X}(x, y)$ is continuous as a function of $x$.
  \item[(L2)]  There exists $\eta_1 > 0$, $C_1 > 0$ and $\beta_1 > 1$ such that
  \[
  M_\oplus(\om, x) - M_\oplus(\mp(x), x) \geq C_1d(\om, \mp(x))^{\beta_1},
  \]
  provided $d(\om, \mp(x)) < \eta_1$.
  \item[(L3)]  There exists $\eta_2 > 0$, $C_2 > 0$ and $\beta_2 > 1$ such that
  \[
  \liminf_n \left[\tL(\om) - \tL(\tlp(x))\right] \geq C_2d(\om, \tlp(x))^{\beta_2},
  \]
  provided $d(\om, \tlp(x)) < \eta_2$.
\end{itemize}

Assumptions (K0) and (L1) are common in local regression estimation and imply that the smoothed marginal distribution $$\left(\int s(z, x, h) {\rm d}F_{X|Y}(z|y)\right)\dy$$ converges to ${\rm d}F_{Y|X}(x, y)$ as $h \ra 0$, while (L2) and (L3) provide the rate for the bias and stochastic terms, respectively.  While (L1) is a distributional assumption, (L2) and (L3)  can be shown to hold for Examples~\ref{exm: wass}--\ref{exm: manifold} in Section~\ref{sec: theory}, using arguments similar to those in Propositions~\ref{prop: wass}--\ref{prop: manifold} in the Appendix. In these cases, it is easy to verify that $C_j=1$, $\beta_j = 2$ and $\eta_j$ arbitrary, $j = 1,2$, are admissible in (L2) and (L3).  We now state our main results for local Fr\'echet regression, where the first result is  for the bias, the second for  the stochastic deviation and the corollary  combines these results to obtain an overall rate of convergence. 

\begin{Theorem}
  \label{thm: local_bias}
  If (K0), (L0), (L1), (L2) and (P1) hold, then \[d(\mp(x), \tlp(x)) = O(h^{2/(\beta_1 - 1)})\] as $h = h_n \ra 0$.
\end{Theorem}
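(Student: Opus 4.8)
The plan is to treat the bias purely as a deterministic comparison of two $M$-estimation problems: both $\mp(x)$ and $\tlp(x)$ minimize objective functions over the same space $\Om$, namely $M_\oplus(\cdot, x)$ and $\tL$, and these two functions become uniformly close as $h \ra 0$. Writing $m := \mp(x)$ for brevity, observe that $\tL(\om) = \int d^2(y,\om)\left[\int s(z,x,h)\,{\rm d}F_{X|Y}(z,y)\right]\dy$ while $M_\oplus(\om,x) = \int d^2(y,\om)\,{\rm d}F_{Y|X}(x,y)$, so the discrepancy is controlled entirely by the signed measure $\rho_h({\rm d}y) := \left[\int s(z,x,h)\,{\rm d}F_{X|Y}(z,y)\right]\dy - {\rm d}F_{Y|X}(x,y)$. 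The first and central task is therefore to establish the approximation announced in the text, $\left[\int s(z, x, h){\rm d}F_{X|Y}(z,y)\right]\dy = {\rm d}F_{Y|X}(x, y) + O(h^2)$, \emph{uniformly in} $y$, so that $\rho_h$ has total variation $O(h^2)$. Setting $\nu_j = \int K_h(z-x)(z-x)^j g_y(z)\,{\rm d}z$ in analogy with $\mu_j$, one has $\int s(z,x,h)\,{\rm d}F_{X|Y}(z,y) = (\mu_2\nu_0 - \mu_1\nu_1)/\sigma_0^2$; expanding each moment by Taylor's theorem, using the symmetry and moment finiteness of $K$ from (K0) and the uniform bound $\sup_{x,y}|g_y''(x)| < \infty$ from (L1), collapses this ratio to $g_y(x)/f(x) + O(h^2)$ with remainder uniform in $y$, and Bayes' rule identifies $(g_y(x)/f(x))\dy$ with ${\rm d}F_{Y|X}(x,y)$.

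The crucial refinement is then to bound not $G(\om) := \tL(\om) - M_\oplus(\om,x) = \int d^2(y,\om)\,\rho_h({\rm d}y)$ itself, but its \emph{increments} about $m$. Since $|d^2(y,\om) - d^2(y,m)| = |d(y,\om) - d(y,m)|\,(d(y,\om)+d(y,m)) \leq 2\,\diam(\Om)\,d(\om,m)$ by the reverse triangle inequality, integrating against $|\rho_h|$ gives $|G(\om) - G(m)| = O(h^2)\,d(\om,m)$. This extra factor $d(\om,m)$, rather than the cruder $|G(\om)| = O(h^2)$, is exactly what sharpens the final exponent from $2/\beta_1$ to $2/(\beta_1 - 1)$. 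Here (P1), as in Theorem~\ref{thm: rate}, supplies the local geometric control that makes this increment bound usable on a neighborhood of $m$.

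The remaining steps combine consistency with curvature. Consistency of the bias target, $d(\tlp(x), m) \ra 0$, follows from the standard reasoning of Chapter 3.2 of \ci{well:96}: the separation condition in (L0) makes $m$ a well-separated minimizer of $M_\oplus(\cdot,x)$, while $\tL(\tlp(x)) \leq \tL(m)$ together with the uniform smallness of $G$ forces $M_\oplus(\tlp(x),x) - M_\oplus(m,x) \ra 0$, precluding any limit point away from $m$. Once $h$ is small enough that $d(\tlp(x),m) < \eta_1$, assumption (L2) yields $C_1\,d(\tlp(x),m)^{\beta_1} \leq M_\oplus(\tlp(x),x) - M_\oplus(m,x)$, while the fact that $\tlp(x)$ minimizes $\tL$ gives
\[
M_\oplus(\tlp(x),x) - M_\oplus(m,x) = -\big(G(\tlp(x)) - G(m)\big) + \big(\tL(\tlp(x)) - \tL(m)\big) \leq |G(\tlp(x)) - G(m)|.
\]
Chaining these with the increment bound produces $C_1\,d(\tlp(x),m)^{\beta_1} \leq O(h^2)\,d(\tlp(x),m)$; dividing by $d(\tlp(x),m)$ gives $d(\tlp(x),m)^{\beta_1 - 1} = O(h^2)$, i.e. $d(\mp(x),\tlp(x)) = O(h^{2/(\beta_1-1)})$.

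The main obstacle is the uniform density approximation in the first step. The weight $s$ is a ratio whose denominator $\sigma_0^2 = \mu_0\mu_2 - \mu_1^2$ is itself of order $h^2$, so leading-order cancellations in numerator and denominator must be tracked with care; the genuinely delicate point is not the pointwise-in-$y$ expansion but the \emph{uniformity} of the $O(h^2)$ remainder over all $y$, since only a uniform remainder yields the total-variation bound on $\rho_h$ that licenses integrating $d^2(y,\cdot)$ against it. This is precisely where the finiteness of the kernel moments in (K0) and the uniform smoothness $\sup_{x,y}|g_y''(x)| < \infty$ in (L1) do the real work; by contrast, the subsequent $M$-estimation bookkeeping, once the increment bound and curvature (L2) are in hand, is comparatively routine.
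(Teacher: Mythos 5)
Your proposal is correct and follows the paper's own strategy for its substantive first half: the uniform-in-$y$ expansion of $\int s(z,x,h)\,{\rm d}F_{X|Y}(z|y)$ to $g_y(x)/f(x)+O(h^2)$ via Taylor expansions of the kernel moments under (K0) and (L1), and the Bayes identity ${\rm d}F_{Y|X}(x,y)/\dy = g_y(x)/f(x)$, are exactly the paper's Lemma on $\mu_j,\tau_j$ and the opening of its proof. Where you diverge is the endgame: the paper recycles the peeling/shelling argument from its Theorem~\ref{thm: rate} (shells $S_{j,n}$ with $r_h = h^{-\beta_1/(\beta_1-1)}$ and a geometric series), whereas you observe that the quantity being bounded is purely deterministic, extract the increment bound $|G(\om)-G(\mp(x))| \leq 2\,\diam(\Om)\,d(\om,\mp(x))\,\lVert\rho_h\rVert_{TV} = O(h^2)\,d(\om,\mp(x))$ directly from the reverse triangle inequality, and combine it in one line with (L2) and the minimizing property of $\tlp(x)$ to get $d(\tlp(x),\mp(x))^{\beta_1-1} = O(h^2)$. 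This is the same key mechanism the paper's shells encode implicitly (it is precisely the extra factor of $d(\om,\mp(x))$ that upgrades the exponent from $2/\beta_1$ to $2/(\beta_1-1)$, as you note), but your version is cleaner and makes the role of each hypothesis transparent; the only inaccuracy is your attribution of a role to (P1), which in fact does no work in this deterministic bias bound (it enters the paper's hypotheses only because the proof is phrased as a replay of the empirical-process peeling), and, as in the paper, you are tacitly using boundedness of $\Om$ when invoking $\diam(\Om)$ and the total-variation control of $\rho_h$.
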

\begin{Theorem}
  \label{thm: local_frechet}
  If (K0), (L0), (L3) and (P1) hold, and if $h \ra 0$ and $nh \ra \infty$, then
  \[
  d(\tlp(x), \hlp(x)) = O_p\left[ (nh)^{-\frac{1}{2(\beta_2 - 1)}}\right].
  \] \end{Theorem}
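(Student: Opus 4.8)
The plan is to treat $\hlp(x)$ as an $M$-estimator of $\tlp(x)$ and invoke the rate-of-convergence machinery for $M$-estimators (Theorem~3.2.5 in \ci{well:96}), generalized to the curvature exponent $\beta_2$ supplied by (L3), exactly as in the proof of Theorem~\ref{thm: rate}. Two ingredients are needed: the curvature bound $\tL(\om) - \tL(\tlp(x)) \geq C_2 d(\om, \tlp(x))^{\beta_2}$ near $\tlp(x)$, which is (L3), and a bound on the modulus of continuity of the centered process $\hL - \tL$ over shrinking balls around $\tlp(x)$. First, (L0) together with an argmin/continuous-mapping argument as in Theorem~\ref{lma: con} gives consistency, $d(\hlp(x), \tlp(x)) = o_p(1)$, so it suffices to control the process on a ball $B_\delta(\tlp(x))$ with $\delta \ra 0$; and since Theorem~\ref{thm: local_bias} places $\tlp(x)$ close to $\mp(x)$, the covering numbers of such balls are controlled by (P1).

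The key difficulty, absent in the standard setup, is that the empirical weights $s_{in}(x, h)$ depend on the estimated moments $\hmu_j, \hat{\sigma}_0^2$. I would isolate this by introducing the fixed-weight empirical process $\cL(\om) = n\inv \sn s(X_i, x, h) d^2(Y_i, \om)$ and writing
\[
(\hL - \tL)(\om) - (\hL - \tL)(\tlp(x)) = \big[(\cL - \tL)(\om) - (\cL - \tL)(\tlp(x))\big] + \big[(\hL - \cL)(\om) - (\hL - \cL)(\tlp(x))\big].
\]
The second, weight-estimation, bracket equals $n\inv \sn [s_{in}(x,h) - s(X_i, x, h)][d^2(Y_i, \om) - d^2(Y_i, \tlp(x))]$. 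Using $|d^2(Y_i, \om) - d^2(Y_i, \tlp(x))| \leq 2\diam(\Om)\, d(\om, \tlp(x))$, the moment rates $\hmu_j - \mu_j = O_p((nh)^{-1/2} h^j)$ (unbiased kernel averages with $\Var(\hmu_j) = O(n\inv h^{2j-1})$), and the scalings $\mu_1, \mu_2, \sigma_0^2 = O(h^2)$, one checks that the relative weight error is $O_p((nh)^{-1/2})$, whence $n\inv \sn |s_{in}(x,h) - s(X_i, x, h)| = O_p((nh)^{-1/2})$. This bracket is therefore $O_p\big((nh)^{-1/2}\big)\, d(\om, \tlp(x))$ uniformly over $B_\delta(\tlp(x))$.

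For the first bracket I would apply a maximal inequality (e.g.\ Theorem~2.14.2 in \ci{well:96}) to the class $\{s(\cdot, x, h)[d^2(\cdot, \om) - d^2(\cdot, \tlp(x))] : d(\om, \tlp(x)) < \delta\}$. Its envelope has $L^2(F)$-norm of order $\delta\,(E[s^2(X, x, h)])^{1/2} = O(\delta h^{-1/2})$, since $s \approx K_h/f(x)$ gives $E[s^2(X, x, h)] = O(h\inv)$, while the entropy integral multiplying it is $O(1)$ by (P1). Hence this bracket has modulus $O_p(\delta/\sqrt{nh})$, of the same order as the weight-estimation drift. Collecting both terms, the modulus of $\hL - \tL$ over $B_\delta(\tlp(x))$ equals $\phi_n(\delta)/\sqrt{n}$ with $\phi_n(\delta) \asymp \delta/\sqrt{h}$; since $\delta \mapsto \phi_n(\delta)/\delta^{\beta_2}$ is decreasing for $\beta_2 > 1$, the rate $r_n$ solving $r_n^{\beta_2 - 1} \asymp \sqrt{nh}$ yields $d(\tlp(x), \hlp(x)) = O_p(r_n\inv) = O_p\big((nh)^{-1/(2(\beta_2 - 1))}\big)$, as claimed.

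The main obstacle I anticipate is making the weight-estimation step rigorous and uniform: one must show that replacing $(\mu_j, \sigma_0^2)$ by their estimates perturbs the minimization by a term that is genuinely $O_p((nh)^{-1/2})\, d(\om, \tlp(x))$ over the ball, which requires careful bookkeeping of the $h$-powers in both the moments and their fluctuations, and one must verify that this drift, being \emph{linear} in $d$, is absorbed through the curvature exponent $\beta_2$ so as to produce the stated $(nh)^{-1/(2(\beta_2 - 1))}$ inflation rather than a worse rate. The transfer of (P1) from balls about $\mp(x)$ to balls about $\tlp(x)$, justified via Theorem~\ref{thm: local_bias}, is a secondary technical point.
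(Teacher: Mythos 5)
Your proposal is correct and follows essentially the same route as the paper's proof: the same split of $\hL-\tL$ into a weight-estimation drift (controlled via $W_{0n}=O_p((nh)^{-1/2})$, $W_{1n}=O_p((nh^3)^{-1/2})$ and the Lipschitz bound on $d^2(Y_i,\cdot)$) plus a centered empirical process (controlled via the envelope with $E[s^2(X,x,h)]=O(h^{-1})$, the entropy integral from (P1), and Theorem 2.14.2 of van der Vaart and Wellner), followed by the curvature condition (L3) and the standard rate argument with modulus $\phi_n(\delta)\asymp\delta/\sqrt{h}$. The paper writes out the peeling over shells explicitly rather than citing Theorem 3.2.5 as a black box, but the substance is identical.
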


  \begin{Corollary} \label{cor:cor2} Under the assumptions of Theorem~\ref{thm: local_bias} and Theorem~\ref{thm: local_frechet}, among bandwidth sequences
  $h = n^{-\gamma}$, the optimal sequence is
   obtained for  \mbox{$\gamma^* = (\beta_1 - 1)/(4\beta_2 + \beta_1 - 5)$} and yields the rate  $$d(\mp(x), \hlp(x)) = O_p\left(n^{-2/(\beta_1+4\beta_2-5)}\right).$$
\end{Corollary}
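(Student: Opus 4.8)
The plan is to combine the bias bound of Theorem~\ref{thm: local_bias} with the stochastic bound of Theorem~\ref{thm: local_frechet} via the triangle inequality and then carry out the standard bias--variance balancing over the bandwidth. Since $d$ is a metric,
\[
d(\mp(x), \hlp(x)) \leq d(\mp(x), \tlp(x)) + d(\tlp(x), \hlp(x)),
\]
and substituting the two established rates gives
\[
d(\mp(x), \hlp(x)) = O_p\left(h^{2/(\beta_1 - 1)} + (nh)^{-1/(2(\beta_2 - 1))}\right),
\]
where the deterministic $O(\cdot)$ bias term is absorbed into the $O_p(\cdot)$ statement by retaining the larger of the two orders.

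Next I would restrict attention to $h = n^{-\gamma}$ and rewrite each term as a power of $n$. Using $nh = n^{1-\gamma}$, the bias term becomes $n^{-a(\gamma)}$ with $a(\gamma) = 2\gamma/(\beta_1 - 1)$, and the stochastic term becomes $n^{-b(\gamma)}$ with $b(\gamma) = (1-\gamma)/(2(\beta_2 - 1))$. The overall order is governed by the slower of the two, i.e.\ by the smaller of the exponents, so the achieved rate has exponent $\min\{a(\gamma), b(\gamma)\}$. Because $a$ is strictly increasing in $\gamma$ while $b$ is strictly decreasing, the maximum of this minimum over admissible $\gamma$ is attained exactly at the crossing point $a(\gamma) = b(\gamma)$; this is the usual bias--variance tradeoff and is what makes the equalizing bandwidth optimal.

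I would then solve $a(\gamma) = b(\gamma)$, which is the single linear equation $4\gamma(\beta_2 - 1) = (1-\gamma)(\beta_1 - 1)$, yielding $\gamma^* = (\beta_1 - 1)/(4\beta_2 + \beta_1 - 5)$. Substituting $\gamma^*$ back into either exponent gives the common value $2/(\beta_1 + 4\beta_2 - 5)$, hence the claimed rate $d(\mp(x), \hlp(x)) = O_p\left(n^{-2/(\beta_1 + 4\beta_2 - 5)}\right)$. To close the argument I would verify admissibility: the hypotheses $\beta_1 > 1$ and $\beta_2 > 1$ give $\beta_1 - 1 > 0$ and $4\beta_2 + \beta_1 - 5 = 4(\beta_2 - 1) + (\beta_1 - 1) > 0$, so $\gamma^* > 0$, while $\gamma^* < 1$ reduces to $4(\beta_2 - 1) > 0$, again true. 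Thus $h = n^{-\gamma^*} \ra 0$ and $nh = n^{1-\gamma^*} \ra \infty$, so the bandwidth conditions demanded by Theorem~\ref{thm: local_frechet} are satisfied at the optimum.

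I do not anticipate a genuine obstacle: this is a bookkeeping corollary of the two preceding theorems. The only points requiring care are (i) merging the deterministic bias order into the $O_p$ statement, and (ii) the monotonicity argument that the equalizing $\gamma^*$ maximizes $\min\{a(\gamma), b(\gamma)\}$, together with the check that $\gamma^* \in (0,1)$ so that the hypotheses of Theorem~\ref{thm: local_frechet} indeed hold at this bandwidth.
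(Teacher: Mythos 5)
Your proposal is correct and is precisely the standard triangle-inequality plus bias--variance balancing argument that the paper has in mind; the paper itself states only that ``the proof of Corollary~\ref{cor:cor2} is straightforward and is omitted.'' Your algebra for $\gamma^*$ and the resulting exponent checks out, and your verification that $\gamma^* \in (0,1)$ (so that $h \ra 0$ and $nh \ra \infty$) is a welcome detail the authors leave implicit.
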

We note that for $\beta_1=\beta_2=2$, one obtains the result $$d(\mp(x), \hlp(x)) = O_p\left[ (nh)^{-\frac{1}{2}} + h^2\right]$$
that is familiar for local regression with real valued responses, and with $\gamma^* =1/5$ leads to the rate  $d(\mp(x), \hlp(x)) = O_p(n^{-2/5}).$  While the above results are pointwise, we remark that a uniform rate over $x$ in a bounded interval can be obtained by suitably strengthening assumptions (L0), (L2) and (L3), similar to the global case.

\section{Limiting Distributions when $\Om$ is a separable Hilbert space}
\label{sec: hilbert}

A case of particular interest arises when the random objects  are functions that are assumed to be (almost surely) square-integrable, e.g., $\Om = L^2[0,1]$) \cp{fara:97}.  %Classical examples of functional data include daily or monthly precipitation curves or growth curves for children.  
Going beyond functional data as responses, we more generally assume that $\Om$ is a separable Hilbert space with inner product $\langle \cdot, \cdot \rangle$ and corresponding norm $\lVert \cdot \rVert_\Om$.  As before, let $F$ be a distribution on $\mathbb{R}\times \Om$ with $(X, Y) \sim F$.  As this setting enables linear operations, the minimizing objects in (\ref{eq: mp}) and (\ref{eq: mp_est}) can be given explicitly under mild assumptions on the moments of $F.$  Unsurprisingly, for the case of functional data, the minimizer of (\ref{eq: mp_est}) corresponds to the estimator given in \ci{fara:97}.  Our developments in the following are for global Fr\'echet regression, but using essentially the same arguments can be  extended to local Fr\'echet regression, by utilizing the tools developed in Section \ref{sec: local}.
  
We will use the following notation.  For $q > 1$, let $\Om^q$ be the $q$-fold Cartesian product of $\Om$, with inner product $\ip{\alpha}{\alpha'}_q = \sum_{l = 1}^q\ip{\alpha_l}{\alpha'_l}$ for $\alpha,\alpha'\in\Om^q$, so that $\Om^q$ is also a Hilbert space.  For  a $p\times p$ matrix $A$, $x \in \R^p$, $\om \in \Om$ and $\alpha\in \Om^p$, we define $A\alpha \in \Om^p$ with elements $(A\alpha)_l = \sum_{m = 1}^p A_{lm}\alpha_m$, $\alpha^Tx = \sum_{l = 1}^p x_l\alpha_l \in \Om$ and $x\om \in \Om^p$ with elements $(x\om)_l = x_l\om$.

\begin{Theorem}
\label{thm: hilbert1}
A. Let $(X, Y) \sim F$ and suppose that $E \lVert Y \rVert_\Om^2 < \infty$.  Then there exist unique elements $\gamma_0 \in \Om$ and $\gamma_1 \in \Om^p$ which satisfy, for all $\om \in \Om$ and $\alpha \in \Om^p$,
$$
E\ip{Y}{\om} = \ip{\gamma_0}{\om} \quad \textrm{and}\quad  E\ip{(X - \mu)Y}{\alpha}_p = \ip{\gamma_1}{\alpha}_p.
$$
With $\Sigma = \Var(X)$ and defining  $\beta_1 := \Sigma\inv\gamma_1$ and $\beta_0 = \gamma_0$, the  solution to (\ref{eq: mp}) is
\begin{equation}
\label{eq: mp_hilsol}
\mp(x) = \beta_0 + \beta_1^T(x - \mu).\end{equation}

B. Define estimators  $\hat{\gamma}_0 = \bar{Y} = n\inv\sn Y_i$, \mbox{$\hat{\gamma}_1 = n\inv\sn(X_i - \xbar)Y_i$}, $\hbeta_1 = \hSi\inv\hat{\gamma}_1$ and $\hbeta_0 = \hat{\gamma}_0$. %Under the assumption  $(X_i, Y_i) \overset{i.i.d.}{\sim} F$, $i = 1, \ldots, n$,
The solution of (\ref{eq: mp_est}) is given by
\begin{equation}
\label{eq: mp_est_hilsol}
\hmp(x) = \hat{\beta}_0 + \hbeta_1^T(x-\xbar).
\end{equation}
\end{Theorem}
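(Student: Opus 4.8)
The plan is to exploit the Hilbert-space structure to turn both optimization problems into strictly convex quadratic minimizations whose minimizers can be read off by completing the square.

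\textbf{Part A.} First I would establish existence and uniqueness of the representers $\gamma_0$ and $\gamma_1$ via the Riesz representation theorem. The maps $\om \mapsto E\ip{Y}{\om}$ and $\alpha \mapsto E\ip{(X-\mu)Y}{\alpha}_p$ are linear functionals on $\Om$ and $\Om^p$, and they are bounded: by Cauchy--Schwarz, $|E\ip{Y}{\om}| \leq (E\lVert Y\rVert_\Om)\lVert\om\rVert_\Om$ with $E\lVert Y\rVert_\Om \leq (E\lVert Y\rVert_\Om^2)^{1/2} < \infty$, and since $\lVert(X-\mu)Y\rVert_{\Om^p}^2 = \lVert X-\mu\rVert_E^2\,\lVert Y\rVert_\Om^2$, one gets $E\lVert(X-\mu)Y\rVert_{\Om^p} \leq (\mathrm{tr}\,\Si)^{1/2}(E\lVert Y\rVert_\Om^2)^{1/2} < \infty$ because $\Si$ exists. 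Riesz then furnishes the unique $\gamma_0 \in \Om$ and $\gamma_1 \in \Om^p$.

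Next I would expand the objective $M(\om, x) = E[s(X,x)\lVert Y-\om\rVert_\Om^2]$. Writing $\lVert Y - \om\rVert_\Om^2 = \lVert Y\rVert_\Om^2 - 2\ip{Y}{\om} + \lVert\om\rVert_\Om^2$ and using $E[s(X,x)] = 1$ (since $E(X-\mu) = 0$), the term $\lVert\om\rVert_\Om^2$ enters with coefficient $1$, so
$$
M(\om, x) = \text{const} - 2E[s(X,x)\ip{Y}{\om}] + \lVert\om\rVert_\Om^2,
$$
a strictly convex quadratic in $\om$. I would stress here that, although the weights $s(X,x)$ may be negative, their expectation equals $1$, which is precisely what guarantees strict convexity and hence a unique minimizer.

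The crux is identifying the linear term. Splitting $s(X,x) = 1 + (X-\mu)^T\Si\inv(x-\mu)$ and setting $c = \Si\inv(x-\mu)$, I would apply the defining equation of $\gamma_1$ to the coordinate elements $\alpha \in \Om^p$ having a single nonzero component to obtain $E[(X_l-\mu_l)\ip{Y}{\om}] = \ip{(\gamma_1)_l}{\om}$, whence
$$
E[s(X,x)\ip{Y}{\om}] = \ip{\gamma_0}{\om} + \sum_{l=1}^p c_l\ip{(\gamma_1)_l}{\om} = \ip{\beta_0 + \beta_1^T(x-\mu)}{\om},
$$
the last equality using $\beta_0 = \gamma_0$, $\beta_1 = \Si\inv\gamma_1$, and the symmetry of $\Si\inv$ to rewrite $\sum_l c_l(\gamma_1)_l = \gamma_1^T c = \beta_1^T(x-\mu)$. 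This index bookkeeping, tracking the $\Om^p$ structure together with the symmetry of $\Si$, is the one place demanding care. Completing the square then yields $M(\om, x) = \lVert\om - (\beta_0 + \beta_1^T(x-\mu))\rVert_\Om^2 + \text{const}$, so the unique minimizer is $\mp(x) = \beta_0 + \beta_1^T(x-\mu)$.

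\textbf{Part B.} The argument for $\hmp(x)$ is the empirical mirror image and poses no new obstacle. Replacing $F$ by the empirical measure, the identities $n\inv\sn\ip{Y_i}{\om} = \ip{\hat\gamma_0}{\om}$ and $n\inv\sn\ip{(X_i - \xbar)Y_i}{\alpha}_p = \ip{\hat\gamma_1}{\alpha}_p$ hold by direct computation (being finite sums, these require no Riesz/boundedness argument), and $n\inv\sn s_{in}(x) = 1$ because $\sn(X_i - \xbar) = 0$. Since $\hSi$ is almost surely invertible for $n > p$, the empirical objective $M_n(\cdot, x)$ is almost surely strictly convex, and the identical completion of the square gives $\hmp(x) = \hbeta_0 + \hbeta_1^T(x-\xbar)$.
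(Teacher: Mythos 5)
Your proposal is correct and follows essentially the same route as the paper's proof: Riesz representation for $\gamma_0$ and $\gamma_1$, the key identity $E[s(X,x)\ip{Y}{\om}] = \ip{\beta_0 + \beta_1^T(x-\mu)}{\om}$, and a quadratic expansion of $M(\cdot,x)$ in $\om$ (you complete the square where the paper expands directly around the candidate minimizer, which is the same computation). The extra details you supply --- the Cauchy--Schwarz boundedness of the two linear functionals, the observation that $E[s(X,x)]=1$ is what makes the quadratic strictly convex despite possibly negative weights, and the almost-sure invertibility of $\hSi$ in Part B --- are all correct and only make explicit what the paper leaves implicit.
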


Results \eqref{eq: mp_hilsol} and \eqref{eq: mp_est_hilsol} demonstrate that explicit solutions of the minimization problems that define the global Fr\'echet regression are available for the case of responses that are random objects in Hilbert space. Moreover, in this situation  one can also obtain limiting distributions, as follows.

\begin{Theorem}
\label{thm: hilbert2}
Set $\beta = (\beta_0, \beta_1^T)^T$ and $\hat{\beta} = (\hat{\beta}_0, \hat{\beta}_1^T)^T$.  Under the assumptions of Theorem~\ref{thm: hilbert1},
\[
\sqrt{n}(\hat{\beta} - \beta) \rightsquigarrow \mathcal{G},
\]
where `$\rightsquigarrow$' denotes weak convergence and $\mathcal{G}$ is a zero mean Gaussian process on $\Om^{p+1}$.  The covariance structure of $\mathcal{G}$ is defined by projection covariances $\Cov(\ip{\mathcal{G}}{\alpha}_{p+1}) = l_\alpha^TC_\alpha l_\alpha$, where $\alpha \in \Om^{p+1}$, $C_\alpha$ is the covariance matrix of the vector defined in (\ref{eq: Z_vector}) %(S.10) 
in the Appendix and $l_\alpha$ can be constructed using the expressions in (\ref{eq: l_alpha}) %(S.11) 
in the Appendix.
\end{Theorem}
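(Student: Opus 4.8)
The plan is to reduce the statement to a central limit theorem for i.i.d.\ random elements of the separable Hilbert space $\Om^{p+1}$ by exhibiting an asymptotically linear (influence function) representation of $\sqrt{n}(\hbeta - \beta)$. Here I exploit the explicit forms from Theorem~\ref{thm: hilbert1}. Since $\hbeta_0 = \bar Y$ and $\beta_0 = \gamma_0 = E(Y)$, the intercept part is already a sample mean, $\sqrt{n}(\hbeta_0 - \beta_0) = n^{-1/2}\sn (Y_i - E(Y))$. For the slope I would linearize $\hbeta_1 - \beta_1 = \hSi\inv\hat\gamma_1 - \Si\inv\gamma_1$ through the first-order expansion $\hSi\inv = \Si\inv - \Si\inv(\hSi - \Si)\Si\inv + o_p(n^{-1/2})$, valid once $\hSi - \Si = O_p(n^{-1/2})$ and $\Si$ is invertible, both guaranteed by the standing moment assumptions, giving
\[
\sqrt{n}(\hbeta_1 - \beta_1) = \Si\inv\sqrt{n}(\hat\gamma_1 - \gamma_1) - \Si\inv\left[\sqrt{n}(\hSi - \Si)\right]\beta_1 + o_p(1).
\]

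Next I would handle the $\Om^p$-valued cross moment. Writing $\hat\gamma_1 = n\inv\sn(X_i - \xbar)Y_i = n\inv\sn(X_i - \mu)Y_i - (\xbar - \mu)\bar Y$ and using $\bar Y = E(Y) + o_p(1)$ to absorb the product term, I obtain the centered form $\hat\gamma_1 - \gamma_1 = n\inv\sn[(X_i - \mu)(Y_i - E(Y)) - \gamma_1] + o_p(n^{-1/2})$, where $\gamma_1 = E[(X - \mu)Y]$. Similarly $\hSi - \Si = n\inv\sn[(X_i - \mu)(X_i - \mu)^T - \Si] + o_p(n^{-1/2})$. Substituting these, the slope influence function is $\Si\inv\{(X_i - \mu)(Y_i - E(Y)) - \gamma_1\} - \Si\inv\{(X_i - \mu)(X_i - \mu)^T - \Si\}\beta_1$, and stacking it with $Y_i - E(Y)$ yields the i.i.d., mean-zero $\Om^{p+1}$-valued summand $Z_i$ of (\ref{eq: Z_vector}). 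It then remains to verify $E\lVert Z_i\rVert_{p+1}^2 < \infty$, which reduces to square-integrability of $(X_i - \mu)(Y_i - E(Y))$ in $\Om^p$ and follows from the moment hypotheses together with $E\lVert Y\rVert_\Om^2 < \infty$.

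Having established $\sqrt{n}(\hbeta - \beta) = n^{-1/2}\sn Z_i + o_p(1)$ in $\Om^{p+1}$, I would invoke the central limit theorem for i.i.d.\ mean-zero random elements of a separable Hilbert space with finite second moment, obtaining $n^{-1/2}\sn Z_i \rightsquigarrow \mathcal{G}$, a centered Gaussian element whose covariance operator agrees with that of $Z_1$; by Slutsky's theorem the $o_p(1)$ remainder does not alter the limit. To identify the covariance structure I would project onto an arbitrary $\alpha \in \Om^{p+1}$: the scalar $\ip{Z_1}{\alpha}_{p+1}$ is a linear combination, with coefficient vector $l_\alpha$ built from $\Si\inv$, $\beta_1$ and the inner products of $\alpha$ against $Y - E(Y)$, of a finite-dimensional random vector whose covariance matrix is $C_\alpha$, so that $\Cov(\ip{\mathcal{G}}{\alpha}_{p+1}) = \Var(\ip{Z_1}{\alpha}_{p+1}) = l_\alpha^T C_\alpha l_\alpha$, matching (\ref{eq: l_alpha}).

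The main obstacle is the linearization step: controlling, in the Hilbert norm, the remainder terms produced by the joint estimation of $\mu$, $\Si$ and the cross moment $\gamma_1$, and confirming that they are genuinely $o_p(n^{-1/2})$ rather than merely $o_p(1)$. In particular the interaction term $(\xbar - \mu)\bar Y$ and the second-order contribution from the expansion of $\hSi\inv$ must be shown negligible at the $n^{-1/2}$ scale, which is precisely where the finite-moment assumptions are needed.
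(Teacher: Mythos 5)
Your argument is correct in substance, but it takes a genuinely different route from the paper. You prove an asymptotically linear (influence--function) representation $\sqrt{n}(\hbeta-\beta)=n^{-1/2}\sn Z_i^{\circ}+o_p(1)$ with i.i.d.\ mean-zero $\Om^{p+1}$-valued summands and then invoke the central limit theorem for Hilbert-space-valued random elements, which delivers tightness and the Gaussian limit in one stroke. The paper instead never forms a Hilbert-valued influence function: it fixes $\alpha\in\Om^{p+1}$, collects the \emph{real-valued} statistics $X_i$, $\vechop(X_iX_i^T)$, $\ip{Y_i}{\alpha_j}$ and $\ip{X_{ij}Y_i}{\alpha_{k+1}}$ into the finite-dimensional vector $Z_i$ of (\ref{eq: Z_vector}), writes $\ip{\hbeta}{\alpha}_{p+1}=g(\bar Z)$ for an explicit smooth $g$, applies the ordinary multivariate CLT plus the delta method to get $\sqrt{n}\ip{\hbeta-\beta}{\alpha}_{p+1}\rightsquigarrow N(0,l_\alpha^TC_\alpha l_\alpha)$ with $l_\alpha=\nabla g(E Z_1)$, and then upgrades marginal convergence to weak convergence in $\Om^{p+1}$ via Theorem 1.8.4 of van der Vaart and Wellner (marginal convergence plus asymptotic finite-dimensionality). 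Your approach buys a cleaner and more explicit limit object (the covariance operator of $Z_1^{\circ}$) at the cost of having to justify a Hilbert-space CLT and control the linearization remainders in the $\Om^p$-norm; the paper's approach keeps all stochastic analysis finite-dimensional and produces the $l_\alpha^TC_\alpha l_\alpha$ formula of the statement automatically as a delta-method byproduct. Two caveats. First, your identification of your stacked influence function with ``the $Z_i$ of (\ref{eq: Z_vector})'' is not right as stated: the paper's $Z_i$ is a finite-dimensional real vector depending on $\alpha$, not an element of $\Om^{p+1}$; what is true is that $\ip{Z_1^{\circ}}{\alpha}_{p+1}=l_\alpha^T(Z_1-EZ_1)$, which is exactly why your variance formula agrees with the paper's. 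Second, both your linearization (e.g.\ $\hSi-\Si=O_p(n^{-1/2})$ and square-integrability of $(X-\mu)(Y-E Y)$ in $\Om^p$) and the paper's multivariate CLT for $\vechop(X_iX_i^T)$ and $\eta_i$ tacitly require fourth moments of $X$ and a mixed moment $E\bigl[\lVert X\rVert_E^2\lVert Y\rVert_\Om^2\bigr]<\infty$ beyond the explicitly stated $E\lVert Y\rVert_\Om^2<\infty$; this is a shared implicit assumption, not a defect of your route.
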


We next consider weak convergence of the process
%\begin{equation}
%\label{eq: mn_process}
\[
\mathcal{M}_n(x) = \sqrt{n}(\hmp(x) - \mp(x))
\]
%\end{equation}
as $x$ varies in $\R^p$.  For any $U\subset \R^p$, define the function space $$l_\Om^\infty(U) = \{g:U \ra \Om: \sup_{x \in U}\lVert g(x)\rVert_\Om < \infty\}$$ with norm $\lVert g \rVert_U = \sup_{x \in U}\lVert g \rVert_\Om$.  %We have the following corollary of Theorem~\ref{thm: hilbert2}.

\begin{Corollary}
  \label{cor: hilbert}
 Let $B > 0$ be arbitrary, and define $V_B = \{x \in \R^p: \lVert x \rVert_E \leq B\}$.  Under the assumptions of Theorem~\ref{thm: hilbert1},
    \[
    \sup_{x\in V_B} \lVert \hmp(x) - \mp(x)\rVert_\Om = O_p(n^{-1/2}).
    \]
Additionally, there is a zero-mean Gaussian process $\mathcal{M}$ on $V_B$ such that $$\mathcal{M}_n \rightsquigarrow \mathcal{M} \quad \text{in} \quad  l_\Om^\infty(V_B),$$ where $\mathcal{M}_n$ is restricted to $V_B$.
\end{Corollary}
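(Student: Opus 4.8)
The plan is to leverage the closed forms from Theorem~\ref{thm: hilbert1}: since both $\mp$ and $\hmp$ are affine in $x$ with $\Om$-valued coefficients, the entire process $\mathcal{M}_n$ is determined by finitely many random coefficients in $\Om$, and the problem collapses to controlling those coefficients and pushing them through a fixed bounded linear evaluation map. Concretely, I would subtract \eqref{eq: mp_hilsol} from \eqref{eq: mp_est_hilsol} and regroup by adding and subtracting $\hbeta_1^T(x-\mu)$, obtaining
\[
\hmp(x) - \mp(x) = (\hbeta_0 - \beta_0) + \hbeta_1^T(\mu - \xbar) + (\hbeta_1 - \beta_1)^T(x - \mu).
\]
Writing $A_n := \sqrt{n}\,[(\hbeta_0 - \beta_0) + \hbeta_1^T(\mu - \xbar)] \in \Om$ and $C_n := \sqrt{n}\,(\hbeta_1 - \beta_1) \in \Om^p$, this says $\mathcal{M}_n(x) = A_n + C_n^T(x-\mu)$, an affine function of $x$ whose coefficients do not depend on $x$.

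For the uniform rate, since $\sup_{x \in V_B}\lVert x - \mu\rVert_E \le B + \lVert\mu\rVert_E =: B'$, Cauchy--Schwarz in $\R^p$ applied to $(|x_l - \mu_l|)_l$ and $(\lVert C_{n,l}\rVert_\Om)_l$ gives $\sup_{x \in V_B}\lVert\hmp(x)-\mp(x)\rVert_\Om \le n^{-1/2}(\lVert A_n\rVert_\Om + B'\lVert C_n\rVert_p)$. It therefore suffices to prove $\lVert A_n\rVert_\Om = O_p(1)$ and $\lVert C_n\rVert_p = O_p(1)$. Both follow from Theorem~\ref{thm: hilbert2}: the weak convergence $\sqrt{n}(\hat{\beta} - \beta)\rightsquigarrow\mathcal{G}$ bounds $\sqrt{n}(\hbeta_0 - \beta_0)$ and $C_n$ in probability, while the bilinear term $\sqrt{n}\,\hbeta_1^T(\mu-\xbar)$ is $O_p(1)$ because $\lVert\hbeta_1\rVert_p = O_p(1)$ and $\sqrt{n}(\xbar - \mu) = O_p(1)$ by the ordinary $\R^p$ central limit theorem. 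This establishes the first assertion.

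For the weak convergence, I would introduce the evaluation map $\Phi:\Om^{p+1}\to l_\Om^\infty(V_B)$, $\Phi(a,c)(x) = a + c^T(x-\mu)$. Because $V_B$ is bounded, $\Phi$ is a bounded linear operator, with $\lVert\Phi(a,c)\rVert_{V_B}\le \lVert a\rVert_\Om + B'\lVert c\rVert_p$, hence continuous, and $\mathcal{M}_n = \Phi(A_n, C_n)$. It then remains to establish the joint weak convergence $(A_n, C_n)\rightsquigarrow(A,C)$ to a centered Gaussian limit in $\Om^{p+1}$, after which the continuous mapping theorem yields $\mathcal{M}_n = \Phi(A_n,C_n)\rightsquigarrow\Phi(A,C)=:\mathcal{M}$; the limit is zero-mean Gaussian since $\Phi$ is linear and $(A,C)$ is centered Gaussian, which is exactly the claim. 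The joint convergence I would obtain from the common i.i.d.\ linearization underlying Theorem~\ref{thm: hilbert2}: both $\sqrt{n}(\hat{\beta} - \beta)$ and $\sqrt{n}(\xbar - \mu)$ are asymptotically normalized sums of i.i.d.\ terms built from the same sample, so the Hilbertian central limit theorem gives their \emph{joint} Gaussian limit, and Slutsky's lemma replaces $\hbeta_1$ by its limit $\beta_1$ in the bilinear term, exhibiting $A_n$ as a continuous linear image of that joint limit.

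I expect the main obstacle to be precisely this joint weak convergence together with the bilinear term $\sqrt{n}\,\hbeta_1^T(\mu - \xbar)$: one must verify that substituting $\beta_1$ for $\hbeta_1$ is legitimate (a Slutsky argument needing $\hbeta_1\xrightarrow{p}\beta_1$ in $\Om^p$, which follows from $\hSi\xrightarrow{p}\Si$ and $\hat{\gamma}_1\xrightarrow{p}\gamma_1$) and, more importantly, that $\sqrt{n}(\hat{\beta}-\beta)$ and $\sqrt{n}(\xbar-\mu)$ converge jointly rather than only marginally, which is why I would return to their shared i.i.d.\ representation instead of invoking two separate limit theorems. By contrast, tightness in $l_\Om^\infty(V_B)$ poses no separate difficulty: the affine structure reduces it to tightness of $(A_n,C_n)$ in the Hilbert space $\Om^{p+1}$, already supplied by Theorem~\ref{thm: hilbert2}.
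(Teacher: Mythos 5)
Your proposal is correct, and for the first assertion it coincides with the paper's argument: the paper bounds $\sup_{x\in V_B}\lVert \hmp(x)-\mp(x)\rVert_\Om$ by $\lVert \hbeta_0-\beta_0\rVert_\Om + (\lVert\mu\rVert_E+B)\lVert\hbeta_1-\beta_1\rVert_{\Om^p} + \lVert\xbar-\mu\rVert_E\lVert\hbeta_1\rVert_{\Om^p}$, which is exactly your decomposition. For the weak convergence, however, you take a genuinely different route. The paper works directly in $l_\Om^\infty(V_B)$: it verifies convergence of the finite-dimensional marginals $(\mathcal{M}_n(x_1),\dots,\mathcal{M}_n(x_J))$ by a delta-method computation on projections $\ip{\cdot}{\om}$, and then proves asymptotic tightness by truncating onto the span of the first $J_0$ basis elements, exploiting the identity $\tilde{\mathcal{M}}_n(x)-\tilde{\mathcal{M}}_n(y)=\sum_k \Pi_{J_0}(\hbeta_{1k}-\beta_{1k})(x_k-y_k)$ together with Theorem 1.5.7 and Lemma 1.8.1 of van der Vaart and Wellner. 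You instead observe that $\mathcal{M}_n=\Phi(A_n,C_n)$ for a fixed bounded linear map $\Phi:\Om^{p+1}\to l_\Om^\infty(V_B)$ and push a joint Gaussian limit of the coefficients through the continuous mapping theorem, which eliminates the separate tightness verification entirely. The price is that you need more than Theorem~\ref{thm: hilbert2} delivers as stated, namely the \emph{joint} weak convergence of $\sqrt{n}(\hbeta-\beta)$ and $\sqrt{n}(\xbar-\mu)$; you correctly identify this as the crux and your resolution is sound (both quantities are delta-method images of the same empirical average $\bar Z$, so joint projection convergence follows from the multivariate CLT, joint asymptotic tightness follows from marginal asymptotic tightness, and Slutsky licenses replacing $\hbeta_1$ by $\beta_1$ in the bilinear term). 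In short, the paper's proof is self-contained modulo Theorem~\ref{thm: hilbert2} but hands-on about tightness, while yours is structurally cleaner at the cost of a modest strengthening of the coefficient-level limit theorem.
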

These results show that one can take advantage of the additional structure that is available in the case of Hilbertian objects to obtain limit distributions of the estimates.  Limit distributions are not available for general object spaces due to the lack of a linear structure.  Generally, even for the simpler case of
Fr\'echet means, limit results cannot be directly obtained, except in special cases. For example, for random objects that fall on manifolds satisfying certain regularity conditions, local linear approximations sometimes make it possible to derive limit theorems %can be obtained after mapping the random objects into a linear space $\R^p$ and then  obtaining a $p-$variate limit distribution 
\cp{bard:13}.

\section{Fr\'echet Regression for Probability Distributions with the Wasserstein metric}
\label{sec: density}
\subsection{Computational Details}
\label{ss: comp_density}
Here, the space $\Om$ is the set of distribution functions equipped with the Wasserstein metric, as outlined in Example~\ref{exm: wass} in Section~\ref{sec: theory}.  To implement the minimization required by (\ref{eq: mp_est}) using a sample $(X_i, Y_i)$, $i = 1, \ldots, n$, of covariates and distributions, first define $Q(\om)$ to be the quantile function corresponding to $\om$, for any $\om \in \Om$, and let $Q\inv$ be the inverse map, mapping quantile functions to their associated distribution function.  Set $\hat{g}_x = n\inv\sn s_{in}(x)Q(Y_i)$, where the weights $s_{in}(x)$ are given in (\ref{eq: emp_weights}). Note that
$\hat{g}_x \in L^2[0,1]$, and let $d_{L^2}$ be the standard $L^2$ metric on this space.  The global Fr\'echet regression estimator is
\[
\hmp(x) = \argmin_{\om \in \Om} d_{L^2}^2(\hat{g}_x, Q(\om)) = Q\inv \left(\argmin_{q \in Q(\Om)} d_{L^2}^2(\hat{g}_x, q)\right),
\]
where we refer to the proof of Proposition~\ref{prop: wass} in the Appendix for details.

Now, let $u_j$, $\,j = 1, \ldots, M,$ be an equispaced grid on $[0,1]$ and let $g_j = \hat{g}_x(u_j)$.  Then compute
\[
q^\ast = \argmin_{q \in \R^M} \lVert g-q\rVert_E^2,
\]
subject to the constraint $q_1 \leq \cdots \leq q_M$.  This optimization problem is a quadratic program and can be solved using a variety of techniques. The solution $q^\ast$ represents a discretized version of the approximation of the quantile function $Q(\hmp(x))$.  Similar arguments hold for the computation of the local Fr\'echet estimator.

\subsection{Simulation}
\label{ss: wass_sim}

To assess the performance of the global Fr\'echet regression estimator in (\ref{eq: mp_est}), it is first necessary to determine a generative model that produces suitably simulated data.  The space of distributions with the Wasserstein metric provides an ideal setting for this.  The responses $Y$ are distributions with quantile functions $Q(Y)$ and the predictors are random variables $X \in \R$.  For notational simplicity, the quantile function corresponding to $Y$ will also be denoted as $Y$.  The regression function is
%\begin{equation}
%  \label{eq: norm_sim}
\[
\mp(x)(\cdot) = E(Y(\cdot)|X=x) =\mu_0 + \beta x + (\sigma_0 + \gamma x)\Phi\inv(\cdot),
\]
%\end{equation}
where $\Phi$ is the standard normal distribution function, $\mu_0$, $\beta \in \R$ and $\sigma_0$ and $\gamma$ satisfy $\sigma_0 + \gamma x > 0$ for all $x$ in the support of $F_X.$  This corresponds to the response distributions being, on average, a normal distribution with parameters that depend linearly on $x$.

The random response $Y$ is generated conditional on $X$ by adding noise to the quantile functions, which we will demonstrate in two settings.  In the first, the distribution parameters $\mu|X\sim \mathcal{N}(\mu_0 + \beta X, v_1)$ and $\sigma|X\sim \textrm{Gam}((\sigma_0 + \gamma X)^2/v_2, v_2/(\sigma_0 + \gamma X))$ are independently sampled, and the corresponding distribution is $Y = \mu + \sigma\Phi\inv$.  In the second setting, after sampling the distribution parameters as in the previous setting, the resulting distribution is  ``transported" in Wasserstein space following a simplified version of the algorithm outlined in Section 8.1 of \ci{pana:16}. Specifically, random transport maps $T$ (increasing diffeomorphisms of the real line) are generated by sampling uniformly from the collection of transport maps $T_k(x) = x - \sin(kx)/|k|$, for $k \in \{-l, \ldots, l\}\setminus\{0\}$, with $Y = T\circ(\mu + \sigma\Phi\inv)$.  This second setting is significantly more complex, as the observed distributions are no longer Gaussian.

Random samples of pairs $(X_i, Y_i)$, $i = 1, \ldots, n$ were generated by sampling $X_i \sim \mathcal{U}(-1,1)$, setting $\mu_0 = 0$, $\sigma_0 = 3$, $\beta = 3$ and $\gamma = 0.5$, and following the above procedure for the two simulation settings.  In the first setting, the parameter variances were set at $v_1 = 0.25$ and $v_2 = 1$.  In the second, the values were $v_1 = 1$ and $v_2 = 2$, with $l = 2$ used for generating the transport maps.  In each setting, $200$ runs were executed for three sample sizes $n = 50, 100, 200$.  For the $r$-th simulation of a particular sample size, with $\hmp^r(x)$ denoting the fitted distribution function, the quality of the estimation was measured quantitatively by the integrated squared errors
\[
\text{ISE}_r=\int_{-1}^1 d_W^2(\hmp^r(x), \mp(x)) \; {\rm d}x.
\]

In the first simulation setting, we verify that global Fr\'echet regression is performing as expected by comparing to the best-case scenario where one knows the finite-dimensional generating model.  That is, we compute the mean $\mu_i$ and standard deviation of $\sigma_i$ of the distribution $Y_i$ and regress them linearly against $X_i$, while restricting the estimates of $\sigma_0$ and $\gamma$ such that the regression line is positive on $[-1,1]$.  Thus, we can compare this ``oracle" linear regression with global Fr\'echet regression by computing its integrated squared error for each simulation run.  These errors are shown for both methods in boxplots in Figure~\ref{fig: ise_boxplot_low}. It is clear that global Fr\'echet regression performs just as well as the oracle procedure.  Sign-rank tests were performed to test the hypothesis of no difference between the methods for each sample size, with the smallest of the three $p$-values being $0.51$.

\begin{figure}
\centering
\subcaptionbox{\label{fig: ise_boxplot_low}}[2.4in]{\includegraphics{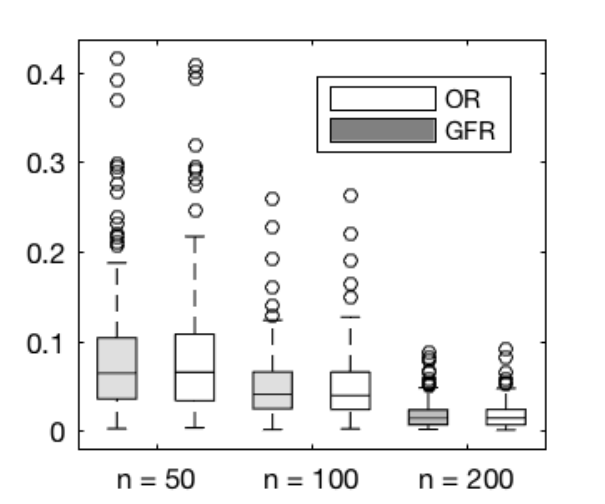}}
\subcaptionbox{\label{fig: ise_boxplot_hi} }[2.4in]{\includegraphics{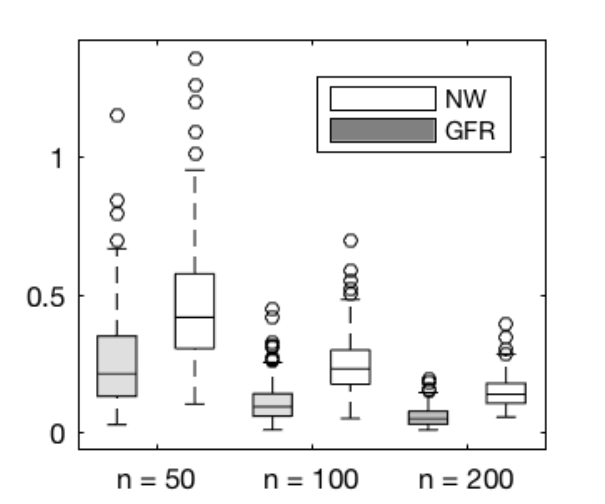}}
\caption{\label{fig: ise_boxplot} Boxplots of integrated squared errors for $200$ simulation runs and three sample sizes $n$.  The left panel compares global Fr\'echet regression (GFR) with the oracle linear regression (OR), while the right shows results for global Fr\'echet regression and the Nadaraya-Watson smoother (NW).}
\end{figure}

In the second simulation setting, the random transportation renders the oracle linear regression technique above inadmissable, since the standard deviation of the transported distribution no longer has a linear relationship with $X$.  However, the global Fr\'echet regression model still holds true.  Figure~\ref{fig: ise_boxplot_hi} shows the decreasing integrated squared errors for increasing sample sizes, demonstrating the validity and utility of global Fr\'echet regression for this complex regression setting.  Furthermore, at the suggestion of a referee, we compared our results with the Nadaraya-Watson estimator in \eqref{eq: NWS}, where the bandwidth was chosen in the interval $[0.2, 0.7]$ so as to minimize the average ISE over all simulations.  This resulted in bandwidth choices 0.5, 0.45, and 0.35 for $n = 50, 100,$ and 200, respectively.  The corresponding ISE values in Figure~\ref{fig: ise_boxplot_hi} demonstrate that this approach is inferior to the global Fr\'echet fits, which is expected if the global model holds, analogous to the situation in the Euclidean case. 

\subsection{Application to Mortality Profiles}
\label{ss: mortality}
Many studies and analyses have been motivated by a desire to understand human longevity.  Of particular interest is the evolution of the distributions of age-at-death over calendar time.  The Human Mortality Database provides such data in the form of yearly lifetables, differentiated by country.  Currently, this database includes yearly mortality and population data for 37 countries that are available at \url{<www.mortality.org>}.
For a given country and calendar year, the probability distribution for mortality can be represented by its density. A first step is to estimate this density from the data in the lifetables for a specified country.  Consider a country for which lifetables are available for the years $t_i$, $i = 1, \ldots, n$.  For integer-valued ages $j$, $j = 0, \ldots, 110$, the lifetable provides the size of the population $m_j$ which is at least $j$ years old, normalized so that $m_0 = 100000$.  These values can be used to construct a histogram for age-at-death, which in turn can be smoothed using a local linear smoother to obtain an estimate of the density.  This smoothing step was performed in \texttt{Matlab} using the \texttt{hades} package, available at \url{<http://www.stat.ucdavis.edu/hades/>}.   Each density was estimated for ages in the interval $[20, 110]$, %, with breakpoints for the histograms chosen at $j + 0.5$, $j = 20,\ldots,109$.  A
with the value 2 as a common smoothing bandwidth.

\begin{figure}[t]
\centering
\subcaptionbox{\label{fig: chile_dens}}[1.5in]{\includegraphics[width = 1.5in, height = 1.5in]{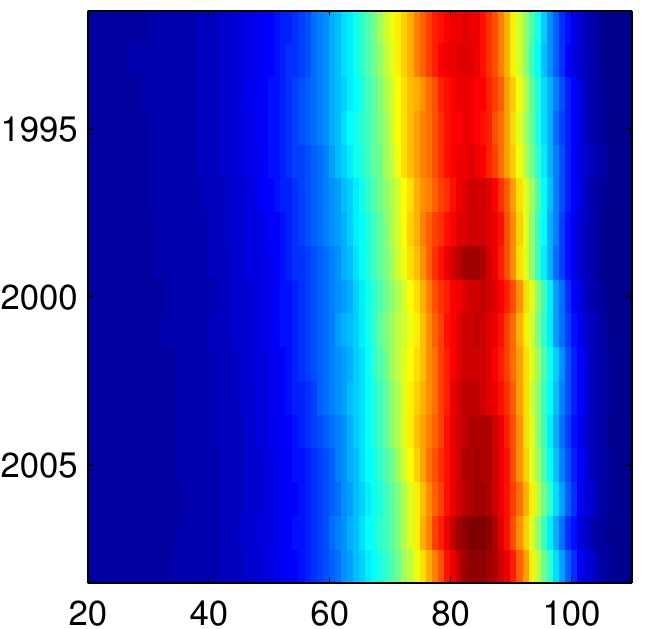}}
\subcaptionbox{\label{fig: chile_lin}}[1.5in]{\includegraphics[width = 1.5in, height = 1.5in]{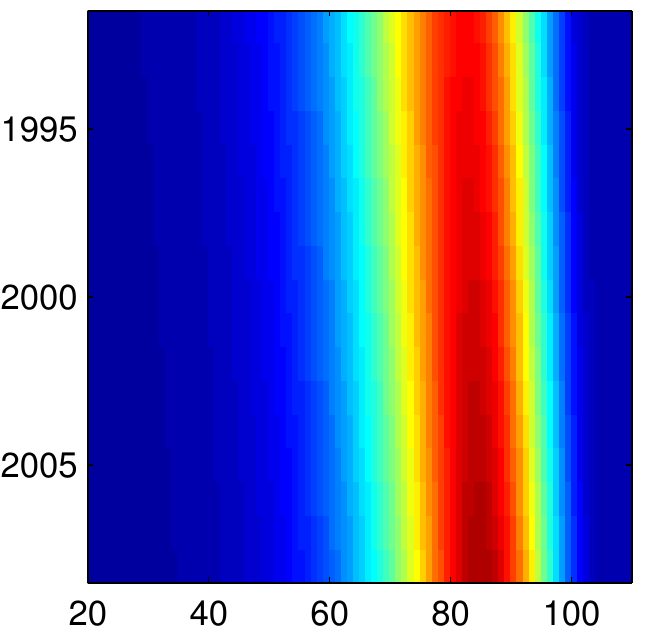}}
\subcaptionbox{\label{fig: chile_quad}}[1.5in]{\includegraphics[width = 1.5in, height = 1.5in]{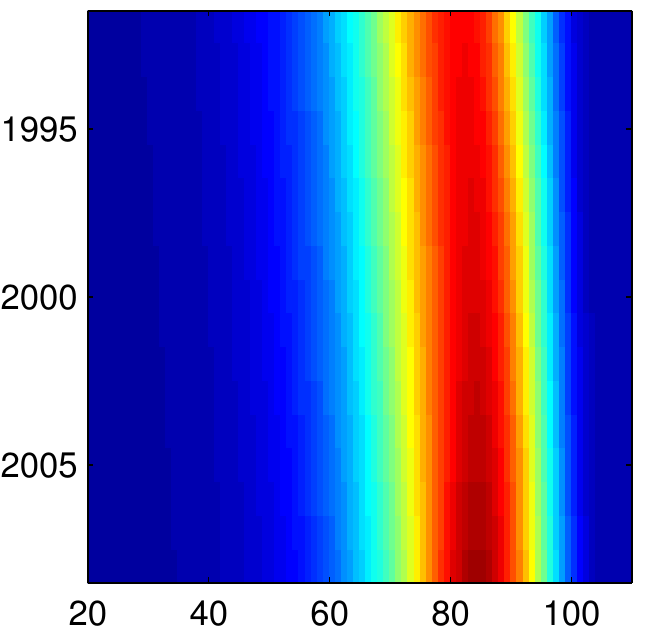}}
\caption{(a) Yearly mortality densities for Chile for the years 1992--2008; (b) Global Fr\'echet regression fits of yearly mortality densities using $X_i = t_i$; (c) Global Fr\'echet regression fits using $\alpha_i = (t_i, t_i^2)^T$. \label{fig: chile}}
\end{figure}

As an initial example, we consider the data for Chile, which has mortality data available for the years 1992--2008.  Using the procedure outlined above, mortality density estimates $Y_i$ were obtained for the years $t_i = 1991 + i$, $i = 1, \ldots, 17$.  These estimates are shown as a heat map in Figure~\ref{fig: chile}, linearly interpolating between years for continuity. The variation from year to year is marked by a steady increase in both the location and height of the peak in mortality.  The global Fr\'echet regression fits using calendar year as predictor  for linear ($X_i = t_i$) and quadratic ($X_i = (t_i, t_i^2)^T$) models are shown in Figures~\ref{fig: chile_lin} and \ref{fig: chile_quad}, respectively.  Similar to the least squares regression plane, these fits provide a smooth visualization of the evolution of mortality and remove the noise that is visible in the raw density data.  There seems to be little gain in fitting a quadratic model, as the global Fr\'echet regression fits with linear and quadratic predictors are very similar.  Leave-one-out prediction errors were 0.088 for the linear fit and 0.0972 for the quadratic fit, indicating that the simpler linear model is indeed preferable.

Next, we consider the data for Luxembourg, with mortality lifetable data ranging from 1960--2009.  The density estimates for these years are shown in Figure~\ref{fig: lux_dens}.  We find a slightly more complicated evolution of mortality for Luxembourg compared to Chile.  For example, the mode of the density does not steadily increase over the years; rather, the mode seems to carve out a curved path.  Figure~\ref{fig: lux_lin} and \ref{fig: lux_quad} show the global Fr\'echet regression fits for the linear and quadratic global Fr\'echet model, with $X_i = t_i$ for the linear and $X_i = (t_i, t_i^2)^T$ for the quadratic model.  The quadratic fit is better at capturing the shape of the peak dynamics observed in the raw sample of densities.  The adjusted Fr\'echet $R^2$ values (for details on these extensions of the coefficient of determination see Section~\ref{ss:inf})  are 0.971 and 0.975 for the linear and quadratic models, respectively. 
Average leave-one-out prediction errors were 0.56 for the linear and 0.27 for the quadratic model.  

\begin{figure}[t]
\centering
\subcaptionbox{\label{fig: lux_dens}}[1.5in]{\includegraphics[width = 1.5in, height = 1.5in]{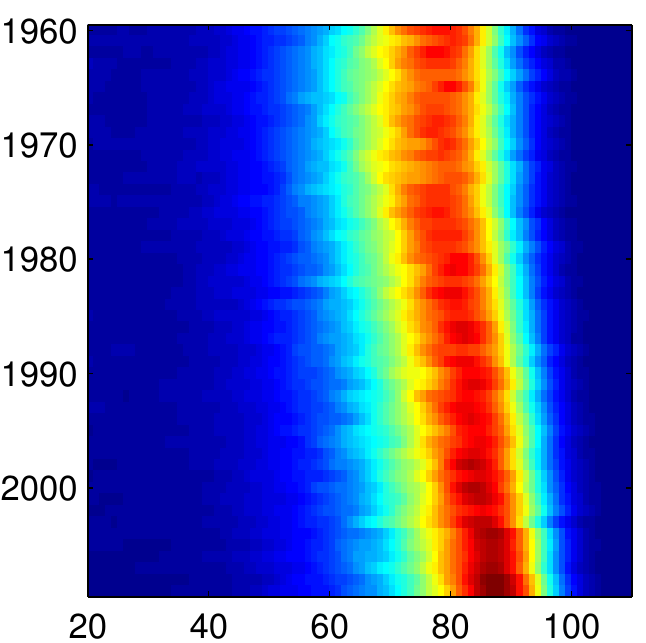}}
\subcaptionbox{\label{fig: lux_lin}}[1.5in]{\includegraphics[width = 1.5in, height = 1.5in]{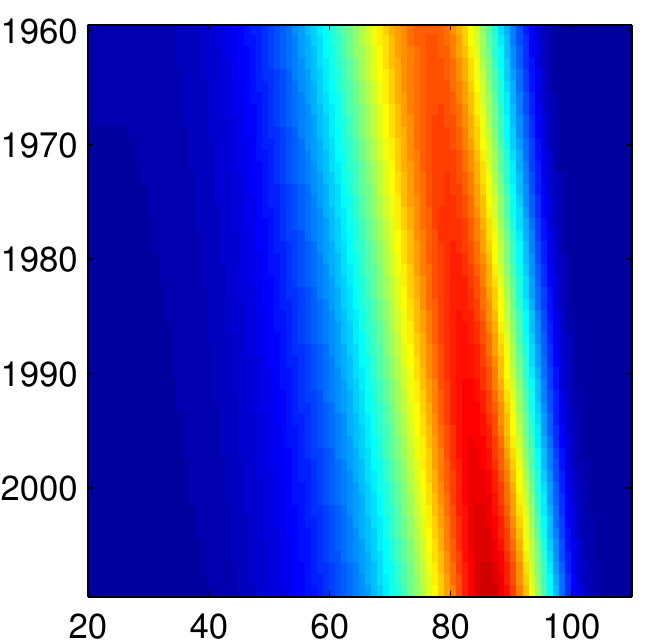}}
\subcaptionbox{\label{fig: lux_quad}}[1.5in]{\includegraphics[width = 1.5in, height = 1.5in]{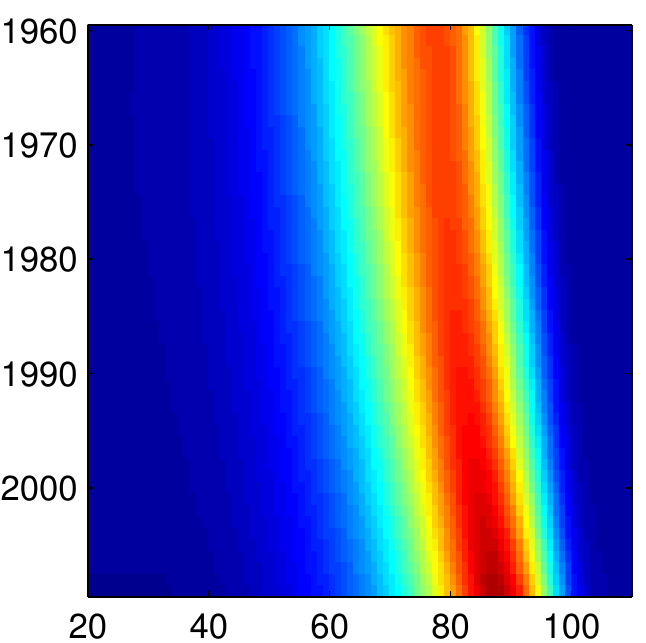}}
\caption{(a) Yearly mortality densities for Luxembourg for the years 1960--2009 (b),(c) Global Fr\'echet regression fits of yearly mortality densities using $X_i = t_i$ and $X_i = (t_i, t_i^2)^T$, respectively. \label{fig: lux}}
\end{figure}

While the quadratic model seems to be indeed  better for both fitting and prediction than the linear model, it still does not capture some aspects of the mortality distributions for Luxembourg, particularly between 1970 and 1980.  Therefore,  local fitting methods will likely prove superior.  Figure~\ref{fig: lux_local} shows the Nadaraya-Watson kernel regression \eqref{eq: NWS} and the local Fr\'echet \eqref{eq: local_frechet_est} fits, using bandwidths $h = 5$ and $h = 7,$ respectively.  These bandwidths were chosen by minimizing the average leave-one-out prediction error over a grid, with minimum values of 0.196 and 0.168, respectively, for Nadaraya-Watson and local Fr\'echet fits.  This represents a 14\% improvement in prediction using the local Fr\'echet fit as compared to the Nadaraya-Watson at the best tuning parameter choices.  From the plot of the absolute differences between these estimates in Figure~\ref{fig: lux_diff}, the superiority  of local Fr\'echet regression for the most part can be attributed to its improved performance near the boundaries. Specifically, the Nadaraya-Watson estimator appears to underestimate the mode of the mortality distribution in the  years preceding 2009.

\begin{figure}[t]
\centering
\subcaptionbox{\label{fig: lux_lc}}[1.5in]{\includegraphics[width = 1.5in, height = 1.5in]{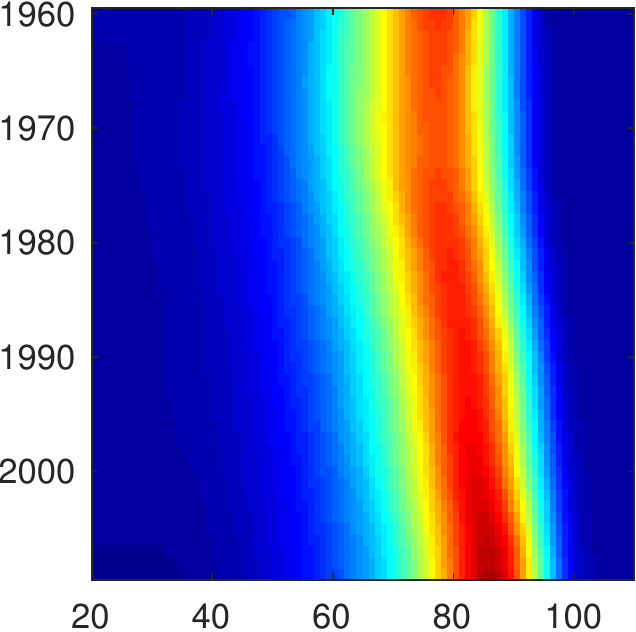}}
\subcaptionbox{\label{fig: lux_lf}}[1.5in]{\includegraphics[width = 1.5in, height = 1.5in]{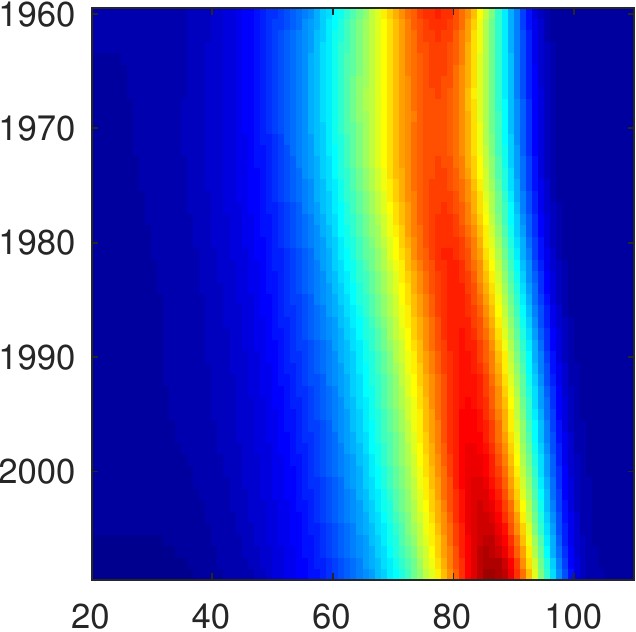}}
\subcaptionbox{\label{fig: lux_diff}}[1.5in]{\includegraphics[width = 1.5in, height = 1.5in]{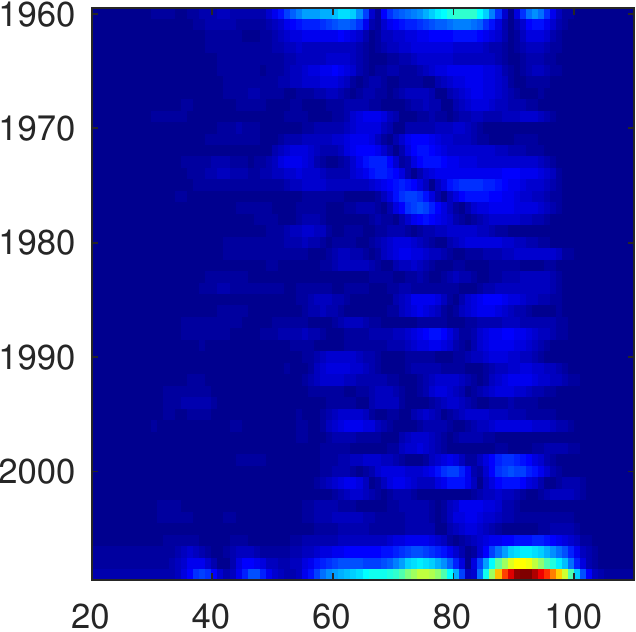}}
\caption{(a), (b) Nadaraya-Watson and local Fr\'echet estimates for Luxembourg (c) Absolute difference between local fits. \label{fig: lux_local}}
\end{figure}

\subsection{Inference and Model Selection}
\label{ss:inf}

Many of the standard inferential tools that are available for ordinary linear regression depend on the algebraic structure of $\R$, and thus are not directly extendable to Fr\'echet regression for metric-valued data.  However, one tool which does generalize is the coefficient of determination, $R^2$.  Recall  that in multiple linear regression modeling with real valued responses, $R^2$ is usually interpreted as the fraction of variance of the response which is explained by a linear relationship with the predictor variables, i.e.
%\begin{equation}
%\label{eq: r2}
\[
R^2 = 1 - \frac{\Var(Y - \sbn - (\sbo)^T(X-\mu))}{\Var(Y)}.
\]
%\end{equation}
Using the generalized notions of mean and variance in (\ref{eq: frechet_mv}), we define a  corresponding Fr\'echet $R^2$ coefficient of determination as
%\begin{equation}
%\label{eq: frechet_r2}
\[
R^2_\oplus := 1 - \frac{E\left[d^2(Y, \mp(X))\right]}{V_\oplus}.
\]
%\end{equation}
Given a random sample $(X_i, Y_i)$, $i = 1, \ldots, n$, $R^2_\oplus$ can be estimated by
%\begin{equation}
%\label{eq: frechet_r2_est}
\[
\hat{R}^2_\oplus = 1 - \frac{\sn d^2(Y_i, \hmp(X_i))}{\sn d^2(Y_i, \homp)},
\]
%\end{equation}
where $$\homp = \argmin_{\om \in \Om}n\inv \sn d^2(Y_i, \om)$$ is the sample Fr\'echet mean.  The values $R^2_\oplus$ has similar interpretations as the ordinary coefficient of determination $R^2$ and likely is also useful for  inference and model selection.

In the setting of global Fr\'echet regression, the null hypothesis of no effect is equivalent to testing $\mc{H}_0: R^2_\oplus = 0$, for which the estimate $\hat{R}^2_\oplus$ can be used as a test statistic.  In order to obtain a $p$-value, a permutation test can be performed \cp{lehm:06, higg:04,brad:68}.  First, the values $X_i$, $i = 1, \ldots, n,$ are permuted to form a new sample $\tilde{X}_i$, $i=1,\ldots,n$. For each new predictor sample, a global Fr\'echet regression is fitted,  using the pairs $(\tilde{X}_i, Y_i)$, and the value $\hat{R}^2_\oplus$ is computed for each of these regression fits.  By performing a large number of such permutations and fits,  one then obtains an empirical approximation of the null distribution of the test statistic and a $p$-value by calculating the quantile of the actually observed  $\hat{R}^2_\oplus$ within this null distribution.

Another potential  application of the  coefficient $\hat{R}^2_{\oplus}$ is model selection, where one can mimic the use of the adjusted $R^2$
in linear regression by fitting Fr\'echet regression  models that use various  subsets of the predictor variables.  For a fitted submodel $\mc{M}$ using $q \leq p$ predictor values, the adjusted Fr\'echet $R^2$ is then
%\begin{equation}
%\label{eq: frechet_r2_adj}
\[
\hat{R}^2_{\oplus, \rm{adj}}(\mc{M}) = \hat{R}^2_\oplus - (1 - \hat{R}^2_\oplus)\frac{q}{n - q - 1}.
\]
%\end{equation}
Let $\mc{C}_q$ be the class of submodels using $q$ predictors, $1 \leq q \leq p$. Computing
\[
q^\ast = \argmax_{1 \leq q \leq p} \max_{\mc{M} \in \mc{C}_q} \hat{R}^2_{\oplus, \rm{adj}}(\mc{M})
\]
the final model can then be taken as $\mc{M}^\ast = \argmax_{\mc{M} \in \mc{C}_{q^\ast}} \hat{R}^2_{\oplus, \rm{adj}}(\mc{M})$.  Another alternative for model selection is to minimize prediction error, which can be estimated by $k$-fold cross validation.

\section{Fr\'echet Regression for Correlation Matrices as Random Objects}
\label{sec: correlation}

\subsection{Computational Details}
\label{ss: comp_corr}

Here we consider a space of random objects $\Om$ which consists of correlation matrices, i.e.,  the space of square $r\times r$ symmetric positive semidefinite matrices with unit diagonal, for some positive integer $r$, and equip $\Om$ with the Frobenius metric $d_F$.  Positive definite matrices have been studied previously from the random object perspective under different metrics \cp{arsi:07, %dryd:09,
pigo:14}.  From a sample $(X_i, Y_i)$, $i = 1,\ldots, n$, the minimization in (\ref{eq: mp_est}) can be reformulated by setting $\hat{B}(x) = n\inv\sn s_{in}(x)Y_i$ and computing (see proof of Proposition ~\ref{prop: correlation}
in the Appendix for details)
\[
\hmp(x) = \argmin_{\om \in \Om} \Fro{\hat{B}(x)}{\om}^2.
\]
Thus, the problem is reduced to finding the correlation matrix which is nearest to the matrix $\hat{B}(x)$.  This problem has been well studied \cp{high:02, qi:06, bors:10}, and in our implementations we used the alternating projections algorithm, written by Nicholas Higham and available at \url{<https://nickhigham.wordpress.com/2013/02/13/the-nearest-correlation-matrix/>}, to carry out this optimization.

\subsection{Functional Connectivity in the Brain}
\label{ss: conn}

In recent years, the problem of identifying functional connectivity between brain voxels or regions has received a great deal of attention, especially for resting state fMRI \cp{alle:14, ferr:13, lee:13, shel:13}.  Subjects are asked to relax while undergoing a fMRI brain scan, where blood-oxygen-level dependent signals are recorded and then processed to yield voxel-specific time courses of signal strength. The connectivity between brain regions is usually quantified by the temporal correlation between representative time signals of the two regions.  Higher levels of correlation are reflective of higher connectivity, giving rise to the question of which subject-specific factors might explain observed variations in connectivity.  When considering $r > 2$ brain regions, the resulting number of pairwise correlations is $r(r-1)/2$, so that standard statistical models are inadequate for investigating the relationship between several predictors and the connectivity response.   Fr\'echet regression can be employed to directly address this issue by viewing the functional connectivity measurements in a natural way as random elements of the space of correlation matrices.

The data for our analysis come from a study of 174 cognitively normal elderly patients, each of whom underwent an fMRI scan at the UC Davis Imaging Research Center.  Preprocessing of the recorded BOLD (blood oxygenation-level-dependent) signals was implemented by adopting the standard procedures of slice-timing correction, head motion correction and normalization, in addition to linear detrending to account for signal drift and band-pass filtering to include only frequencies between 0.01 and 0.08 Hz.

Of particular interest regarding functional connectivity in the resting state is the so-called default-mode network (DMN), including the study of age-related effects %; see the review article 
\cp{ferr:13}.  %and references therein.  
In one such study, \ci{meve:13} investigated disruptions between anterior-posterior components in the DMN as subjects age and   %Part of their findings indicated 
found a decrease in connectivity between a seed region in the left ventral medial prefrontal cortex (lvmPFC) and three other voxels located within the right vmPFC/orbitofrontal (rvmPFC), left ventral posterior cingulate cortex (lvPCC) and right precuneus/PCC (rpPCC) regions. % respectively.  %See the cited study for MNI coordinates of these regions.

To construct a connectivity correlation matrix for each subject, signals at these $r = 4$ locations were extracted and their temporal correlations computed.  These signals are taken over the interval [0, 470] (in seconds), with $T = 236$ measurements available at 2 second intervals.  Hence, for the $i$th subject, the data are in the form of an $T\times r$ signal matrix $S_i$ where the rows correspond to consecutive time points and the columns to distinct voxels.  Define $s_{ijk} = (S_i)_{jk}$ and $\bar{s}_{ik} = T\inv\sum_{j = 1}^T s_{ijk}$.  The connectivity correlation matrix $Y_i$ for the $i$th subject as it is routinely calculated for analyzing connectivity in fMRI has the elements
\[
(Y_i)_{kl} = \frac{\sum_{j = 1}^T(s_{ijk} - \bar{s}_{ik})(s_{ijl} - \bar{s}_{il})}{\left[\left(\sum_{j = 1}^T (s_{ijk} - \bar{s}_{ik})^2\right)\left(\sum_{j = 1}^T (s_{ijl} - \bar{s}_{il})^2\right)\right]^{1/2}}.
\]

\begin{figure}
  \centering
\subcaptionbox{lvmPFC vs. rvmPFC \label{fig: lvmPFC_vs_rvmPFC}}[2.34in]{\includegraphics[scale = 0.9]{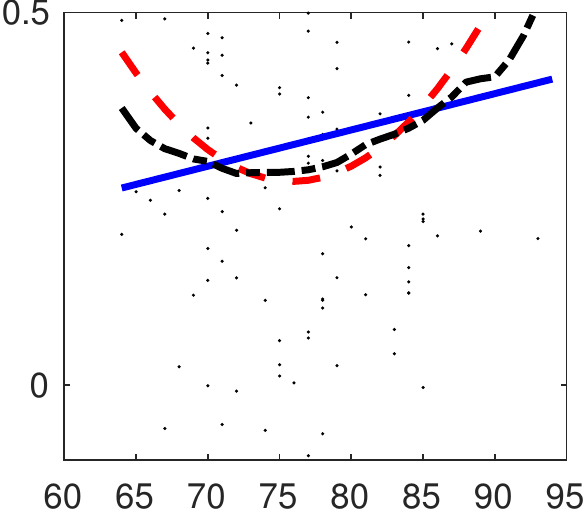}}
\subcaptionbox{lvmPFC vs. lvPCC \label{fig: lvmPFC_vs_lvPCC}}[2.34in]{\includegraphics[scale = 0.9]{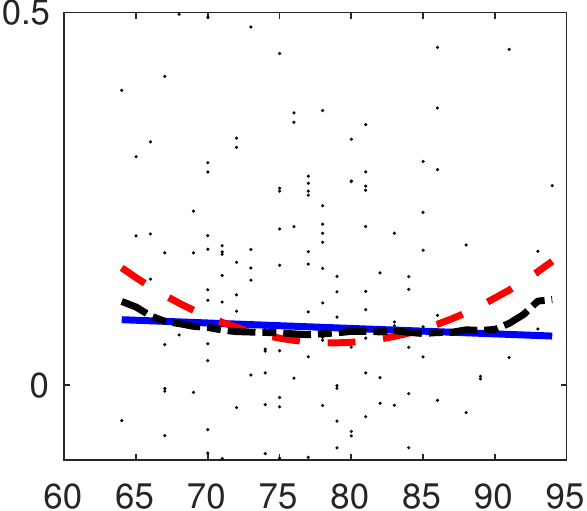}} \\
\subcaptionbox{lvmPFC vs. rpPCC \label{fig: lvmPFC_vs_rpPCC}}[2.34in]{\includegraphics[scale = 0.9]{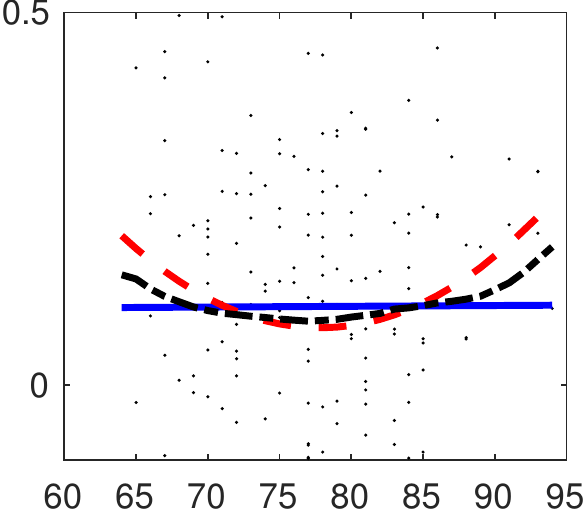}}
\subcaptionbox{rvmPFC vs. lvPCC \label{fig: rvmPFC_vs_lvPCC}}[2.34in]{\includegraphics[scale = 0.9]{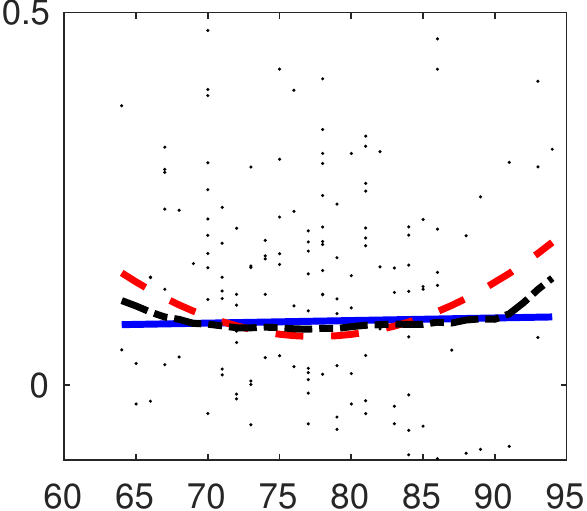}} \\
\subcaptionbox{rvmPFC vs. rpPCC \label{fig: rvmPFC vs rpPCC}}[2.34in]{\includegraphics[scale = 0.9]{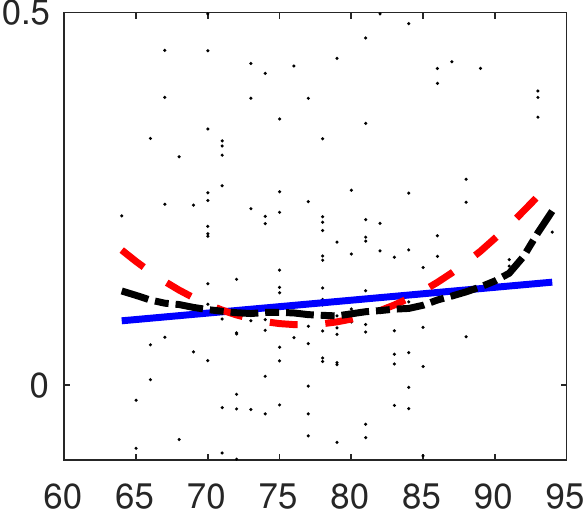}}
\subcaptionbox{lvPCC vs. rpPCC \label{fig: lvPCC_vs_rpPCC}}[2.34in]{\includegraphics[scale = 0.9]{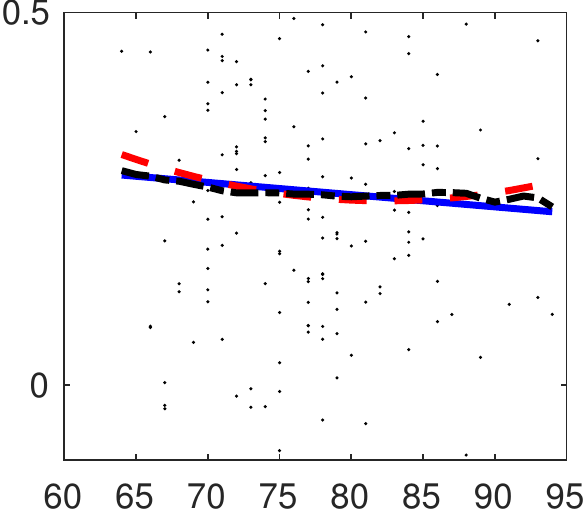}}
\caption{Component-wise scatterplots and Fr\'echet regression fits for voxel-to-voxel correlation as a function of age.  The linear, quadratic, and Nadaraya-Watson ($h = 7$) fits are represented by solid, dashed, and dot-dashed lines, respectively. Note that these fits are derived from  Fr\'echet regression analysis where entire correlation matrices serve as random object responses. \label{fig: dmn_con}}
\end{figure}

In our regression model, we use age as a predictor of connectivity and fit both linear and quadratic models, i.e. $X_i = Z_i$ and $X_i = (Z_i, Z_i^2)^T$, where $Z_i$ is the age of subject $i$, $i = 1, \ldots, 174$.  

Since it is unclear whether the global Fr\'echet regression model \eqref{eq: mp} holds, we also fit the regression nonparametrically using the Nadaraya-Watson smoother \eqref{eq: NWS} over a range of bandwidths.  One notable difference between the current data and those used in \ci{meve:13} is the age range.  The current analysis includes only elderly subjects, aged 64 to 94 years, while \ci{meve:13} included subjects between 19 and 80 years of age.  It has been observed previously \cp{onod:12, ferr:13} that age-related effects are more difficult to detect in later years.  Thus, the goal for our analysis is to investigate if the decreases in connectivity observed in \ci{meve:13} are also found among a group of strictly elderly subjects, or whether the pattern is  different.

For each regression fit, the estimated mean-square prediction error (MSPE) was calculated using five-fold cross validation, averaged over 50 runs.  The MSPE values for the linear and quadratic models were $0.6489$ and $0.6386,$ respectively.  For the Nadaraya-Watson fit, the minimum MSPE over a grid of bandwidths was $0.6393$, for bandwidth $h = 7.$  The linear model had a $p$-value of $0.58$ and $\hat{R}_\oplus^2 = 0.0041$, while the quadratic model was a much better fit, with a $p$-value of $0.014$ and $\hat{R}_\oplus^2 = 0.0288$.  Figure~\ref{fig: dmn_con} illustrates the regression fits for each component of the lower subdiagonal of the correlation matrix.  The visual and numerical results suggest that the quadratic global Fr\'echet regression model is adequate for these data, as the fit is quite similar to the Nadaraya-Watson estimator without requiring any bandwidth selection.  Thus, age-related changes in connectivity seem to be more subtle in later years, with subjects over 85 demonstrating greater connectivity between some regions than younger subjects between the ages of 75 and 85.  While some studies have found increased connectivity with age \cp{ferr:13}, the quadratic model reveals that simple linear associations between age and connectivity may be inadequate.% in describing these dynamics%, particularly within the subpopulation of the elderly.

\section{Local Fr\'echet Regression for Spherical Data}
\label{sec: sphere}

As a final illustration, we implement  local Fr\'echet regression for a situation where the random object responses lie in a Riemannian manifold object space.  Specifically, choose $\Om = S^2$ as the unit sphere in $\R^3$, with geodesic distance $d(y, z) = \arccos(z^Ty)$ and consider the regression function
%\begin{equation}
%\label{eq: sphere_reg}
\[
\mp(x) = ((1 - x^2)^{1/2}\cos(\pi x), (1 - x^2)^{1/2}\sin(\pi x), x), \quad x \in (0, 1),
\]
%\end{equation}
which maps a spiral on the sphere.  To generate a random sample $(X_i, Y_i)$, $i = 1, \ldots, n$, $X_i \sim \mathcal{U}(0, 1)$ was first sampled, followed by a bivariate normal random vector $U_i$ on the tangent space $T_{\mp(X_i)}\Om$.  Finally, with $\lVert \cdot \rVert_E$ being the Euclidean norm,
%\begin{equation}
%\label{eq: sphere_dist}
\[
Y_i = \textrm{Exp}_{\mp(X_i)}(U_i) = \cos(\lVert U_i\rVert_E)\mp(X_i) + \sin(\lVert U_i \rVert_E)\frac{U_i}{\lVert U_i \rVert_E}.
\]
%\end{equation}
Random samples of size $n = 50, 100, 200$ were generated under two noise scenarios, with 200 runs for each simulation.  In both noise scenarios, the components of $U_i$ were independent, with each having a variance of $0.2$ and $0.35$ in the low and high noise scenarios, respectively.  Figure~\ref{fig: sphere_samples} shows two sample data sets of size 50 for the two noise scenarios.

\begin{figure}
  \centering
\subcaptionbox{Low Noise, $n = 50$}[2.4in]{\includegraphics[scale = 0.45]{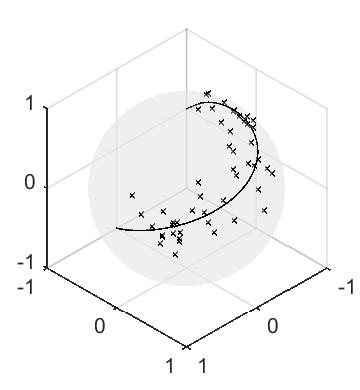}}
\subcaptionbox{High Noise, $n = 50$}[2.4in]{\includegraphics[scale = 0.45]{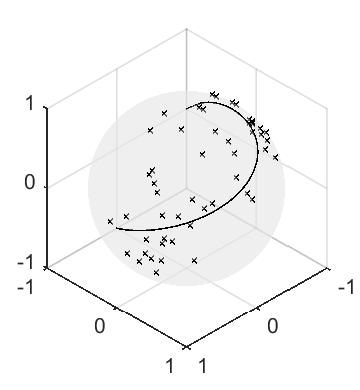}}
\caption{Sample simulation data sets of size $n = 50$ under low (left) and high (right) noise settings.  The true regression curve is shown by the solid line. \label{fig: sphere_samples}}
\end{figure}

For estimation, a grid of bandwidths $h \in (0.05, 0.3)$ was used for the smoothing, with $K$ being the Epanechnikov kernel;  this estimation was performed for both local Fr\'echet regression and the Nadaraya-Watson smoother.  The necessary optimization was performed using the trust regions algorithm as implemented in the ManOpt toolbox for Matlab \cp{boum:14}.  While we found this to be an adequate computational tool for our simulations, it may be necessary in some scenarios to implement a stochastic optimization scheme, such as the annealing algorithm \cp{yuan:12}.  We also implemented an alternative regression method for response data on a nonlinear manifold via smoothing splines \cp{su:12}, with code provided by one of the authors.

To compare local Fr\'echet regression with Nadaraya-Watson and spline smoothing, for each combination of noise setting and sample size, the mean integrated squared error (MISE) of each method was computed across a range of tuning parameters.  For our method and Nadaraya-Watson smoothing, this was done over the grid of bandwidths.  For the spline method, the three parameters and their values used for each simulation were $T = 50t + 1$, $t = 1,5,10,20$, $\epsilon = 10^{-l}$, $l = 2,\ldots,6$ and $\lambda = 10^k$, $k = -5,\ldots,5.$  The minimum MISE values are shown in Table~\ref{tab: sphere_mise}.  We see that local Fr\'echet regression outperforms the other methods in every setting, while the spherical spline method is not a close competitor.  Additionally, one can get a sense of the bias of the two Fr\'echet estimation techniques by taking Fr\'echet averages of the fits $\hmp(x)$ across simulations, for a grid of levels $x \in [0,1]$.  For example, these averaged local Fr\'echet and Nadaraya-Watson regression fits, using the bandwidths which minimize MISE, are shown in Figure~\ref{fig: sphere_fits} for the low noise setting with $n = 100$.  Again, the local Fr\'echet method is found to be superior, especially in terms of performance near the boundaries.

\begin{table}[h!]
\large
  \centering
  \caption{Best MISE values (multiplied by 100 for clarity) for local Fr\'echet regression (LF), Nadaraya-Watson (NW) and spherical spline (SS) fits.  In parentheses, the minimizing bandwidths $h$ are given for the first two methods, while the minimizing triples $(T, \epsilon,\lambda)$ are given for the spline method.\label{tab: sphere_mise}}
  \begin{tabular}{ll|c|c|c|}
  \hline
    Noise & $n$ & NW & LF & SS \\ \hline
    \multirow{3}{*}{Low} & 50 & 1.34(0.13) & 0.97(0.22) & 5.47 (51, 0.01, 0.01) \\
    & 100 & 0.74(0.13) & 0.51(0.19) & 5.46 (51, 0.01, 1000)\\
    & 200 & 0.45(0.09) & 0.31(0.15) & 5.42 (51,  0.01, 100) \\
    \hline
    \multirow{3}{*}{High} & 50 & 3.00(0.19) & 2.61(0.34) & 16.99 (251, 0.01, 1)\\
    & 100 & 1.73(0.16) & 1.41(0.26) & 16.26 (51, 0.001, 0.00001)\\
    & 200 & 0.99(0.13) & 0.76(0.21) & 13.32 (51, 0.001, 0.00001)
  \end{tabular}
\end{table}

\begin{figure}[h!]
  \centering
\subcaptionbox{Local Fr\'echet regression}[2.4in]{\includegraphics[scale = 0.45]{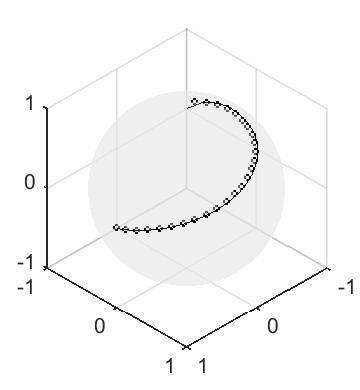}}
\subcaptionbox{Nadaraya-Watson smoothing}[2.4in]{\includegraphics[scale = 0.45]{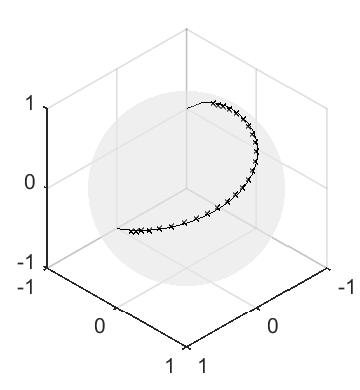}}
\caption{Fr\'echet-averaged regression curve fits for local Fr\'echet regression (left, circle markers) and Nadaraya-Watson smoothing (right, `x' markers), with true regression curve given for reference (solid).  These are from the low noise simulation with $n = 100$. \label{fig: sphere_fits}}
\end{figure}

\section{Discussion}
\label{sec: dis}

The proposed global and local Fr\'echet regression models %and Fr\'echet conditional mean 
are new tools for the analysis of random objects that are increasingly encountered in modern data analysis. They extend the fundamental notion of  a Fr\'echet mean to that of conditional Fr\'echet means.  We provide theoretical justifications including rates of convergence  for both global and local versions.  
The %\red{
pointwise %} 
rates of convergence are optimal for both global and local versions in the sense that in the special case of Euclidean objects 
they correspond to the known optimal rates, and under the same regularity conditions as satisfied for Euclidean objects, the rates remain the same for objects in general metric spaces; we demonstrate this to be the case for the Wasserstein space of distributions as one of many example spaces.

For practical applications of the global Fr\'echet regression model, we introduce the concept of the Fr\'echet coefficient of determination, $R_{\oplus}^2$, and explore its potential use for testing.  We focus in this paper on estimation, and future work will be needed to develop formal tests, confidence sets and predictor selection.
For the development of the local version of Fr\'echet regression it proved  necessary to revisit what is meant by the concept of a local regression and to clarify the nature of the target. In data examples, local Fr\'echet regression proved competitive with previously discussed local smoothing methods for special object spaces. An interesting special case for which we obtain limit distributions is the case of responses that live in a Hilbert space, such as functional data. Indeed, as pointed out by a referee, this model may prove useful in the case of responses which lie on a Hilbert manifold as an extrinsic regression technique in infinite dimensions (see Chapters 11 and 18 of \ci{patr:15}).

Conditional Fr\'echet means and the associated regression approaches have a wide range of applications that include responses that lie in a Riemannian manifold as a special case. For this case we show that our general and straightforward approach is not only theoretically competitive but also works well in simulations.  In this and other situations, uniqueness of the Fr\'echet mean is sometimes not guaranteed, e.g., in the  case of a uniform distribution on the sphere, in contrast to other cases that we explored, where it is unique.  For manifolds, it is often assumed that $\Om$ is complete in order to prove existence of a Fr\'echet mean \cp{bhat:03}.  Recently, \ci{le:14} showed that the cut locus of a minimizer of the Fr\'echet function necessarily has probability zero, lending further insight into the distributional limitations which allow for existence and uniqueness of Fr\'echet means.  When conditional Fr\'echet means are not unique, one may need to deal with sets of Fr\'echet means that consist of many elements \cp{patr:15,ziez:77}.
%While in the case of Riemannian manifolds we aim at the intrinsic Fr\'echet mean when using geodesic distances as metric,  the extrinsic mean (i.e., taking the mean in the ambient space and then projecting on the manifold) may have computational and other advantages \cp{lin:15}.
%Our examples demonstrate convincing results of Fr\'echet regression  for distributions, correlation matrices and data on smooth Riemannian manifolds as responses. Other examples might include networks, trees  or whole brain connectivity matrices, where it is often of interest to describe the dependency of such objects on covariates.  
Extensions that fall within the framework that we outline also include special types of linear models such as analysis of variance and, more generally, regression models that include indicators among the predictors, as well as polynomial regression models or models with interactions. %Other extensions for future research include the challenging situation where predictors are also random objects.

\iffalse
\begin{supplement}[id=supp]
  \sname{Supplement}
  \stitle{Proofs of Theoretical Results}
  %\slink[doi]{}
 % \sdatatype{.pdf}
 \sdescription{The supplement includes four sections of proofs.  The first section contains proofs of propositions verifying that our theoretical assumptions hold for the examples included in Section~\ref{sec: theory}.  The other three contain proofs for each of Sections~\ref{sec: theory}--\ref{sec: hilbert}.}\end{supplement}
\fi

\appendix

\section{Proofs of Theoretical Results}

\subsection{Propositions~\ref{prop: wass}--\ref{prop: manifold}}

\begin{Proposition}
  \label{prop: wass}
  The space $(\Om, d_W)$ defined in Example~\ref{exm: wass} satisfies assumptions (P0)--(P2) and (U0)--(U2).
\end{Proposition}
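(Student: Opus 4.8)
The plan is to exploit the classical isometric embedding of the Wasserstein space into $L^2[0,1]$ furnished by the quantile map $Q$, under which every assumption reduces to a statement about projections onto, and the metric entropy of, the convex set of quantile functions. Writing $q = Q(\om)$ and letting $Q(Y)$ denote the (random) quantile function of $Y$, the defining identity $\dw^2(G_1,G_2) = \Ltwonorm{Q(G_1)-Q(G_2)}^2$ turns $M(\om,x) = E[s(X,x)\dw^2(Y,\om)]$ into an affine-plus-quadratic functional of $q$. Since $\int s(z,x)\,\mathrm{d}F_X(z) = 1$, and likewise $n\inv\sn s_{in}(x) = 1$ for the empirical weights, expanding the square gives $M(\om,x) = \Ltwonorm{q - g_x}^2 + c_x$ with $g_x = E[s(X,x)Q(Y)]$ and $c_x$ constant in $q$; the empirical analogue is $M_n(\om,x) = \Ltwonorm{q - \hat g_x}^2 + \hat c_x$ with $\hat g_x = n\inv\sn s_{in}(x)Q(Y_i)$. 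Thus $\mp(x)$ and $\hmp(x)$ are the $L^2$-projections of $g_x$ and $\hat g_x$, respectively, onto $Q(\Om)$, the set of left-continuous nondecreasing functions lying in $L^2[0,1]$.

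First I would record that $Q(\Om)$ is a closed convex subset of the Hilbert space $L^2[0,1]$ (convexity holds because convex combinations of nondecreasing functions are nondecreasing). The projection of any point onto a nonempty closed convex set in a Hilbert space exists and is unique, which yields existence and uniqueness in (P0) and (U0) and, after checking that $g_x$ and $\hat g_x$ depend continuously (indeed affinely, through $\mu,\Sigma$ and their empirical versions) on $x$, the almost-sure uniform statements. The same projection structure delivers (P2) and (U2) with $\beta = \alpha = 2$ at once: the variational (obtuse-angle) characterization of $q^* = \mathrm{proj}_{Q(\Om)}(g_x)$ gives $\langle g_x - q^*, q - q^*\rangle \le 0$ for all $q \in Q(\Om)$, whence $\Ltwonorm{q - g_x}^2 - \Ltwonorm{q^* - g_x}^2 \ge \Ltwonorm{q - q^*}^2$, that is, $M(\om,x) - M(\mp(x),x) \ge \dw^2(\om, \mp(x))$ with constant $1$, uniformly in $x$. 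This inequality also immediately supplies the well-separation requirements in (P0) and (U0): taking the infimum over $\dw(\om,\mp(x)) > \eps$ bounds the gap below by $\eps^2 > 0$. Because the empirical weights likewise average to $1$, $M_n$ enjoys the identical quadratic minorant about $\hmp(x)$, giving the $M_n$-counterparts in (U0).

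The genuine obstacle is the entropy condition (P1) and its uniform strengthening (U1). Under the embedding, $B_\delta(\mp(x))$ is the set of quantile functions within $L^2$-distance $\delta$ of a fixed monotone function, and the required quantity is $\int_0^1 \sqrt{1 + \log N(\delta\epsilon, B_\delta(\mp(x)), \dw)}\,d\epsilon$. The naive route of bounding $N(\delta\epsilon, B_\delta, \dw)$ by the global covering number of the (bounded) monotone class, for which $\log N(\eta, Q(\Om), \Ltwonorm{\cdot}) \lesssim 1/\eta$, \emph{fails}: it produces $\log N \lesssim 1/(\delta\epsilon)$ and an integral of order $\delta^{-1/2}$, which diverges as $\delta \to 0$. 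The key lemma I would therefore establish is the \emph{local} entropy bound $\log N(u, B_\delta \cap Q(\Om), \Ltwonorm{\cdot}) \lesssim \delta/u$ for $u \le \delta$: monotone functions confined to an $L^2$-ball of radius $\delta$ have effective variation controlled by $\delta$, so that a dyadic decomposition over the location of the variation, together with the scaling identity $\log N(u, \{\text{monotone } [0,t]\to[0,M]\}, L^2[0,t]) \asymp M\sqrt{t}/u$, sums to order $\delta/u$. Granting this, $\log N(\delta\epsilon, B_\delta, \dw) \lesssim 1/\epsilon$, whence $\int_0^1 \sqrt{1 + C/\epsilon}\,d\epsilon < \infty$ uniformly in $\delta$, yielding (P1); since the bound $\delta/u$ does not depend on the center $Q(\mp(x))$, and hence not on $x$, the supremum over $\lVert x\rVert_E \le B$ changes nothing and (U1) follows identically.

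Finally I would assemble the pieces: with (P0)--(P2) and (U0)--(U2) verified, all curvature exponents equal to $2$, the statement is proved. As a sanity check I would observe that the non-expansiveness of the projection predicts $\dw(\hmp(x),\mp(x)) = \Ltwonorm{\mathrm{proj}(\hat g_x) - \mathrm{proj}(g_x)} \le \Ltwonorm{\hat g_x - g_x} = O_p(n^{-1/2})$, matching the $n^{-1/(2(\beta-1))} = n^{-1/2}$ rate delivered by Theorem~\ref{thm: rate} for $\beta = 2$. The only step requiring more than the Hilbertian projection formalism is the local entropy estimate for balls of monotone functions described above, which is where I expect essentially all of the work to lie.
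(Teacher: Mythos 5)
Your proposal follows essentially the same route as the paper's proof: the quantile embedding turns $M(\cdot,x)$ and $M_n(\cdot,x)$ into squared $L^2$ distances to $g_x$ and $\hat g_x$ plus constants, the minimizers are the projections onto the closed convex set $Q(\Om)$, and the obtuse-angle characterization of the projection yields (P0), (U0), (P2) and (U2) with $\alpha=\beta=2$ and constant $1$, exactly as in the Appendix. The only point of divergence is the local entropy bound $\log N(\delta\epsilon, B_\delta(\mp(x)), d_W) \lesssim \epsilon^{-1}$: you correctly diagnose that the global bound $1/(\delta\epsilon)$ is useless and that a $\delta$-independent estimate is required, but you propose to prove it via a dyadic decomposition of the variation and leave that lemma as a sketch, whereas the paper obtains it more cheaply by taking the $e^{K/\epsilon}$-cover of $B^2_1(Q)\cap\mathcal{Q}$ furnished by Theorem 2.7.5 of van der Vaart and Wellner and rescaling it by the factor $\delta$ about the center $Q$ --- a trick worth adopting, since it closes in two lines the step you identify as carrying essentially all of the work.
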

\begin{proof}
For any distribution $\om \in \Om$, let $Q(\om)$ be the corresponding quantile function.  Similarly, $Q\inv(h) \in \Om$ is the distribution corresponding to $h \in Q(\Om)$.  Let $\Ltwoip{\cdot}{\cdot}$, $\Ltwonorm{\cdot}$ and $\Ltwo{\cdot}{\cdot}$ be the $L^2$ inner product, norm and distance on $[0,1]$, respectively.  Since $E\left(|s(X, x)|\Ltwonorm{Q(Y)}\right)$ is finite, the Riesz Representation Theorem implies the existence of an element $g_x \in L^2[0,1]$ such that
\[
E(s(X, x)\Ltwoip{Q(Y)}{h}) = \Ltwoip{g_x}{h}
\]
for all $h \in Q(\Om)$.  Define $\hat{g}_x = n\inv\sn s_{in}(x) Q(Y_i)$.  Then properties of the $L^2$ distance imply
%\begin{equation}
%\label{eq: minL2}
\begin{align*}
M(\om, x) &= E\left(s(X, x)\Ltwo{Q(Y)}{g_x}^2\right) + \Ltwo{Q(w)}{g_x}^2, \\
M_n(\om, x) &= \frac{1}{n}\sum_{i = 1}^n s_{in}(x)\Ltwo{Q(Y_i)}{\hat{g}_x}^2 + \Ltwo{Q(\om)}{\hat{g}_x}^2,
\end{align*}
%\end{equation}
yielding the solutions
\[
\mp(x) = Q\inv\left(\argmin_{h \in Q(\Om)} \Ltwo{h}{g_x}^2\right), \quad \hmp(x) = Q\inv\left(\argmin_{h \in Q(\Om)} \Ltwo{h}{\hat{g}_x}^2\right),
\]
which exist and are unique by convexity of $Q(\Om)$ for any $x \in \R^p$, hence proving (P0) and (U0).  Additionally, $\mp(x)$ is characterized by
\[
\Ltwoip{g_x - Q(\mp(x))}{h - Q(\mp(x))} \leq 0
\]
for all $h \in Q(\Om)$.  Consequently, we may take $C = D = 1$, $\beta = \alpha = 2$ and $\eta$ and $\tau$ arbitrary in (P2) and (U2).   

%\red{
Lastly, we show that (U1) holds, which of course implies (P1).  Let $(\mc{Q}, d_2)$ be the space of quantile functions endowed with the $L^2$ metric.  For the remainder of this proof, for any $g \in L^2[0,1]$, $\omega \in \Omega$ and $\gamma > 0,$ $B^2_\gamma(g)$ refers to the $L^2$ ball of radius $\gamma$ centered at $g,$ while $B_\gamma(\omega)$ refers to the $\dw$ ball of radius $\gamma$ centered at $\omega.$ By Theorem 2.7.5 of van der Vaart and Wellner (1996), 
$$
N(\epsilon, \Omega, \dw) \leq N(\epsilon/2, \mc{Q}, d_2)\leq e^{K\epsilon\inv},
$$
where $K$ is independent of $\epsilon.$  For $Q \in \mc{Q}$, let \mbox{$\mathcal{C}_\epsilon(Q) = \{g_u: u\in U\}, \, U \subset \R,$} be a collection of $L^2$ functions such that $|U| = N(\epsilon, B^2_1(Q)\cap \mc{Q}, d_2) \leq e^{K\epsilon\inv}$ and the balls $B^2_\epsilon(g_u)$ cover $B^2_1(Q)\cap\mc{Q}.$  For $\delta > 0,$ define \mbox{$\tilde{g}_u = Q + \delta(g_u - Q)$} and 
$
C_{\delta\epsilon}(Q) = \{\tilde{g}_u: u \in U\},
$
so that the collection $B^2_{\delta\epsilon}(\tilde{g}_u),$ $u \in U$, forms a covering of $B^2_\delta(Q)\cap\mc{Q}$.  Thus, we have shown that
$$
\sup_{\omega \in \Omega} \log N(\delta\epsilon, B_\delta(\omega), \dw) = \sup_{Q \in \mc{Q}} \log N(\delta\epsilon, B^2_\delta(Q)\cap\mc{Q}, d_2) \leq K\epsilon\inv.
$$
To finish, observe that $\sup_{\lVert x \rVert_E \leq B} \log N(\delta\epsilon, B_\delta(m_\oplus(x)), \dw) \leq K\epsilon\inv,$ so for any $\delta > 0$ the integral in (U1) is bounded by
$$
\int_0^1\sqrt{1 + K\epsilon\inv}\;d \epsilon \leq 1 + 2\sqrt{K} < \infty.
$$
%}
\end{proof}

\begin{Proposition}
  \label{prop: correlation}
  The space $(\Om, d_F)$ defined in Example~\ref{exm: correlation} satisfies assumptions (P0)--(P2) and (U0)--(U2).
\end{Proposition}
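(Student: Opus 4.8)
The plan is to follow the template of the proof of Proposition~\ref{prop: wass}, exploiting that $\Om$ embeds as a convex, compact subset of the finite-dimensional Hilbert space of symmetric $r \times r$ matrices equipped with the Frobenius inner product $\Froip{\cdot}{\cdot}$. Convexity holds because a convex combination of correlation matrices is again symmetric, positive semidefinite and has unit diagonal; compactness holds because $\Om$ is closed and its entries lie in $[-1,1]$, so in particular $\Fronorm{Y} \leq r$ almost surely and every moment we encounter is finite without extra assumptions. Unlike the Wasserstein case, no Riesz representation step is required: $B(x) := E[s(X,x)Y]$ and $\hat{B}(x) := n\inv\sn s_{in}(x)Y_i$ exist as ordinary entrywise expectations and averages of bounded random matrices.

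First I would expand $\Fro{Y}{\om}^2 = \Fronorm{Y}^2 - 2\Froip{Y}{\om} + \Fronorm{\om}^2$ inside $M(\om,x)$ and $M_n(\om,x)$ and complete the square. The crucial observation is that the coefficient of $\Fronorm{\om}^2$ equals the total weight, which is exactly $1$: in the population $E[s(X,x)] = 1$, and empirically $n\inv\sn s_{in}(x) = 1$ because $\sum_i (X_i - \xbar) = 0$. Hence, up to an additive constant independent of $\om$,
\[
M(\om,x) = \Fro{\om}{B(x)}^2 + c(x), \qquad M_n(\om,x) = \Fro{\om}{\hat{B}(x)}^2 + c_n(x),
\]
so that $\mp(x)$ and $\hmp(x)$ are the metric projections of $B(x)$ and $\hat{B}(x)$ onto $\Om$, recovering the reduction stated in Section~\ref{ss: comp_corr}. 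Because $\Om$ is closed and convex, these projections exist and are unique (the latter almost surely), giving the existence and uniqueness parts of (P0) and (U0); the required separation conditions follow directly from the variational inequality for projections onto convex sets, which yields $M(\om,x) - M(\mp(x),x) \geq \Fro{\om}{\mp(x)}^2$ and the analogous pathwise bound for $M_n$. This same Pythagorean inequality furnishes (P2) and (U2) with $C = D = 1$, $\beta = \alpha = 2$ and $\eta,\tau$ arbitrary, exactly as in the Wasserstein proof.

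For the entropy conditions (P1) and (U1), I would use that $\Om$ lies in an affine subspace of dimension $r(r-1)/2$ (the free off-diagonal entries), so a standard volumetric bound gives $N(\delta\epsilon, B_\delta(\om), d_F) \leq (3/\epsilon)^{r(r-1)/2}$ uniformly over $\delta$, over the center $\om$, and hence over $x$. Plugging this into the integrand yields $\sqrt{1 + \log N(\delta\epsilon, B_\delta(\mp(x)), d_F)} \leq \sqrt{1 + \tfrac{r(r-1)}{2}\log(3/\epsilon)}$, whose integral over $\epsilon \in (0,1)$ is finite since $\int_0^1 \sqrt{\log(1/\epsilon)}\,d\epsilon < \infty$; this delivers (U1), and a fortiori (P1).

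I expect the only genuinely delicate point to be the role of the negative weights. Individually the $s(X,x)$ and $s_{in}(x)$ may be negative, so $M(\cdot,x)$ and $M_n(\cdot,x)$ are not expectations of honestly weighted squared distances, and $B(x),\hat{B}(x)$ need not be correlation matrices or even positive semidefinite. What rescues the argument is precisely that the total weight is $1$, keeping the coefficient of $\Fronorm{\om}^2$ positive and hence the objective strongly convex; this is what makes the completion-of-the-square step produce a bona fide projection problem onto the convex set $\Om$. Everything else is routine given the finite-dimensional Hilbert structure and compactness of $\Om$.
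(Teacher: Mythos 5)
Your proposal is correct and follows essentially the same route as the paper's proof: complete the square to reduce $M(\cdot,x)$ and $M_n(\cdot,x)$ to squared Frobenius distances from $B(x)$ and $\hat{B}(x)$, invoke convexity of $\Om$ and the variational inequality for metric projections to get (P0), (U0), (P2), (U2) with $\beta=\alpha=2$ and $C=D=1$, and use finite-dimensionality for the covering-number bound in (P1) and (U1). The only differences are cosmetic: you make explicit that the total weight equals one (which the paper leaves implicit in ``properties of the Frobenius distance''), and you use the sharper exponent $r(r-1)/2$ where the paper uses $r^2$; both suffice for the entropy integral.
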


\begin{proof}
Here, $Y$ is an $r\times r$ correlation matrix.  Denote the elements of $Y$ as $Y(j, k)$, $1\leq j,k \leq r$.  Let $\Froip{\cdot}{\cdot}$, $\Fronorm{\cdot}$ and $\Fro{\cdot}{\cdot}$ be the Frobenius inner product, norm and distance, respectively.  Let $B_{jk}(x) = E\left(s(X, x)Y(j, k)\right)$ and $\hat{B}_{jk}(x) = n\inv\sn s_{in}(x)Y_i(j, k)$.  Then properties of the Frobenius distance imply that
%\begin{equation}
%\label{eq: minL2}
\begin{align*}
M(\om, x) &= M(B(x), x) + \Fro{\om}{B(x)}^2, \\
M_n(\om, x) &= M_n(\hat{B}(x), x) + \Fro{\om}{\hat{B}(x)}^2,
\end{align*}
%\end{equation}
yielding the solutions
\[
\mp(x) = \argmin_{\om \in \Om} \Fro{\om}{B(x)}^2, \quad \hmp(x) = \argmin_{\om \in \Om} \Fro{\om}{\hat{B}(x)}^2,
\]
which exist and are unique by the convexity of $\Om$ for any $x \in \R^p$, hence proving (P0) and (U0).  Additionally, $\mp(x)$ is characterized by
\[
\Froip{B(x) - \mp(x)}{\om - \mp(x)} \leq 0
\]
for all $\om \in \Om$.  Consequently, we may take $\eta$ and $\tau$ arbitrary, $C = D = 1$ and $\beta = \alpha = 2$ in (P2) and (U2).   

%\red{
Lastly, since $\Om$ is a bounded subset of the larger finite-dimensional Euclidean space of $r\times r$ matrices, for any $\omega \in \Omega,$
$$
N(\delta\epsilon, B_\delta(\omega), d_F) = N(\epsilon, B_1(\omega), d_F) \leq K\epsilon^{-r^2}
$$
by an argument similar to that in Proposition~\ref{prop: wass}, where $K > 1$ depends on $r$ only.  Thus, the integral in (U1) is bounded by
\begin{align*}
\int_0^1 \sqrt{1 + \log K - r^2\log\epsilon}\;d \epsilon &= 1 + \log K + r\int_0^1\sqrt{-\log \epsilon}\;d \epsilon \\
&= 1 + \log K +r\int_1^\infty e^{-y}\sqrt{y}\; d\epsilon < \infty.
\end{align*}
using the substitution $y = -\log \epsilon.$  Since this bound does not depend on $\delta,$ (U1) holds and thus (P1) as well.
%}

\end{proof}

\begin{Proposition}
  \label{prop: manifold}
  The space $(\Om, d)$ defined in Example~\ref{exm: manifold} satisfies (P1) and (U1) %\red{
  provided the Riemannian metric is equivalent to the ambient Euclidean metric.
  %} 
   Let $T_{\om}\Om$ be the tangent bundle at $\om$ and ${\rm Exp}_{\om}$ and ${\rm Log}_{\om}$ be the exponential and logarithmic manifold maps at $\om$.  For $u \in T_{\om}\Om$, define
  \[
  g_{\om}(u) = M\left({\rm Exp}_{\om}(u), x\right), \quad h_{\om}(u) = M_n\left({\rm Exp}_{\om}(u), x\right).
  \]
  If (P0) holds and $g_{\mp(x)}''(0)$ is positive definite, then (P2) holds.  Similarly, if (U0) holds then
  \[
  \inf_{\lVert x\rVert_E \leq B} \lambda_{{\rm min}}(g_{\mp(x)}''(0)) > 0 \quad
  \]
  implies (U2), where $\lambda_{{\rm min}}(A)$ is the smallest eigenvalue of a square matrix $A$.
\end{Proposition}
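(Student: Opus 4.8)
The plan is to handle the entropy conditions (P1) and (U1) by a dimension-counting argument and the curvature conditions (P2) and (U2) by a second-order Taylor expansion of $M$ in normal coordinates centered at $\mp(x)$. Throughout I would work inside the injectivity radius of the exponential map, so that geodesic distances are represented faithfully by tangent-vector norms.

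First I would dispose of (P1) and (U1). Since the geodesic distance $d$ is assumed equivalent to the ambient Euclidean metric, there are constants $0 < c_1 \le c_2 < \infty$ with $c_1\lVert y - z\rVert_E \le d(y,z) \le c_2 \lVert y - z\rVert_E$ for all $y, z \in \Om$. A $d$-ball $B_\delta(\mp(x))$ is therefore contained in a Euclidean ball of radius $\delta/c_1$, while any $d$-ball of radius $\delta\epsilon$ contains a Euclidean ball of radius $\delta\epsilon/c_2$; since $\Om$ is an $r$-dimensional subset of Euclidean space, the number of $d$-balls of radius $\delta\epsilon$ needed to cover $B_\delta(\mp(x))$ is bounded by $K\epsilon^{-r}$ for a constant $K$ depending on $r, c_1, c_2$ but not on $\delta$ or $x$. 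Consequently $\log N(\delta\epsilon, B_\delta(\mp(x)), d) \le \log K - r\log\epsilon$, and the entropy integral is bounded by $\int_0^1 \sqrt{1 + \log K - r\log\epsilon}\, d\epsilon < \infty$, exactly as in the computation in Proposition~\ref{prop: correlation}. As this bound is uniform in $\delta$ and in $x$, both (U1) and hence (P1) follow.

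For (P2), I would parametrize a neighborhood of $\mp(x)$ through the exponential map, writing $\om = {\rm Exp}_{\mp(x)}(u)$ for $u \in T_{\mp(x)}\Om$ near the origin, so that $d(\om, \mp(x)) = \lVert u\rVert$ and $g_{\mp(x)}(u) = M({\rm Exp}_{\mp(x)}(u), x)$ attains its minimum at $u = 0$ by (P0). Using boundedness of $\Om$ to differentiate $M(\cdot, x)$ under the expectation, $g_{\mp(x)}$ is twice differentiable near $0$; since $0$ is its minimizer, $g'_{\mp(x)}(0) = 0$, and a second-order expansion gives $M(\om, x) - M(\mp(x), x) = \tfrac{1}{2} u^T g''_{\mp(x)}(0) u + o(\lVert u\rVert^2)$. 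If $g''_{\mp(x)}(0)$ is positive definite with smallest eigenvalue $\lambda > 0$, the quadratic term is at least $\tfrac{\lambda}{2}\lVert u\rVert^2$; choosing $\eta$ small enough that the remainder is dominated yields $M(\om, x) - M(\mp(x), x) \ge C\, d(\om, \mp(x))^2$ with $C = \lambda/4$, establishing (P2) with $\beta = 2$.

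The uniform statement (U2) follows the same Taylor argument, now required to hold uniformly over $\lVert x\rVert_E \le B$. Setting $\lambda_0 = \inf_{\lVert x\rVert_E \le B}\lambda_{\min}(g''_{\mp(x)}(0)) > 0$ supplies a uniform quadratic lower bound for the Hessian term, so taking $D = \lambda_0/4$ and $\alpha = 2$, it remains only to select a single $\tau > 0$ for which the remainder is uniformly negligible. I expect this uniform control of the Taylor remainder to be the main obstacle: one must show that the second derivative of $g_{\mp(x)}$ is equicontinuous in $u$ near $0$ across all $\lVert x\rVert_E \le B$. I would obtain this from boundedness of $\Om$ together with uniform bounds on the third-order behavior of the squared geodesic distance and on the weights $s(X, x)$ over the compact set $\{\lVert x\rVert_E \le B\}$, which also justifies differentiation under the expectation uniformly in $x$. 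With such a uniform remainder bound, the choice of $\tau$ can be made independent of $x$, completing (U2).
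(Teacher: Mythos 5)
Your proposal is correct and follows essentially the same route as the paper's own proof: (P1) and (U1) via the finite-dimensional covering-number bound transferred through the metric equivalence exactly as in the correlation-matrix case, and (P2)/(U2) via a second-order Taylor expansion of $g_{\mp(x)}$ in normal coordinates at its minimizer, using $d(\om,\mp(x)) = \lVert u \rVert$ inside the injectivity radius and the positive-definite Hessian to get $\beta = \alpha = 2$. The only cosmetic difference is that the paper writes the remainder in Lagrange form $u_x^T g_{\mp(x)}''(u_x^\ast) u_x$ and invokes continuity of the Hessian, whereas you use the Peano form with an $o(\lVert u \rVert^2)$ term; your explicit attention to the uniformity of the remainder over $\lVert x \rVert_E \leq B$ fills in what the paper dismisses as ``similar arguments.''
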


\begin{proof}
  %\red{
  Since $\Om$ is bounded and of finite dimension, (U1) follows by an argument similar to the last part of the previous proof due to metric equivalency, whence (P1) also follows.  %}  
  If (P0) holds, let $\eps$ be the injectivity radius at $\mp(x)$ and consider $\om$ such that $d(\om, \mp(x)) < \eps$. Taking $u_x = {\rm Log}_{\mp(x)}(\om)$,
  \[
  M(\om, x) - M(\mp(x), x) = g_{\mp(x)}(u) - g_{\mp(x)}(0) = u_x^Tg_{\mp(x)}''(u_x^\ast)u_x
  \]
  for some $u_x^{\ast}$ between $0$ and $u_x$.  Since $u_x^Tu_x = d^2(\om, \mp(x))$ and $g_{\mp(x)}$ is continuous, the condition on $g''_{\mp(x)}(0)$ implies (P2) with $\beta = 2$.  Similar arguments using the other conditions show that $\alpha = 2$ in (U2) is permissible.
\end{proof}

\subsection{Proofs of results in Section~\ref{sec: theory}}

Throughout, the symbol $\rightsquigarrow$ will denote weak convergence and the notation $l^\infty(\Om)$ denotes the space of bounded functions on $\Om$.  The ordinary Euclidean norm on $\R^p$ will be denoted by $\lVert \cdot \rVert_E$ and the Frobenius norm by $\lVert \cdot \rVert_F$.  For simplicity of notation, when $x$ is fixed, the dependence of objects such as $M$, $\mp$, etc. on $x$ will be dropped.

% Consistency Lemma
\begin{proof}[Proof of Theorem~\ref{lma: con}]
We first consider fixed $x \in \R^p$.  By Corollary 3.2.3 in van der Vaart and Wellner (1996), convergence of $\sup_{\om \in \Om} |M_n(\om) - M(\om)|$ to zero in probability is sufficient.  To do this, we show $M_n \rightsquigarrow M$ in $l^\infty(\Om)$ and apply 1.3.6 of van der Vaart and Wellner (1996).  This weak convergence is proved (see Theorem 1.5.4 of van der Vaart and Wellner (1996)) by showing that
\begin{enumerate}[i)]
\item $M_n(\om) - M(\om) = o_p(1)$ for all $\om \in \Om$ and
\item $M_n$ is asymptotically equicontinuous in probability, i.e. for all $\eps$, $\eta > 0$, there exists $\delta > 0$ such that
\[
\limsup_n P\left(\sup_{d(\om_1, \om_2) < \delta} |M_n(\om_1) - M_n(\om_2)| > \eps\right) < \eta.
\]
\end{enumerate}

Begin with i).  Set
\begin{equation}
\label{eq: si_weights}
s_i = \left[1 + (X_i - \mu)^T\Si\inv(x - \mu)\right]
\end{equation}
and define
%\begin{equation}
%\label{eq: tM_n}
\[
\tilde{M}_n(\om) = n\inv\sn s_id^2(Y_i, \om).
\]
%\end{equation}
Then, for all $\om \in \Om$, $E(\tilde{M}_n(\om)) = M(\om)$ and
\[
\Var(\tilde{M}_n(\om)) \leq n\inv\diam^2(\Om)E(s_i^2) \leq 2n\inv\diam^2(\Om)(1 + (x-\mu)^T\Si\inv(x-\mu)),
\]
so $\tilde{M}_n(\om) - M(\om) = o_p(1)$.  Also, setting
\begin{align}
  W_{0n}:= W_{0n}(x) &= \xbar\Si\inv(x - \xbar) - \mu^T\Si\inv(x - \mu), \label{eq: Ws}\\
  W_{1n}:= W_{1n}(x) &= \Si\inv(x - \mu) - \hSi\inv(x - \xbar),\nonumber
\end{align}
we have $s_{in} - s_i = W_{0n} + W_{1n}^TX_i$. Then
\[
M_n(\om) - \tilde{M}_n(\om) = \frac{W_{0n}}{n}\sn d^2(Y_i, \om) + \frac{W_{1n}^T}{n}\sn X_id^2(Y_i, \om) = o_p(1)
\]
for all $\om \in \Om$, since $W_{0n}$ and $\lVert W_{1n}\rVert_E$ are both $O_p(n^{-1/2})$.  Using the triangle inequality, we have proven i).  Hence, for any $k \in \mathcal{N}$ and $\om_1,\ldots,\om_k \in \Om$, we have $(M_n(\om_1), \ldots, M_n(\om_k)) \rightsquigarrow (M(\om_1), \ldots, M(\om_k))$.

Moving on to ii), for any $\gamma_1$, $\gamma_2 \in \Om$,
\begin{align*}
  |M_n(\gamma_1) - M_n(\gamma_2)| &\leq \frac{1}{n}\sn|s_{in}||d(\om_i, \gamma_1) - d(\om_i, \gamma_2)||d(\om_i, \gamma_1) + d(\om_i, \gamma_2)| \\
  &\leq 2\diam(\Om)d(\gamma_1, \gamma_2)\left(\frac{1}{n}\sn |s_i + W_{0n} + W_{1n}^TX_i|\right) \\
  &=O_p(d(\gamma_1, \gamma_2)),
\end{align*}
where the $O_p$ term is independent of $\gamma_1$ and $\gamma_2$.  Hence, $$\sup_{d(\om_1,\om_2)<\delta} |M_n(\om_1) - M_n(\om_2)| = O_p(\delta),$$ which proves ii).  This shows that $d(\mp(x), \hmp(x)) = o_p(1)$.

For the uniform result, consider the process $Z_n(x) = d(\hmp(x), \mp(x))$, so $Z_n(x) = o_p(1)$ for any $x \in \R^p$.  By Theorem 1.5.4 in van der Vaart and Wellner (1996), it suffices to show that, for any $S > 0$ and as $\delta \ra 0$,
\[
\limsup_{n \ra \infty}\quad P\left(\sup_{\substack{\lVert x - y \rVert_E < \delta \\ \lVert x \rVert_E, \lVert y \rVert_E \leq B}} |Z_n(x) - Z_n(y)| > 2S\right) \ra 0.
\]
Because $|Z_n(x) - Z_n(y)| \leq d(\mp(x),\mp(y)) + d(\hmp(x),\hmp(y))$, it suffices to show that $\mp(\cdot)$ is uniformly continuous for $\lVert x \rVert_E \leq B$ and that, as $\delta \ra 0$,
\begin{equation}
\label{eq: hmp_cont}
\limsup_{n\ra\infty} P\left(\sup_{\substack{\lVert x - y \rVert_E < \delta \\ \lVert x \rVert_E, \lVert y \rVert_E \leq B}} d(\hmp(x),\hmp(y)) > S\right) \ra 0.
\end{equation}

Let $\delta > 0$ and $x$, $y \in \R^p$ with $\lVert x - y\rVert_E < \delta$.  From the form of $M$, it is clear that $\sup_{\om \in \Om}|M(\om, x) - M(\om, y)| \ra 0$ as $\delta \ra 0$.  Assumption (U0) then implies that $\mp$ is continuous at $x$, and thus uniformly continuous over $\lVert x \rVert_E \leq B$.  To show (\ref{eq: hmp_cont}), let $\eps > 0$ and suppose $d(\hmp(x), \hmp(y)) > \eps$ with $\lVert x\rVert_E$, $\lVert y\rVert_E \leq B$.  Then (U0) and the form of $M_n$ imply that
\[
\zeta \leq \sup_{\substack{\lVert x - y \rVert_E < \delta \\ \lVert x \rVert_E, \lVert y \rVert_E \leq B}} \sup_{\om \in \Om}|M_n(\om, x) - M_n(\om, y)| = O_p(\delta),
\]
and the result follows when $\delta \ra 0$.
\end{proof}

\begin{proof}[Proof of Theorem~\ref{thm: rate}]
Let $x \in \R^p$ being fixed and write $\mp = \mp(x)$.  We follow the proof of Theorem 3.2.5 in van der Vaart and Wellner (1996) with a few modifications.  A key component of this proof is the process $V_n(\om) = M_n(\om) - M(\om)$.  Let $D_i(\om) = d^2(Y_i, \om) - d^2(Y_i, \mp)$ and $s_i$ be as in (\ref{eq: si_weights}).  Then
\begin{align}
|V_n(\om) - V_n(\mp)| &\leq \left|\frac{1}{n}\sn(s_{in} - s_i)D_i(\om)\right| \label{eq: V_expand}\\
&\hspace{1cm} + \left|\frac{1}{n}\sn\left(s_iD_i(\om) - E(s_iD_i(\om))\right)\right|. \nonumber
\end{align}
This quantity needs to be controlled for small $d(\om, \mp)$.  First, let $W_{0n}$ and $W_{1n}$ be as defined in (\ref{eq: Ws}).  To control the first term on the right-hand side of (\ref{eq: V_expand}), observe that
%\begin{equation}
%\label{eq: diff_bound1}
\[
\sup_{d(\om, \mp) < \delta} \left|\frac{1}{n}\sn(s_{in} - s_i)D_i(\om, x)\right| \leq \frac{2\diam(\Om)\delta}{n}\sum_{i = 1}^n |W_{0n}(x) + W_{1n}(x)^TX_i|,
\]
%\end{equation}
so that the left hand side is $O_p(\delta n^{-1/2})$.  Using this fact, we can define $$B_R = \left\{\sup_{d(\om, \mp) < \delta} \left|\frac{1}{n} \sn \left(s_{in} - s_i\right)D_i(\om,x)\right| \leq R\delta n^{-1/2}\right\}$$ for $R > 0$, so that $P(B_R^c) \ra 0$ as $R \ra \infty$.

Next, to control the second term on the right-hand side of (\ref{eq: V_expand}) uniformly over small $d(\om, \mp)$, define the functions $g_\om: \R^p\times\Om \ra \R$ as
\[
g_\om(z, y) = \left[1 + (z - \mu)^T\Si\inv(x - \mu)\right]d^2(y, \om)
\]
and the function class
\[
\mc{M}_\delta := \{g_\om - g_{\mp}:\; d(\om, \mp) < \delta\}.
\]
An envelope function for $\mc{M}_\delta$ is $G_\delta(z) = 2\diam(\Om)\delta|1 + (z - \mu)^T\Si\inv(x - \mu)^T|$, and $E(G_\delta(X)^2) = O(\delta^2)$.  Define $J = J(\delta)$ to be the entropy integral given in (P1), so that $J = O(1)$ as $\delta \rightarrow 0.$  Then, Theorems 2.7.11 and 2.14.2 of van der Vaart and Wellner (1996) and (P1) imply that, for small enough $\delta$,
\begin{equation}
\label{eq: diff_bound2}
E\left(\sup_{d(\om, \mp) < \delta} \left|\frac{1}{n}\sn\left(s_iD_i(\om, x) - E(s_iD_i(\om, x))\right)\right| \right) \leq \frac{J \left[E(G_\delta(X)^2)\right]^{1/2}}{\sqrt{n}},
\end{equation}
so that the left-hand side is $ O(\delta n^{-1/2})$.  Hence, combining (\ref{eq: V_expand}), (\ref{eq: diff_bound2}) and the definition of $B_R$, for small $\delta$,
%\begin{equation}
%\label{eq: exp_bound1}
\[
E\left(I_{B_R} \sup_{d(\om, \mp) < \delta}|V_n(\om) - V_n(\mp)|\right) \leq a\delta n^{-1/2},
\]
%\end{equation}
for some $a > 0$.

To finish, set $r_n = n^{\frac{\beta}{4(\beta-1)}}$ and $$S_{j,n}(x) = \{\om: 2^{j-1} < r_nd(\om,\mp(x))^{\beta/2} \leq 2^j\}.$$ Choose $\eta > 0$ to satisfy (P2) and also small enough that (P1) holds for all $\delta < \eta$ and set $\tilde{\eta} := \eta^{\beta/2}$.  For any integer $L$,
\begin{align}
&P\left(r_nd(\hmp, \mp)^{\beta/2} > 2^L\right) \leq P(B_R^c) + P(2d(\hmp, \mp) \geq \eta) \label{eq: bound} \\
&\hspace{1cm} + \sum_{\substack{j\geq L\\ 2^j \leq r_n\tilde{\eta}}}P\left(\left\{\sup_{\om \in S_{j,n}}|V_n(\om) - V_n(\mp)| \geq C\frac{2^{2(j-1)}}{r_n^2}\right\}\cap B_R\right),\nonumber
\end{align}
where $P(B_R^c) \ra 0$ as discussed previously and the second term goes to zero by Lemma~\ref{lma: con}.  For each $j$ in the sum on the right-hand side of (\ref{eq: bound}), we have \mbox{$d(\om, \mp) \leq \left(\frac{2^j}{r_n}\right)^{2/\beta} \leq \eta$}, so this sum is bounded by
\[
4aC\inv\sum _{\substack{j\geq L\\ 2^j \leq r_n\tilde{\eta}}} \frac{2^{2j(1-\beta)/\beta}}{r_n^{2(1-\beta)/\beta}\sqrt{n}} \leq 4aC\inv \sum_{j \geq L} \left(\frac{1}{4^{(\beta-1)/\beta}}\right)^j.
\]
Because $\beta > 1$, the last series converges and hence this probability can be made small by choosing $L$ large.  This proves the desired result that $d(\hmp, \mp) = O_p\left(r_n^{-2/\beta}\right) = O_p\left(n^{-\frac{1}{2(\beta - 1)}}\right)$.

For the uniform result over $\lVert x \rVert_E \leq B$, use the fact that $W_{0n}(x)$ and $\lVert W_{1n}(x) \rVert_E$ are both $O_p(n^{-1/2})$, uniformly over $\lVert x \rVert_E \leq B$. Then
\begin{equation}
\label{eq: diff_bound3}
\sup_{\lVert x \rVert_E \leq B} \sup_{d(\om, \mp(x)) < \delta} \left|\frac{1}{n}\sn (s_{in}(x) - s_i(x))D_i(\om, x)\right| = O_p(\delta n^{-1/2}).
\end{equation}
Then, define $$A_R = \left\{\sup_{\lVert x \rVert_E \leq B}\sup_{d(\om, \mp(x)) < \delta} \left|\frac{1}{n} \sn \left(s_{in}(x) - s_i(x)\right)D_i(\om, x)\right| \leq R\delta n^{-1/2}\right\}$$ for $R > 0$, so $P(A_R^c) \ra 0$. Using the definition of $s_i(x)$ in (\ref{eq: si_weights}), we can bound the second term on the right-hand side of (\ref{eq: V_expand}) by
\begin{align*}
&\lVert\Si\inv(x - \mu)\rVert_E \sum_{j = 1}^p\left|\frac{1}{n}\sn(X_{ij} - \mu_j)D_i(\om,x) - E((X_{ij} - \mu_j)D_i(\om,x))\right|\\
&\hspace{1.5cm} + \left|\frac{1}{n}\sn\left[D_i(\om, x) - E(D_i(\om, x))\right]\right|.
\end{align*}

%\red{
For $\delta > 0,$ set $U_\delta = \{(x, \omega): \lVert x\rVert_E \leq B, d(m_\oplus(x), \omega) < \delta\}$.  % and associated metric 
 %$$
%\tilde{d}((x_1,\om_1), (x_2, \om_2)) = \lVert x_1 - x_2\rVert_E + d(\om_1,\om_2).
%$$
Next, set
\[
h^j_{x,\om}(z, y) = \begin{cases} d^2(m_\oplus(x),y) - d^2(\om, y), & j = 0, \\ (z_j - \mu_j)\left[d^2(m_\oplus(x),y) - d^2(\om, y)\right], &  j = 1,\ldots p,\end{cases}
\]
and define the classes of functions $\mc{N}^j_\delta = \{h^j_{x, \om}: (x,\om) \in U_\delta\}.$
%\begin{align*}
%\[
%\mc{N}_\delta = \{g_{x,\om}: (x,\om) \in U_\delta\}, \quad \mc{P}^j_\delta = \{h^j_{x, \om}:\; (x,\om) \in U_\delta\},
%\]
%\end{align*}
Assumption (U2) can be used to show, for small $\lVert x_1 - x_2 \rVert_E,$ there is $L_B > 1$ such that $$d(m_\oplus(x_1),m_\oplus(x_2)) \leq L_B\lVert x_1 - x_2 \rVert_E^{2/\alpha}.$$  Then $\mc{N}^j_\delta$ are Lipschitz classes for small $\delta$ in the sense that
%\begin{align*}
\[
|h^j_{x_1,\om_1}(z,y) - h^j_{x_2,\om_2}(z,y)| \leq C_j\left[\lVert x_1 - x_2\rVert_E^{2/\alpha} + d(\omega_1,\omega_2)\right],
\]
%\end{align*}
where $C_j = 2L_B\diam(\Om)$ if $j = 0,$ and $C_j = 2L_B\diam(\Om)|z_j - \mu_j|$ otherwise.  For $\epsilon > 0,$ following van der Vaart and Wellner Theorem 2.7.11, the $\delta\epsilon$ bracketing numbers of these classes are all bounded by a multiple of $$(\epsilon\delta)^{-m}\sup_{\lVert x \rVert \leq B}N(c\delta\epsilon, B_\delta(m_\oplus(x)),d),$$ where $c$ and $m$ depend on the dimension $p$ and $\alpha$ only.  Then, letting $J$ be the integral in (U1), 
\begin{align*}
\tilde{J}(\delta) &= \int_0^1 \sqrt{1 + \log N_{[\,]}(\delta\epsilon, \mc{N}_\delta,\lVert \cdot \rVert_2)}\; d\epsilon = O\left(J + \int_0^1 \sqrt{-\log (\delta\epsilon)}\; d\epsilon\right) \\
&= O(-\log \delta)
\end{align*}
as $\delta \rightarrow 0.$%}

%\red{
Now, $H^0_\delta(z) = 2\diam(\Om)\delta$ and $H^j\delta(z) = 2\diam(\Om)\delta|z_j - \mu_j|$ are envelopes for $\mc{N}^j_\delta$, and $E\left((H^j_\delta(X))^2\right)$ are all $O(\delta^2)$.  Theorem  2.14.2 of van der Vaart and Wellner (1996) provides the bound
\begin{align}
&E\left(\sup_{\lVert x \rVert_E \leq B} \sup_{d(\om, \mp(x)) < \delta} \left|\frac{1}{n}\sn\left(s_i(x)D_i(\om, x) - E(s_i(x)D_i(\om, x))\right)\right| \right) \label{eq: diff_bound4} \\
&\hspace{1.5cm} = O\left(\frac{\tilde{J}(\delta)\delta}{\sqrt{n}}\right)  = O\left(\frac{\delta^\gamma}{\sqrt{n}}\right) \nonumber
\end{align}
for any $\gamma < 1.$  By combining (\ref{eq: V_expand}), (\ref{eq: diff_bound3}) and (\ref{eq: diff_bound4}), for small $\delta$ and any $\gamma < 1,$
\[
E\left(1_{A_R}\sup_{\lVert x \rVert_E \leq B}\sup_{d(\om, \mp(x)) < \delta} |V_n(\om) - V_n(\mp(x))|\right) \leq \frac{b\delta^\gamma}{\sqrt{n}}
\]
for some constant $b = b(\gamma)$. %}

%\red{
To finish, for any $\alpha' > \alpha,$ set $\gamma = 1 + \alpha - \alpha'$ and $q_n = n^{\frac{\alpha'}{4(\alpha' - 1)}}$.  Following the pointwise rate argument, one can show that $$\sup_{\lVert x \rVert_E \leq B}d(\hmp(x),\mp(x))^{\alpha'/2} = O_p(q_n\inv),$$ so that
$$
\sup_{\lVert x \rVert_E \leq B}d(\hmp(x),\mp(x)) = O_p(q_n^{-2/\alpha'}) = O_p\left(n^{-\frac{1}{2(\alpha'-1)}}\right).
$$
%}
\iffalse
and $$T_{j,n}(x) = \{\om: 2^{j-1} < q_nd(\om,\mp(x))^{\alpha/2} \leq 2^j\}.$$ Choose $\eta > 0$ to satisfy (U2) and also small enough that (U1) holds for all $\delta < \eta$ and set $\tilde{\eta} := \eta^{\alpha/2}$.  For any integer $L$,
\begin{align}
&P\left(\sup_{\lVert x \rVert_E \leq B} q_nd(\hmp(x), \mp(x))^{\alpha/2} > 2^L\right) \leq P(A_R^c) + P(2 \sup_{\lVert x \rVert_E\leq B}d(\hmp(x), \mp(x)) \geq \eta) \label{eq: bound_uni} \\
&\hspace{1cm} + \sum_{\substack{j\geq L\\ 2^j \leq q_n\tilde{\eta}}}P\left(\left\{\sup_{\lVert x \rVert_E \leq B}\sup_{\om \in T_{j,n}(x)}|V_n(\om) - V_n(\mp(x))| \geq C\frac{2^{2(j-1)}}{q_n^2}\right\}\cap A_R\right),\nonumber
\end{align}
where $P(A_R^c) \ra 0$ as discussed previously and the second term goes to zero by Lemma~\ref{lma: con}.  For each $j$ in the sum on the right-hand side of (\ref{eq: bound_uni}), we have \mbox{$d(\om, \mp) \leq \left(\frac{2^j}{q_n}\right)^{2/\alpha} \leq \eta$}, so this sum is bounded by
\[
4aC\inv\sum _{\substack{j\geq L\\ 2^j \leq q_n\tilde{\eta}}} \frac{2^{2j(1-\alpha)/\alpha}}{q_n^{2(1-\alpha)/\alpha}\sqrt{n}} \leq 4aC\inv \sum_{j \geq L} \left(\frac{1}{4^{(\alpha-1)/\alpha}}\right)^j.
\]
\fi
%This bound, together with (U2), the uniform result in Lemma~\ref{lma: con} and the fact that $P(A_R^c) \ra 0$, can be used to obtain the result, similar to the pointwise case.

\end{proof}

\subsection{Proofs of results in Section~\ref{sec: local}}

For completeness, we include the elementary results of auxiliary Lemma~\ref{lma: muj} and its proof, which are well-known (Fan and Gijbels 1996).  The quantities of interest are $\mu_j = E\left(K_h(X - x)(X-x)^j\right)$, $\tau_j(y) = E\left(K_h(X-x)(X-x)^j|Y=y\right)$ and the estimators $\hat{\mu}_j = \frac{1}{n}\sn K_h(X_i - x)(X_i-x)^j$, for $j = 0, 1, 2$.

\begin{Lemma}
  \label{lma: muj}
  Suppose (K0) and (L1) hold.  Then, $$\mu_j = h^j\left[f(x)K_{1j} + hf'(x) K_{1(j+1)} + O(h^2)\right]$$ and $\hat{\mu}_j = \mu_j + O_p((h^{2j-1}n\inv)^{1/2})$ for $j = 0, 1, 2$.  Additionally, $$\tau_j(y) = h^j\left[g_y(x)K_{1j} + hg_y'(x)K_{1(j+1)} + O(h^2)\right],$$ where the $O(h^2)$ term is uniform over $y \in \Om$.
\end{Lemma}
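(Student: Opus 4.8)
The plan is to handle all three claims by the standard kernel-moment expansion: reduce each integral to a polynomial in $h$ via the substitution $u = (z-x)/h$, then Taylor-expand the relevant density to second order and read off the coefficients as kernel moments $K_{1j}$.

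First I would write $\mu_j = \int_\R K_h(z-x)(z-x)^j f(z)\,dz$ and substitute $u = (z-x)/h$; the $h\inv$ coming from $K_h$ together with $(z-x)^j = (hu)^j$ produces an overall factor $h^j$, giving $\mu_j = h^j \int_\R K(u) u^j f(x+hu)\,du$. Under (L1) the marginal density $f$ is twice continuously differentiable, so Taylor's theorem with a second-order Lagrange remainder gives $f(x+hu) = f(x) + hu\,f'(x) + \tfrac{1}{2}h^2 u^2 f''(\xi)$ for some $\xi$ between $x$ and $x+hu$. Inserting this and using $K_{1j} = \int_\R K(u)u^j\,du$ isolates the leading terms $f(x)K_{1j} + h f'(x)K_{1(j+1)}$, while the remainder $\tfrac12 h^2 \int K(u)u^{j+2}f''(\xi)\,du$ is $O(h^2)$ because $f''$ is continuous (hence locally bounded) and $\int K(u)|u|^{j+2}\,du$ is finite for $j = 0,1,2$ by the moment conditions in (K0), in particular finiteness of $K_{14}$ for the $j=2$ case. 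The expansion for $\tau_j(y)$ is identical with $f$ replaced by the conditional density $g_y$.

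The one point requiring care, and the only real obstacle, is the claimed \emph{uniformity} of the $O(h^2)$ remainder in $\tau_j(y)$ over $y \in \Om$. Here I would invoke the hypothesis $\sup_{x,y}|g_y''(x)| < \infty$ from (L1): the Lagrange remainder $\tfrac12 h^2\int K(u)u^{j+2} g_y''(\xi_y)\,du$ is bounded in absolute value by $\tfrac12 h^2 \bigl(\sup_{x,y}|g_y''(x)|\bigr)\int K(u)|u|^{j+2}\,du$, a bound independent of $y$, so the remainder is $O(h^2)$ uniformly in $y$ as asserted.

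Finally, for $\hat{\mu}_j$ I would note that $E(\hat{\mu}_j) = \mu_j$ exactly, so only the fluctuation must be controlled. Since the summands are i.i.d., $\Var(\hat{\mu}_j) = n\inv \Var\bigl(K_h(X-x)(X-x)^j\bigr) \le n\inv E\bigl[K_h(X-x)^2(X-x)^{2j}\bigr]$, and the same change of variables yields $E\bigl[K_h(X-x)^2(X-x)^{2j}\bigr] = h^{2j-1}\int K(u)^2 u^{2j} f(x+hu)\,du = O(h^{2j-1})$, where finiteness of the integral for $j=0,1,2$ follows from the moment conditions in (K0), in particular finiteness of $K_{26}$ for the highest-order case. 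Thus $\Var(\hat{\mu}_j) = O(h^{2j-1}n\inv)$, and Chebyshev's inequality gives $\hat{\mu}_j = \mu_j + O_p\bigl((h^{2j-1}n\inv)^{1/2}\bigr)$, completing the proof.
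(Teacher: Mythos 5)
Your proposal is correct and follows essentially the same route as the paper's own (much terser) proof: a change of variables $u=(z-x)/h$, a second-order Taylor expansion of $f$ and $g_y$ with the uniformity of the remainder for $\tau_j(y)$ coming from $\sup_{x,y}|g_y''(x)|<\infty$, and the variance bound $E\bigl[K_h^2(X-x)(X-x)^{2j}\bigr]=O(h^{2j-1})$ plus Chebyshev for $\hat\mu_j$. The only slight imprecision is the phrase ``locally bounded'' for $f''$: if $K$ has unbounded support the Lagrange point $\xi$ need not stay near $x$, but global boundedness of $f''$ follows from (L1) since $f''=\int g_y''\,{\rm d}F_Y(y)$, so the argument stands.
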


\begin{proof}
  The statements regarding $\mu_j$  and $\tau_j(y)$ follow  from (K0) and (L1) using a second-order Taylor expansions of the densities $f$ and $g_y$.  Furthermore, $E(\hat{\mu}_j) = \mu_j$ is clear.  Next,
  \[
  E\left(K_h^2(X_i -x)(X_i-x)^{2j}\right) = h^{2j-1}\int K(u)u^{2j}f(x+hu)\;du = O(h^{2j-1}),
  \]
  so $\Var(\hat{\mu}_j) = O(h^{2j-1}n\inv)$, proving the result for the  $\hat{\mu}_j$.
\end{proof}

\begin{proof}[Proof of Theorem~\ref{thm: local_bias}]
First, we will show that ${\rm d}F_{Y|X}(x, y)/\dy = g_y(x)/f(x)$ for all $x$ such that $f(x) > 0$.  For any open set $U \subset \Om$, set
\[
a(x) = \int_U \frac{g_y(x)}{f(x)}\dy, \quad b(x) = \int_U {\rm d}F_{Y|X}(x, y).
\]
By assumption, both $a$ and $b$ are continuous.  Then, for any $z \in \R$,
\begin{align*}
\int_{-\infty}^z a(x) f(x)\; dx &= \int_U \left(\int_{-\infty}^z g_y(x)\; dx\right)\dy \\
&= \int_U\left(\int_{-\infty}^z {\rm d}F_{X|Y}(x, y)\right)\dy = \int_{(-\infty,z)\times U}\;\dF \\
&= \int_{-\infty}^z \left(\int_U {\rm d}F_{Y|X}(x, y)\right) f(x)\; dx = \int_{-\infty}^z b(x)f(x)\; dx,
\end{align*}
proving the claim.

Next, using Lemma~\ref{lma: muj}
\begin{align*}
\int s(z, x, h){\rm d}F_{X|Y}(z|y) &= \frac{\mu_2\tau_0(y) - \mu_1\tau_1(y)}{\sigma_0^2} = \frac{g_y(x)}{f(x)} + O(h^2),
\end{align*}
where the error term is uniform over $y \in \Om$. Hence, using the previously established fact that ${\rm d}F_{Y|X}(x, y)/\dy = g_y(x)/f(x)$,
\begin{align*}
\tL(\om) &= \int d^2(y, \om)s(z, x, h)\dFz = \int d^2(y, \om)\frac{g_y(x)}{f(x)}\dy + O(h^2) \\
&= \int d^2(y, \om) {\rm d}F_{Y|X}(x, y) + O(h^2) = M_\oplus(\om, x) + O(h^2),
\end{align*}
where the error term is now uniform over $\om \in \Om$.  By (L0), we then have $d(\mp(x), \tlp(x)) = o(1)$ as $h = h_n \ra 0$.

Next, define $r_h = h^{-\frac{\beta_1}{\beta_1 - 1}}$ and set $S_{j, n} = \{\om: 2^{j-1} < r_hd(\om, \mp(x))^{\beta_1/2} \leq 2^j\}$. Let $I$ denote the indicator function.  Then, for any $M > 0$, following similar arguments as the proof of Theorem~\ref{thm: rate} and using (L2), there exists $a > 0$ such that, for large $n$,
\begin{align*}
I\left(r_hd(\tlp(x), \mp(x))^{\beta_1/2} > 2^M\right) &\leq a\sum_{j \geq M} \frac{2^{2j(1 - \beta_1)/\beta}}{r_h^{2(1-\beta_1)/\beta_1}h^{-2}} \\
&\leq a\sum_{j \geq M}\left(\frac{1}{4^{(\beta_1-1)/\beta_1}}\right)^j ,
\end{align*}
which converges since $\beta_1 > 1$.  Thus, for some $M > 0$, we have $$d(\tlp(x), \mp(x)) \leq 2^{2M/\beta_1}h^{2/(\beta_1 - 1)}$$ for large $n$.
\end{proof}

\begin{Lemma}
  \label{lma: local_consistent}
  Suppose (K0) and (L0) hold, $\Om$ is bounded and that $h \ra 0$ and $nh \ra \infty$. Then $d(\tlp(x), \hlp(x)) = o_p(1)$.
\end{Lemma}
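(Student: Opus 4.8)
The plan is to treat $\hlp(x)$ as an $M$-estimator of the moving target $\tlp(x)$ and to mirror the structure of the proof of Theorem~\ref{lma: con}. Since $x$ is fixed throughout, I suppress it. The key reduction is to establish the uniform convergence $\sup_{\om \in \Om}|\hL(\om) - \tL(\om)| = o_p(1)$; once this is in hand, the uniform well-separation of the minimum supplied by the second display in (L0) yields consistency by a standard argmin argument. To isolate the effect of estimating the localization constants $\mu_j$ and $\sigma_0^2$, I would introduce the infeasible weights $s_i := s(X_i, x, h) = \sigma_0^{-2}K_h(X_i - x)[\mu_2 - \mu_1(X_i - x)]$ and the intermediate process $\tilde{M}_n(\om) = n\inv\sn s_i d^2(Y_i, \om)$, which satisfies $E\tilde{M}_n(\om) = \tL(\om)$. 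Splitting $\sup_\om|\hL(\om) - \tL(\om)|$ by the triangle inequality into $\sup_\om|\tilde{M}_n(\om) - \tL(\om)|$ and $\sup_\om|\hL(\om) - \tilde{M}_n(\om)|$ separates the stochastic fluctuation of the weighted average from the error in plugging in $\hat{\mu}_j$ and $\hat{\sigma}_0^2$.

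For the first piece I would follow the two-step scheme (finite-dimensional convergence plus asymptotic equicontinuity) used in Theorem~\ref{lma: con}. Lemma~\ref{lma: muj} together with elementary kernel-moment computations gives $\sigma_0^2 = O(h^2)$, while $E(s_i^2) = O(h\inv)$ and $E|s_i| = O(1)$. Hence $\Var(\tilde{M}_n(\om)) \le n\inv\diam^2(\Om)E(s_i^2) = O((nh)\inv) = o(1)$ because $nh \ra \infty$, which delivers pointwise convergence, and the Lipschitz bound $|\tilde{M}_n(\gamma_1) - \tilde{M}_n(\gamma_2)| \le 2\diam(\Om)d(\gamma_1,\gamma_2)\,n\inv\sn|s_i| = O_p(d(\gamma_1,\gamma_2))$ supplies the equicontinuity, so that $\sup_\om|\tilde{M}_n - \tL| = o_p(1)$.

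The second piece is the crux and the main obstacle: it requires controlling $n\inv\sn|s_{in}(x,h) - s_i|$ even though the normalizing quantity $\hat{\sigma}_0^2$ is itself of order $h^2$ and must be inverted, so that small errors in $\hat{\mu}_j$ are amplified. Writing $s_{in}(x,h) - s_i = \hat{\sigma}_0^{-2}K_h(X_i-x)[(\hat{\mu}_2 - \mu_2) - (\hat{\mu}_1 - \mu_1)(X_i - x)] + B_i(\sigma_0^2 - \hat{\sigma}_0^2)/(\hat{\sigma}_0^2\sigma_0^2)$ with $B_i = \sigma_0^2 s_i$, I would bound each resulting average using the rates $\hat{\mu}_j - \mu_j = O_p((h^{2j-1}n\inv)^{1/2})$ from Lemma~\ref{lma: muj}, together with $n\inv\sn K_h(X_i-x) = O_p(1)$, $n\inv\sn K_h(X_i-x)|X_i-x| = O_p(h)$, $n\inv\sn|B_i| = O_p(h^2)$, and the consequent $|\sigma_0^2 - \hat{\sigma}_0^2| = O_p(h^{3/2}n^{-1/2})$. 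After dividing by the $O_p(h^2)$ factors, every term collapses to order $(nh)^{-1/2} = o_p(1)$, and the bound is uniform in $\om$ since $d^2(Y_i,\om) \le \diam^2(\Om)$; thus $\sup_\om|\hL - \tilde{M}_n| = o_p(1)$.

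Combining the two pieces gives $\sup_\om|\hL(\om) - \tL(\om)| = o_p(1)$. To conclude, fix $\eps > 0$; the second line of (L0) provides $\delta > 0$ with $\inf_{d(\om,\tlp)>\eps}\tL(\om) \ge \tL(\tlp) + \delta$ for all large $n$. Since $\hL(\hlp) \le \hL(\tlp)$ by definition of $\hlp$, the event $\{d(\hlp,\tlp) > \eps\}$ forces $\delta \le \tL(\hlp) - \tL(\tlp) \le 2\sup_\om|\hL - \tL|$, whose probability tends to zero. Hence $d(\tlp(x),\hlp(x)) = o_p(1)$, as claimed. The only genuinely new element beyond Theorem~\ref{lma: con} is the rate bookkeeping in the third paragraph, where the role of the condition $nh \ra \infty$ is exactly to guarantee that the inflation by $\hat{\sigma}_0^{-2}$ is offset.
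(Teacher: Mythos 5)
Your proposal is correct and follows essentially the same route as the paper: the same decomposition of $\hL(\om)-\tL(\om)$ into the weight-estimation error $n\inv\sn(s_{in}(x,h)-s_i(x,h))d^2(Y_i,\om)$ and the centered fluctuation of the infeasible process, the same moment bounds from Lemma~\ref{lma: muj} showing each piece is $O_p((nh)^{-1/2})$ uniformly over the bounded space $\Om$, and the same appeal to the well-separation in (L0) to pass from uniform convergence of the criterion to convergence of the minimizers. Your expansion of $s_{in}(x,h)-s_i(x,h)$ groups the terms slightly differently from the paper's $W_{0n},W_{1n}$ formulation in (\ref{eq: Ws2}), but the two are algebraically equivalent and yield identical rates.
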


\begin{proof}
  We will show that $\tL - \hL \rightsquigarrow 0$ in $l^\infty(\Om)$.  Together with (L0), this will prove the result.

  To begin, write $s_i(x, h) = \sigma_0^{-2}K_h(X_i - x)\left[\mu_2 - \mu_1(X_i-x)\right]$.  Then the difference $\hL(\om) - \tL(\om)$ can be written as
  \begin{align}
  &\frac{1}{n}\sn\left[s_{in}(x, h) - s_i(x, h)\right]d^2(Y_i, \om)   \label{eq: L_expand}\\
  &\hspace{1.5cm} + \frac{1}{n}\sn\left(s_i(x, h)d^2(Y_i, \om) - E\left[s_i(x, h)d^2(Y_i, \om)\right]\right).\nonumber
  \end{align}
  Observe that $s_{in}(x, h) - s_i(x, h) = W_{0n}K_h(X_i - x) + W_{1n}K_h(X_i - x)(X_i-x)$, where
  \begin{equation}
  \label{eq: Ws2}
    W_{0n} = \frac{\hat{\mu}_2}{\hat{\sigma}_0^2} - \frac{\mu_2}{\sigma_0^2}, \quad W_{1n} = \frac{\hat{\mu}_1}{\hat{\sigma}_0^2} - \frac{\mu_1}{\sigma_0^2}
  \end{equation}
  Using the results of Lemma~\ref{lma: muj}, it follows that $W_{0n} = O_p((nh)^{-1/2})$ and $W_{1n} = O_p((nh^3)^{-1/2})$.  Since
  \begin{align*}
  E\left[K_h(X_i-x)(X_i-x)^jd^2(Y_i, \om)\right] & = O(h^j) \\%&= \int d^2(y,\om)\left[\int K_h(z-x)(z-x)^jg_y(z)\dz\right]\dy = O(h^j), \\
  E\left[K_h^2(X_i-x)(X_i-x)^{2j}d^4(Y_i,\om)\right] &= O(h^{2j-1}) %\int d^4(y,\om)\left[\int K_h^2(z-x)(z-x)^{2j}g_y(z)\dz\right]\dy = O(h^{2j-1}),
  \end{align*}
  it follows that the first term in (\ref{eq: L_expand}) is $O_p((nh)^{-1/2})$.  One also finds that $E(s_i^2(x,h)) = O(h\inv)$, so that the second term in (\ref{eq: L_expand}) is also $O_p((nh)^{-1/2})$

  So far, we have shown that $\tL(\om) - \hL(\om) = o_p(1)$ for any $\om \in \Om$, since $nh \ra \infty$.  According to Theorem 1.5.4 in van der Vaart and Wellner (1996), the last thing we need to show is that, for any $\eta > 0$
  \[
  \limsup_n P\left(\sup_{d(\om_1, \om_2) < \delta}|(\tL - \hL)(\om_1) - (\tL - \hL)(\om_2)| > \eta\right) \ra 0 \quad \textrm{as} \; \delta \ra 0.
  \]
  Since $E(|s_i(x,h)|) = O(1)$ and $E(s_i^2(x,h)) = O(h\inv)$, $n\inv\sn |s_{in}(x,h)| = O_p(1)$.  Then, $|\hL(\om_1) - \hL(\om_2)| \leq 2\diam(\Om)d(\om_1, \om_2)n\inv\sn|s_{in}(x, h)| = O_p(d(\om_1, \om_2))$.  Similarly, $|\tL(\om_1) - \tL(\om_2)| = O(d(\om_1, \om_2))$, which verifies the above.

\end{proof}

\begin{proof}[Proof of Theorem~\ref{thm: local_frechet}]
We adopt similar arguments as in the proof of Theorem~\ref{thm: rate}, with some adjustments.  Set $s_i(x,h) = K_h(X_i-x)\frac{\mu_0 - \mu_1(X_i-x)}{\sigma_0^2}$ and define $T_n(\om) = \hL(\om) - \tL(\om)$.  Letting $$D_i(\om,x) = d^2(Y_i, \om) - d^2(Y_i, \tlp(x)),$$
we have
\begin{align}
|T_n(\om) - T_n(\tlp(x))| &\leq \left|\frac{1}{n}\sn\left[s_{in}(x, h) - s_i(x, h)\right]D_i(\om,x)\right| \label{eq: T_expand}\\
&\hspace{1cm} + \left|\frac{1}{n}\sn\left(s_i(x,h)D_i - E\left[s_i(x, h)D_i(\om,x)\right]\right)\right|.\nonumber
\end{align}
Since $W_{0n}$ and $W_{1n}$ from (\ref{eq: Ws2}) are $O_p((nh)^{-1/2})$ and $O_p((nh^3)^{-1/2})$, respectively, and using the fact that $|D_i(\om,x)|\leq 2\diam(\Om)d(\om,\tlp(x))$,the first term on the right-hand side of (\ref{eq: T_expand}) is $O_p(d(\om, \tlp(x)))$, where the $O_p$ term is independent of $\om$ and $\tlp(x)$.  Thus, we can define $$B_R = \left\{\sup_{d(\om, \tlp(x)) < \delta}\left|\frac{1}{n}\sn\left[s_{in}(x, h) - s_i(x, h)\right]D_i(\om, x)\right| \leq R\delta(nh)^{-1/2}\right\}$$ for $R > 0$, so that $P(B_R^c) \ra 0$.

Next, to control the second term on the right-hand side of (\ref{eq: T_expand}), define the functions $g_\om: \R\times\Om \ra \R$ by
\[
g_\om(z, y) = \frac{1}{\sigma_0^2}K_h(z-x)[\mu_2 - \mu_1(z-x)]d^2(y, \om)
\]
and the corresponding function class
\[
\mathcal{M}_{n\delta} =\{g_\om - g_{\tlp(x)}: \; d(\om, \tlp(x)) < \delta\}.
\]
An envelope function for $\mathcal{M}_{n\delta}$ is
\[
G_{n\delta}(z) = \frac{2\diam(\Om)\delta}{\sigma_0^2}K_h(z-x)\left|\mu_2 - \mu_1(z-x) \right|,
\]
and $E(G_{n\delta}^2(X)) = O(\delta^2h\inv)$.  Using this fact together with Theorems 2.7.11 and 2.14.2 of van der Vaart and Wellner (1996) and (P1), for small $\delta$,
\[
E\left(\sup_{d(\om, \tlp(x))<\delta}\left|\frac{1}{n}s_i(x,h)D_i(\om,x) - E\left[s_i(x,h)D_i(\om,x)\right]\right|\right) = O(\delta(nh)^{-1/2}).
\]
Combining this with (\ref{eq: T_expand}) and the definition of $B_R$,
%\begin{equation}
%\label{eq: exp_bound2}
\[
E\left(I_{B_R}\sup_{d(\om, \tlp(x)) < \delta} |T_n(\om) - T_n(\tlp(x))|\right) \leq \frac{a\delta}{(nh)^{1/2}},
\]
%\end{equation}
where $I_{B_R}$ is the indicator function for the set $B_R$ and $a$ is a constant depending on $R$ and the entropy integral in (P1).

To finish, set $t_n = (nh)^{\frac{\beta_2}{4(\beta_2 - 1)}}$ and define $$S_{j, n}(x) = \{\om: \; 2^{j-1} < t_nd(\om, \tlp(x))^{\beta_2/2} \leq 2^j\}.$$ Choose $\eta_2$ satisfying (L2) and such that (P1) is satisfied for any $\delta < \eta_2$.  Set $\tilde{\eta} := (\eta_2/2)^{\beta_2/2}$.  For any integer $M$,
\begin{align}
&P\left(t_nd(\tlp(x), \hlp(x))^{\beta/2} > 2^M\right) \leq P(B_R^c) + P(2d(\tlp(x), \hlp(x)) > \eta) \label{eq: bound2}\\
& \hspace{1cm} + \sum_{\substack{j\geq M\\ 2^j \leq t_n\tilde{\eta}}}P\left(\left\{\sup_{\om \in S_{j,n}}|T_n(\om) - T_n(\tlp(x))| \geq C\frac{2^{2(j-1)}}{t_n^2}\right\}\cap B_R\right),\nonumber
\end{align}
where the last term goes to zero for any $\eta > 0$ by Lemma~\ref{lma: local_consistent}.  Since $$d(\om, \tlp(x)) < (2^j/t_n)^{2/\beta_2}$$ on $S_{j,n}(x)$, this implies that the sum on the right-hand side of (\ref{eq: bound2}) is bounded by
\[
4aC\inv\sum_{\substack{j \geq M \\ 2^j \leq t_n\tilde{\eta}}} \frac{2^{2j(1 - \beta_2)/\beta_2}}{t_n^{2(1-\beta_2)/\beta_2}\sqrt{nh}} \leq 4aC\inv\sum_{j \geq M} \left(\frac{1}{4^{(\beta_2-1)/\beta_2}}\right)^j,
\]
which converges since $\beta_2 > 1$.  Hence, $$d(\hlp(x), \tlp(x)) = O_p(t_n^{2/\beta_2}) = O_p\left[ (nh)^{-\frac{1}{2(\beta_2 - 1)}}\right].$$
\end{proof}

The proof of Corollary~\ref{cor:cor2} is straightforward and is omitted.

\subsection{Proofs of results in Section~\ref{sec: hilbert}}

\begin{proof}[Proof of Theorem~\ref{thm: hilbert1}]
Recall the notation introduced in Section~\ref{sec: hilbert}.  Observe that, when $\om$ ranges over $\Om$, the object $E\ip{Y}{\om}$ is a continuous linear operator under the assumption $E\lVert Y\rVert_\Om^2 < \infty$, so the existence and uniqueness of $\gamma_0$ follows by the Riesz representation theorem.  The same is true for the operator $E\ip{(X - \mu)Y}{\alpha}_p$, hence the existence and uniqueness of $\gamma_1$.  Next
\begin{align}
  E(s(X, x)\ip{Y}{\om}) &= E \ip{Y}{\om} + E\left[(X - \mu)^T\Si\inv(x - \mu)\ip{Y}{\om}\right]   \label{eq: ip_eq}
 \\
  &= \ip{\gamma_0}{\om} + E\ip{(X - \mu)Y}{(x - \mu)^T\Si\inv \om}_p \nonumber\\
  &= \ip{\beta_0}{\om} + \ip{(x - \mu)^T\Si\inv\gamma_1}{\om} \nonumber\\
  &= \ip{\beta_0 + \beta_1^T(x-\mu)}{\om}.\nonumber
\end{align}
Set $\tilde{\om} = \beta_0 + \beta_1^T(x-\mu)$ as in (\ref{eq: mp_hilsol}) and observe that $E (s(X, x)) = 1$.  Then, by expanding the square, we have
\begin{align*}
M(\om, x) &= E\left(s(X,x)\lVert Y - \tilde{\om}\rVert_\Om^2 + 2s(X,x)\ip{Y - \tilde{\om}}{\tilde{\om} - \om}\right. \\
&\hspace{3cm} + \left.s(X,x)\lVert \tilde{\om} - \om\rVert_\Om^2\right) \\
&= M(\tilde{\om}, x) + 2\Big(E(s(X,x)\ip{Y}{\tilde{\om} - \om}) - \ip{\tilde{\om}}{\tilde{\om} - \om}\Big) + \lVert \tilde{\om} - \om \rVert_\Om^2.\\
\end{align*}
Hence, the middle term vanishes using (\ref{eq: ip_eq}) and we must have $\mp(x) = \tilde{\om}$.  As a weighted least squares problem, the empirical solution to (\ref{eq: mp_est}) is clearly \mbox{$\hmp(x) = n\inv \sn s_{in}(x)Y_i$}, which gives the proposed solution in (\ref{eq: mp_est_hilsol}).
\end{proof}

\begin{proof}[Proof of Theorem~\ref{thm: hilbert2}]
First, let $q = p+1$ and define $\beta = (\beta_0, \beta_1^T)^T$ and $\hbeta = (\hbeta_0, \hbeta_1^T)^T$.  By Theorem 1.8.4 in chapter 1.8 of van der Vaart and Wellner (1996), we only need to prove that, for all $\alpha \in \Om^q$, \mbox{$\sqrt{n}\ip{\hbeta - \beta}{\alpha}_q \rightsquigarrow \ip{\mathcal{G}}{\alpha}_q$} for the limiting process $\mathcal{G}$ and that $\sqrt{n}(\hbeta - \beta)$ is asymptotically finite dimensional.  The latter condition follows from the fact that $\bar{X} - \mu$ and $\lVert \Si\inv - \hSi\inv\rVert_F$ are $O_p(n^{-1/2})$ and by the assumptions on the moments of $\lVert Y\rVert_\Om$.  We will now prove the first condition.  This will require the definitions below, for any $m\times p$ matrix $A$ and symmetric $p\times p$ matrix $S$:
\begin{align*}
\vecop(A) = (A_{11},\ldots,A_{m1}, A_{12},\ldots, A_{m2},\ldots, A_{1p}, \ldots, A_{mp})^T, \\
\vechop(S) = (A_{11},\ldots,A_{p1},A_{22},\ldots,A_{p2},\ldots, A_{p,p-1}, A_{pp}).
\end{align*}

Let $\alpha \in \Om^q$ be fixed.  Define the $p\times p$ matrices $W_i = X_iX_i^T$ and $\eta_i$ with elements $(\eta_i)_{jk} = \ip{X_{ij}Y_i}{\alpha_{k+1}}$, and the vector $\xi_i \in \R^q$ with elements $\xi_{ij} = \ip{Y_i}{\alpha_j}$.  Also, define the vector $\rho \in \R^q$ with elements $\rho_j = \ip{\gamma_0}{\alpha_j}$ and the $p\times p$ matrix $\tau$ with elements $\tau_{jk} = \ip{\gamma_{1j}}{\alpha_{k+1}} + \mu_j\rho_{k+1}$.  Let
\begin{equation}
\label{eq: Z_vector}
Z_i = (X_i^T,\vechop(W_i)^T,\xi_i^T,\vecop(\eta_i)^T)^T.
\end{equation}
Then, $Z_1,\ldots,Z_n$ are independently and identically distributed with expected value
%\begin{equation}
%\label{eq: Z_exp}
\[
E(Z_i) = \left(\mu^T, \vechop(\Sigma +\mu\mu^T)^T, \rho^T, \vecop(\tau)^T\right)^T.
\]
%\end{equation}  Let $C_\alpha = \Var(Z_i)$.  Then, by the central limit theorem,
\[
\sqrt{n}\left[\bar{Z} - E(Z_1)\right] \rightsquigarrow \mathcal{N}(0, C_\alpha).
\]

Next, for $a\in\R^p$, $c \in \R^q$, $G$ a symmetric $p\times p$ matrix and $H$ a $p\times p$ matrix, define the function
\[
g(a,\vechop(G),c,\vecop(H)) = c_1 + \sum_{j = 1}^p \sum_{k = 1}^p\left[\left(G - aa^T\right)\inv\right]_{jk}\left(H_{jk} - a_jc_{k+1}\right).
\]
Then
\begin{align*}
  g(E(Z_1)) &= \ip{\gamma_0}{\alpha_1} + \sum_{j = 1}^p\sum_{k = 1}^p (\Sigma\inv)_{jk}\ip{\gamma_{1j}}{\alpha_{k+1}} \\
  &= \ip{\beta_0}{\alpha_1} + \sum_{k = 1}^p\ip{\beta_{1k}}{\alpha_{k+1}} = \ip{\beta}{\alpha}_q
\end{align*}
and, similarly, $g(\bar{Z}) = \ip{\hat{\beta}}{\alpha}_q$.  Let $l_\alpha$ be the gradient vector of $g$ evaluated at $E(Z_1)$.  The elements of $l_\alpha$ can be computed as follows.  Let $\otimes$ denote the Kronecker product, $e_l\in \R^p$ be the vector of zeros with a single 1 in the $l$th entry, and $J^{lm}$ be the $p\times p$ matrix of zeros with a single 1 in the $(l, m)$th entry.  Set
\begin{align*}
\mathcal{A}^l &= \Sigma\inv(e_l^T\otimes \mu + \mu^T\otimes e_l)\Sigma\inv, \\
\mathcal{B}^{lm} &= -\Sigma\inv(J^{lm} + J^{ml} - J^{lm}J^{lm})\Sigma\inv.
\end{align*}
Let $s_l$ be the $l$th column of $\Sigma\inv$ and set $\alpha_{-1} = (\alpha_2,\ldots,\alpha_q)^T$.  The vector $l_\alpha$ can be formed using the values
\begin{align}
\frac{\partial g}{\partial a_l}(E(Z_1)) &= \ip{\mathcal{A}^l\gamma_1}{\alpha_{-1}}_p - \ip{\gamma_0}{\alpha_{-1}^Ts_l}, \quad 1\leq l \leq p,\label{eq: l_alpha}\\
\frac{\partial g}{\partial B_{lm}}(E(Z_1)) &= \ip{\mathcal{B}^{lm}\gamma_1}{\alpha_{-1}}_p, \quad 1\leq l \leq m \leq p,\nonumber\\
\frac{\partial g}{\partial c_1}(E(Z_1)) &= 1, \nonumber\\
\frac{\partial g}{\partial c_l}(E(Z_1)) &= -s_{l-1}^T\mu, \quad 2\leq l \leq q, \nonumber\\
\frac{\partial g}{\partial D_{lm}}(E(Z_1)) &= (\Sigma\inv)_{lm}, \quad 1 \leq l,m\leq p.\nonumber
\end{align}
Then, the $\delta$-method yields
\[
\sqrt{n}\ip{\hbeta - \beta}{\alpha} \rightsquigarrow N(0, l_\alpha^TC_\alpha l_\alpha).
\]
\end{proof}

\begin{proof}[Proof of Corollary~\ref{cor: hilbert}]
Again, set $q = p+1$.  The first display in the corollary follows since $\sup_{x \in V_B} \lVert \hmp(x) - \mp(x)\rVert_\Om$ is bounded by
\[
\lVert \hbeta_0 - \beta_0\rVert_\Om + (\lVert\mu\rVert_E + B)\lVert \hbeta_1 - \beta_1\rVert_{\Om^p} + \lVert \xbar - \mu\rVert_E\lVert \hat{\beta}_1\rVert_\Om = O_p(n^{-1/2}).
\]
%and
%\[
%\lim_{n \ra \infty} P\left(\lVert \sqrt{n}(\hbeta - \beta)\rVert_{\Om^q} > R\right) = P(\lVert\mathcal{G}\rVert_{\Om^q} > R) \ra 0 \quad \textrm{as} \quad R\ra \infty.
%\]

For the second result, note that Lemmas 1.5.2, 1.5.3 and Theorem 1.5.4 of van der Vaart and Wellner (1996) can be generalized to the space $l_\Om^\infty(V_B)$.  Then, we need to show that $\mathcal{M}_n$ is asymptotically tight and that, for any finite collection $x_1,\ldots,x_J \subset \R^p$, $(\mathcal{M}_n(x_1),\ldots,\mathcal{M}_n(x_J))$ converges weakly to the corresponding marginals of $\mathcal{M}$.

For simplicity, take $x_1,x_2 \in \R^p$.  Similar to the proof of Theorem~\ref{thm: hilbert1}, for fixed $\om \in \Om$, define $W_i = X_iX_i^T$, $\xi_i = \ip{Y_i}{\om}$ and $\eta_i \in \R^p$ with elements $\eta_{ij} = \ip{X_{ij}Y_i}{\om}$.  Also, define $\rho = \ip{\gamma_0}{\om}$, $\tau\in\R^p$ with elements $\tau_j = \ip{\gamma_{1j}}{\om} + \mu_j\rho$, and set $Z_i = (X_i^T, \vechop(W_i)^T, \xi_i, \eta_i^T)^T$.  Then $Z_1,\ldots,Z_n$ are independent with the same distribution and $E(Z_i) = (\mu^T, \vechop(\Sigma+\mu\mu^T)^T, \rho, \tau^T + \rho\mu^T)^T$.  Letting $C_\om = \Cov(Z_i)$, we have
\[
\sqrt{n}\left[\bar{Z} - E(Z_1)\right] \rightsquigarrow N(0, C_\om).
\]
For $a,c \in \R^p$, $b \in \R$ and $G$ a $p\times p$ symmetric matrix, define
\[
g_k(a, \vechop(G), b, c) = b + (x_k - a)^T(G - aa^T)\inv(c - ba),\quad k=1,2.
\]
It is easy to verify that $\hmp(x_k) = g_k(\bar{Z})$ and $\mp(x_k) = g_k(E(Z_1))$.  Define $r_{\om,k}$ to be the gradient of $g_k$ evaluated at $E(Z_1)$ and set $R_{\om} = (r_{\om,1},r_{\om,2})$.  Then the bivariate delta method gives
\[
(\mathcal{M}_n(x_1), \mathcal{M}_n(x_2))^T \rightsquigarrow N(0, R_\om^TC_\om R_\om).
\]
The process $\mathcal{M}$ is characterized by the distribution of its marginals, as given above.

For tightness, first let $\delta,\eps >0$ be given, define an orthonormal basis $\{e_j\}_{j =1}^\infty$ for $\Om$ and let $\Pi_J(\om) = \sum_{j = 1}^J \ip{\om}{e_j}e_j$ for any integer $J$ and $\om \in \Om$.  By combining Theorem~\ref{thm: hilbert1} and Lemma 1.8.1 of van der Vaart and Wellner (1996), there exists finite $J_0$ such that, with $\tilde{\mathcal{M}}_n(x) = \Pi_{J_0}(\mathcal{M}_n(x))$,
\[
\limsup_n P\left(\lVert \mathcal{M}_n - \tilde{\mathcal{M}}_n\rVert_{V_B}^2 > \delta\right)<\eps.
\]
Note that $\tilde{\mathcal{M}}_n(x) - \tilde{\mathcal{M}}_n(y) = \sum_{k=1}^p\Pi_{J_0}(\hat{\beta}_{1k} - \beta_{1k})(x_k-y_k)$ so that, for any $\eta > 0$,
\[
\lim_{\tau \ra 0}\limsup_n P\left(\sup_{\substack{\lVert x - y\rVert_E < \tau \\ x,y\in V_B}} \lVert \tilde{\mathcal{M}}_n(x) - \tilde{\mathcal{M}}_n(y)\rVert_\Om > \eta\right) \ra 0
\]
by again combining Theorem~\ref{thm: hilbert1} with Lemma 1.8.1 of van der Vaart and Wellner (1996).  This means that $\tilde{\mathcal{M}}_n$ is tight by Theorem 1.5.7 of van der Vaart and Wellner (1996), since $\tilde{\mathcal{M}}_n(x)$ takes values on the finite-dimensional Euclidean space spanned by the first $J_0$ basis functions $e_j\in\Om$.  For $A \subset l_\Om^\infty(V_B)$, define
\[
A^\delta = \{g \in l_\Om^\infty(V_B): \inf_{a \in A}\lVert a - g\rVert_{V_B} < \delta\}.
\]
Then there exists a compact set $K \subset l_\Om^\infty(V_B)$ such that $$\liminf_n P(\tilde{\mathcal{M}}_n \in K^\delta) \geq 1 - \eps$$ and, hence,
\begin{align*}
\liminf_n P(\mathcal{M}_n \in K^{2\delta}) &\geq \liminf_n P(\tilde{\mathcal{M}}_n \in K^\delta) \\
&\hspace{1cm} - \limsup_n P(\lVert \mathcal{M}_n - \tilde{\mathcal{M}}_n\rVert_{V_B} > \delta) \geq 1-2\eps,
\end{align*}
so $\mathcal{M}_n$ is asymptotically tight.
\end{proof}

\begin{thebibliography}{51}
% BibTex style file: imsart-nameyear.bst, 2013-01-28
% Default style options (sort=1,type=nameyear).
% Used options (sort=1,type=nameyear).

\bibitem[\protect\citeauthoryear{Afsari}{2011}]{afsa:11}
\begin{barticle}[author]
\bauthor{\bsnm{Afsari},~\bfnm{Bijan}\binits{B.}}
(\byear{2011}).
\btitle{Riemannian Lp center of mass: Existence, uniqueness, and convexity}.
\bjournal{Proceedings of the American Mathematical Society}
\bvolume{139}
\bpages{655--673}.
\end{barticle}
\endbibitem

\bibitem[\protect\citeauthoryear{Allen et~al.}{2014}]{alle:14}
\begin{barticle}[author]
\bauthor{\bsnm{Allen},~\bfnm{Elena~A}\binits{E.~A.}},
  \bauthor{\bsnm{Damaraju},~\bfnm{Eswar}\binits{E.}},
  \bauthor{\bsnm{Plis},~\bfnm{Sergey~M}\binits{S.~M.}},
  \bauthor{\bsnm{Erhardt},~\bfnm{Erik~B}\binits{E.~B.}},
  \bauthor{\bsnm{Eichele},~\bfnm{Tom}\binits{T.}} \AND
  \bauthor{\bsnm{Calhoun},~\bfnm{Vince~D}\binits{V.~D.}}
(\byear{2014}).
\btitle{Tracking Whole-Brain Connectivity Dynamics in the Resting State}.
\bjournal{Cerebral Cortex}
\bvolume{24}
\bpages{663--676}.
\end{barticle}
\endbibitem

\bibitem[\protect\citeauthoryear{Arsigny et~al.}{2007}]{arsi:07}
\begin{barticle}[author]
\bauthor{\bsnm{Arsigny},~\bfnm{Vincent}\binits{V.}},
  \bauthor{\bsnm{Fillard},~\bfnm{Pierre}\binits{P.}},
  \bauthor{\bsnm{Pennec},~\bfnm{Xavier}\binits{X.}} \AND
  \bauthor{\bsnm{Ayache},~\bfnm{Nicholas}\binits{N.}}
(\byear{2007}).
\btitle{Geometric means in a novel vector space structure on symmetric
  positive-definite matrices}.
\bjournal{SIAM Journal on Matrix Analysis and Applications}
\bvolume{29}
\bpages{328--347}.
\end{barticle}
\endbibitem

\bibitem[\protect\citeauthoryear{Barden, Le and Owen}{2013}]{bard:13}
\begin{barticle}[author]
\bauthor{\bsnm{Barden},~\bfnm{Dennis}\binits{D.}},
  \bauthor{\bsnm{Le},~\bfnm{Huiling}\binits{H.}} \AND
  \bauthor{\bsnm{Owen},~\bfnm{Megan}\binits{M.}}
(\byear{2013}).
\btitle{Central limit theorems for {F}r{\'e}chet means in the space of
  phylogenetic trees}.
\bjournal{Electronic Journal of Probability}
\bvolume{18}
\bpages{1--25}.
\end{barticle}
\endbibitem

\bibitem[\protect\citeauthoryear{Bhattacharya and Patrangenaru}{2003}]{bhat:03}
\begin{barticle}[author]
\bauthor{\bsnm{Bhattacharya},~\bfnm{R.}\binits{R.}} \AND
  \bauthor{\bsnm{Patrangenaru},~\bfnm{V.}\binits{V.}}
(\byear{2003}).
\btitle{Large sample theory of intrinsic and extrinsic sample means on
  manifolds - {I}}.
\bjournal{Annals of Statistics}
\bvolume{31}
\bpages{1--29}.
\end{barticle}
\endbibitem

\bibitem[\protect\citeauthoryear{Bhattacharya et~al.}{2012}]{bhat:12}
\begin{barticle}[author]
\bauthor{\bsnm{Bhattacharya},~\bfnm{Rabindra~N}\binits{R.~N.}},
  \bauthor{\bsnm{Ellingson},~\bfnm{L}\binits{L.}},
  \bauthor{\bsnm{Liu},~\bfnm{X}\binits{X.}},
  \bauthor{\bsnm{Patrangenaru},~\bfnm{V}\binits{V.}} \AND
  \bauthor{\bsnm{Crane},~\bfnm{M}\binits{M.}}
(\byear{2012}).
\btitle{Extrinsic analysis on manifolds is computationally faster than
  intrinsic analysis with applications to quality control by machine vision}.
\bjournal{Applied Stochastic Models in Business and Industry}
\bvolume{28}
\bpages{222--235}.
\end{barticle}
\endbibitem

\bibitem[\protect\citeauthoryear{Borsdorf and Higham}{2010}]{bors:10}
\begin{barticle}[author]
\bauthor{\bsnm{Borsdorf},~\bfnm{R{\"u}diger}\binits{R.}} \AND
  \bauthor{\bsnm{Higham},~\bfnm{Nicholas~J}\binits{N.~J.}}
(\byear{2010}).
\btitle{A preconditioned {N}ewton algorithm for the nearest correlation
  matrix}.
\bjournal{IMA Journal of Numerical Analysis}
\bvolume{30}
\bpages{94--107}.
\end{barticle}
\endbibitem

\bibitem[\protect\citeauthoryear{Boumal et~al.}{2014}]{boum:14}
\begin{barticle}[author]
\bauthor{\bsnm{Boumal},~\bfnm{Nicolas}\binits{N.}},
  \bauthor{\bsnm{Mishra},~\bfnm{Bamdev}\binits{B.}},
  \bauthor{\bsnm{Absil},~\bfnm{Pierre-Antoine}\binits{P.-A.}},
  \bauthor{\bsnm{Sepulchre},~\bfnm{Rodolphe}\binits{R.}} \betal{et~al.}
(\byear{2014}).
\btitle{Manopt, a matlab toolbox for optimization on manifolds}.
\bjournal{Journal of Machine Learning Research}
\bvolume{15}
\bpages{1455--1459}.
\end{barticle}
\endbibitem

\bibitem[\protect\citeauthoryear{Bradley}{1968}]{brad:68}
\begin{bbook}[author]
\bauthor{\bsnm{Bradley},~\bfnm{James~V}\binits{J.~V.}}
(\byear{1968}).
\btitle{Distribution-free {S}tatistical {T}ests}.
\bpublisher{NJ, Prentice-Hall}.
\end{bbook}
\endbibitem

\bibitem[\protect\citeauthoryear{Chang}{1989}]{chan:89}
\begin{barticle}[author]
\bauthor{\bsnm{Chang},~\bfnm{Ted}\binits{T.}}
(\byear{1989}).
\btitle{Spherical regression with errors in variables}.
\bjournal{Annals of Statistics}
\bvolume{17}
\bpages{293--306}.
\end{barticle}
\endbibitem

\bibitem[\protect\citeauthoryear{Cornea et~al.}{2016}]{corn:17}
\begin{barticle}[author]
\bauthor{\bsnm{Cornea},~\bfnm{Emil}\binits{E.}},
  \bauthor{\bsnm{Zhu},~\bfnm{Hongtu}\binits{H.}},
  \bauthor{\bsnm{Kim},~\bfnm{Peter}\binits{P.}} \AND
  \bauthor{\bsnm{Ibrahim},~\bfnm{Joseph~G}\binits{J.~G.}}
(\byear{2016}).
\btitle{Regression models on Riemannian symmetric spaces}.
\bjournal{Journal of the Royal Statistical Society: Series B}.
\end{barticle}
\endbibitem

\bibitem[\protect\citeauthoryear{Craven and Wahba}{1979}]{crav:79}
\begin{barticle}[author]
\bauthor{\bsnm{Craven},~\bfnm{Peter}\binits{P.}} \AND
  \bauthor{\bsnm{Wahba},~\bfnm{Grace}\binits{G.}}
(\byear{1979}).
\btitle{Smoothing noisy data with spline functions.}
\bjournal{Numerical Mathematics}
\bvolume{31}
\bpages{377--403}.
\bmrnumber{MR516581 (81g:65018)}
\end{barticle}
\endbibitem

\bibitem[\protect\citeauthoryear{Davis et~al.}{2007}]{davi:07}
\begin{binproceedings}[author]
\bauthor{\bsnm{Davis},~\bfnm{Bradley~C}\binits{B.~C.}},
  \bauthor{\bsnm{Fletcher},~\bfnm{P~Thomas}\binits{P.~T.}},
  \bauthor{\bsnm{Bullitt},~\bfnm{Elizabeth}\binits{E.}} \AND
  \bauthor{\bsnm{Joshi},~\bfnm{S}\binits{S.}}
(\byear{2007}).
\btitle{Population shape regression from random design data}.
In \bbooktitle{ICCV 2007. IEEE 11th International Conference on Computer
  Vision}
\bpages{1--7}.
\end{binproceedings}
\endbibitem

\bibitem[\protect\citeauthoryear{Fan and Gijbels}{1996}]{fan:96}
\begin{bbook}[author]
\bauthor{\bsnm{Fan},~\bfnm{J.}\binits{J.}} \AND
  \bauthor{\bsnm{Gijbels},~\bfnm{I.}\binits{I.}}
(\byear{1996}).
\btitle{Local Polynomial Modelling and its Applications}.
\bpublisher{Chapman \& Hall}, \baddress{London}.
\bmrnumber{MR1383587 (97f:62063)}
\end{bbook}
\endbibitem

\bibitem[\protect\citeauthoryear{Faraway}{1997}]{fara:97}
\begin{barticle}[author]
\bauthor{\bsnm{Faraway},~\bfnm{Julian~J.}\binits{J.~J.}}
(\byear{1997}).
\btitle{Regression analysis for a functional response}.
\bjournal{Technometrics}
\bvolume{39}
\bpages{254--261}.
\bmrnumber{MR1462586}
\end{barticle}
\endbibitem

\bibitem[\protect\citeauthoryear{Faraway}{2014}]{fara:14}
\begin{barticle}[author]
\bauthor{\bsnm{Faraway},~\bfnm{Julian~J}\binits{J.~J.}}
(\byear{2014}).
\btitle{Regression for non-{E}uclidean data using distance matrices}.
\bjournal{Journal of Applied Statistics}
\bvolume{41}
\bpages{2342--2357}.
\end{barticle}
\endbibitem

\bibitem[\protect\citeauthoryear{Ferreira and Busatto}{2013}]{ferr:13}
\begin{barticle}[author]
\bauthor{\bsnm{Ferreira},~\bfnm{Luiz~Kobuti}\binits{L.~K.}} \AND
  \bauthor{\bsnm{Busatto},~\bfnm{Geraldo~F}\binits{G.~F.}}
(\byear{2013}).
\btitle{Resting-state functional connectivity in normal brain aging}.
\bjournal{Neuroscience \& Biobehavioral Reviews}
\bvolume{37}
\bpages{384--400}.
\end{barticle}
\endbibitem

\bibitem[\protect\citeauthoryear{Ferreira et~al.}{2013}]{ferr:13:2}
\begin{barticle}[author]
\bauthor{\bsnm{Ferreira},~\bfnm{Ricardo}\binits{R.}},
  \bauthor{\bsnm{Xavier},~\bfnm{Jo{\~a}o}\binits{J.}},
  \bauthor{\bsnm{Costeira},~\bfnm{Jo{\~a}o~P}\binits{J.~P.}} \AND
  \bauthor{\bsnm{Barroso},~\bfnm{Victor}\binits{V.}}
(\byear{2013}).
\btitle{Newton algorithms for Riemannian distance related problems on connected
  locally symmetric manifolds}.
\bjournal{IEEE Journal of Selected Topics in Signal Processing}
\bvolume{7}
\bpages{634--645}.
\end{barticle}
\endbibitem

\bibitem[\protect\citeauthoryear{Fisher}{1995}]{fish:95}
\begin{bbook}[author]
\bauthor{\bsnm{Fisher},~\bfnm{Nicholas~I}\binits{N.~I.}}
(\byear{1995}).
\btitle{Statistical analysis of circular data}.
\bpublisher{Cambridge University Press}.
\end{bbook}
\endbibitem

\bibitem[\protect\citeauthoryear{Fisher, Lewis and Embleton}{1987}]{fish:87}
\begin{bbook}[author]
\bauthor{\bsnm{Fisher},~\bfnm{Nicholas~I}\binits{N.~I.}},
  \bauthor{\bsnm{Lewis},~\bfnm{Toby}\binits{T.}} \AND
  \bauthor{\bsnm{Embleton},~\bfnm{Brian~JJ}\binits{B.~J.}}
(\byear{1987}).
\btitle{Statistical analysis of spherical data}.
\bpublisher{Cambridge University Press}.
\end{bbook}
\endbibitem

\bibitem[\protect\citeauthoryear{Fletcher}{2013}]{flet:13}
\begin{barticle}[author]
\bauthor{\bsnm{Fletcher},~\bfnm{P~Thomas}\binits{P.~T.}}
(\byear{2013}).
\btitle{Geodesic regression and the theory of least squares on {R}iemannian
  manifolds}.
\bjournal{International Journal of Computer Vision}
\bvolume{105}
\bpages{171--185}.
\end{barticle}
\endbibitem

\bibitem[\protect\citeauthoryear{Fr{\'e}chet}{1948}]{frec:48}
\begin{binproceedings}[author]
\bauthor{\bsnm{Fr{\'e}chet},~\bfnm{Maurice}\binits{M.}}
(\byear{1948}).
\btitle{Les {\'e}l{\'e}ments al{\'e}atoires de nature quelconque dans un espace
  distanci{\'e}}.
In \bbooktitle{Annales de l'Institut Henri Poincar{\'e}}
\bvolume{10}
\bpages{215--310}.
\end{binproceedings}
\endbibitem

\bibitem[\protect\citeauthoryear{Hein}{2009}]{hein:09}
\begin{binproceedings}[author]
\bauthor{\bsnm{Hein},~\bfnm{Matthias}\binits{M.}}
(\byear{2009}).
\btitle{Robust Nonparametric Regression with Metric-Space valued Output}.
In \bbooktitle{Advances in Neural Information Processing Systems}
\bpages{718--726}.
\end{binproceedings}
\endbibitem

\bibitem[\protect\citeauthoryear{Higgins}{2004}]{higg:04}
\begin{bbook}[author]
\bauthor{\bsnm{Higgins},~\bfnm{James~J}\binits{J.~J.}}
(\byear{2004}).
\btitle{An introduction to modern nonparametric statistics}.
\bpublisher{Brooks/Cole Pacific Grove, CA}.
\end{bbook}
\endbibitem

\bibitem[\protect\citeauthoryear{Higham}{2002}]{high:02}
\begin{barticle}[author]
\bauthor{\bsnm{Higham},~\bfnm{Nicholas~J}\binits{N.~J.}}
(\byear{2002}).
\btitle{Computing the nearest correlation matrix -- a problem from finance}.
\bjournal{IMA Journal of Numerical Analysis}
\bvolume{22}
\bpages{329--343}.
\end{barticle}
\endbibitem

\bibitem[\protect\citeauthoryear{Hinkle et~al.}{2012}]{hink:12}
\begin{bincollection}[author]
\bauthor{\bsnm{Hinkle},~\bfnm{Jacob}\binits{J.}},
  \bauthor{\bsnm{Muralidharan},~\bfnm{Prasanna}\binits{P.}},
  \bauthor{\bsnm{Fletcher},~\bfnm{P~Thomas}\binits{P.~T.}} \AND
  \bauthor{\bsnm{Joshi},~\bfnm{Sarang}\binits{S.}}
(\byear{2012}).
\btitle{Polynomial regression on {R}iemannian manifolds}.
In \bbooktitle{Computer Vision--ECCV 2012}
\bpages{1--14}.
\bpublisher{Springer}.
\end{bincollection}
\endbibitem

\bibitem[\protect\citeauthoryear{Le and Barden}{2014}]{le:14}
\begin{barticle}[author]
\bauthor{\bsnm{Le},~\bfnm{H}\binits{H.}} \AND
  \bauthor{\bsnm{Barden},~\bfnm{Dennis}\binits{D.}}
(\byear{2014}).
\btitle{On the measure of the cut locus of a Fr{\'e}chet mean}.
\bjournal{Bulletin of the London Mathematical Society}
\bpages{bdu025}.
\end{barticle}
\endbibitem

\bibitem[\protect\citeauthoryear{Lee, Smyser and Shimony}{2013}]{lee:13}
\begin{barticle}[author]
\bauthor{\bsnm{Lee},~\bfnm{MH}\binits{M.}},
  \bauthor{\bsnm{Smyser},~\bfnm{CD}\binits{C.}} \AND
  \bauthor{\bsnm{Shimony},~\bfnm{JS}\binits{J.}}
(\byear{2013}).
\btitle{Resting-state f{MRI}: a review of methods and clinical applications}.
\bjournal{American Journal of Neuroradiology}
\bvolume{34}
\bpages{1866--1872}.
\end{barticle}
\endbibitem

\bibitem[\protect\citeauthoryear{Lehmann and D'Abrera}{2006}]{lehm:06}
\begin{bbook}[author]
\bauthor{\bsnm{Lehmann},~\bfnm{Erich~Leo}\binits{E.~L.}} \AND
  \bauthor{\bsnm{D'Abrera},~\bfnm{Howard~JM}\binits{H.~J.}}
(\byear{2006}).
\btitle{Nonparametrics: statistical methods based on ranks}.
\bpublisher{Springer New York}.
\end{bbook}
\endbibitem

\bibitem[\protect\citeauthoryear{Lin et~al.}{2015}]{lin:15}
\begin{barticle}[author]
\bauthor{\bsnm{Lin},~\bfnm{Lizhen}\binits{L.}},
  \bauthor{\bsnm{Thomas},~\bfnm{Brian~St}\binits{B.~S.}},
  \bauthor{\bsnm{Zhu},~\bfnm{Hongtu}\binits{H.}} \AND
  \bauthor{\bsnm{Dunson},~\bfnm{David~B}\binits{D.~B.}}
(\byear{2015}).
\btitle{Extrinsic local regression on manifold-valued data}.
\bjournal{arXiv preprint arXiv:1508.02201}.
\end{barticle}
\endbibitem

\bibitem[\protect\citeauthoryear{Marron and Alonso}{2014}]{marr:14}
\begin{barticle}[author]
\bauthor{\bsnm{Marron},~\bfnm{J~Steve}\binits{J.~S.}} \AND
  \bauthor{\bsnm{Alonso},~\bfnm{Andr{\'e}s~M}\binits{A.~M.}}
(\byear{2014}).
\btitle{Overview of object oriented data analysis}.
\bjournal{Biometrical Journal}
\bvolume{56}
\bpages{732--753}.
\end{barticle}
\endbibitem

\bibitem[\protect\citeauthoryear{Marx and Eilers}{1996}]{marx:96}
\begin{barticle}[author]
\bauthor{\bsnm{Marx},~\bfnm{B.}\binits{B.}} \AND
  \bauthor{\bsnm{Eilers},~\bfnm{B.}\binits{B.}}
(\byear{1996}).
\btitle{Flexible smoothing with {B}-splines and penalties (with comments and
  rejoinder)}.
\bjournal{Statistical Science}
\bvolume{11}
\bpages{89--121}.
\end{barticle}
\endbibitem

\bibitem[\protect\citeauthoryear{Mevel et~al.}{2013}]{meve:13}
\begin{barticle}[author]
\bauthor{\bsnm{Mevel},~\bfnm{Katell}\binits{K.}},
  \bauthor{\bsnm{Landeau},~\bfnm{Brigitte}\binits{B.}},
  \bauthor{\bsnm{Fouquet},~\bfnm{Marine}\binits{M.}},
  \bauthor{\bsnm{La~Joie},~\bfnm{Renaud}\binits{R.}},
  \bauthor{\bsnm{Villain},~\bfnm{Nicolas}\binits{N.}},
  \bauthor{\bsnm{M{\'e}zenge},~\bfnm{Florence}\binits{F.}},
  \bauthor{\bsnm{Perrotin},~\bfnm{Audrey}\binits{A.}},
  \bauthor{\bsnm{Eustache},~\bfnm{Francis}\binits{F.}},
  \bauthor{\bsnm{Desgranges},~\bfnm{Beatrice}\binits{B.}} \AND
  \bauthor{\bsnm{Ch{\'e}telat},~\bfnm{Ga{\"e}l}\binits{G.}}
(\byear{2013}).
\btitle{Age effect on the default mode network, inner thoughts, and cognitive
  abilities}.
\bjournal{Neurobiology of Aging}
\bvolume{34}
\bpages{1292--1301}.
\end{barticle}
\endbibitem

\bibitem[\protect\citeauthoryear{Niethammer, Huang and Vialard}{2011}]{niet:11}
\begin{bincollection}[author]
\bauthor{\bsnm{Niethammer},~\bfnm{Marc}\binits{M.}},
  \bauthor{\bsnm{Huang},~\bfnm{Yang}\binits{Y.}} \AND
  \bauthor{\bsnm{Vialard},~\bfnm{Fran{\c{c}}ois-Xavier}\binits{F.-X.}}
(\byear{2011}).
\btitle{Geodesic regression for image time-series}.
In \bbooktitle{Medical Image Computing and Computer-Assisted
  Intervention--MICCAI 2011}
\bpages{655--662}.
\bpublisher{Springer}.
\end{bincollection}
\endbibitem

\bibitem[\protect\citeauthoryear{Onoda, Ishihara and Yamaguchi}{2012}]{onod:12}
\begin{barticle}[author]
\bauthor{\bsnm{Onoda},~\bfnm{Keiichi}\binits{K.}},
  \bauthor{\bsnm{Ishihara},~\bfnm{Masaki}\binits{M.}} \AND
  \bauthor{\bsnm{Yamaguchi},~\bfnm{Shuhei}\binits{S.}}
(\byear{2012}).
\btitle{Decreased functional connectivity by aging is associated with cognitive
  decline}.
\bjournal{Journal of Cognitive Neuroscience}
\bvolume{24}
\bpages{2186--2198}.
\end{barticle}
\endbibitem

\bibitem[\protect\citeauthoryear{Panaretos and Zemel}{2016}]{pana:16}
\begin{barticle}[author]
\bauthor{\bsnm{Panaretos},~\bfnm{Victor~M}\binits{V.~M.}} \AND
  \bauthor{\bsnm{Zemel},~\bfnm{Yoav}\binits{Y.}}
(\byear{2016}).
\btitle{Amplitude and phase variation of point processes}.
\bjournal{The Annals of Statistics}
\bvolume{44}
\bpages{771--812}.
\end{barticle}
\endbibitem

\bibitem[\protect\citeauthoryear{Patrangenaru and Ellingson}{2015}]{patr:15}
\begin{bbook}[author]
\bauthor{\bsnm{Patrangenaru},~\bfnm{Victor}\binits{V.}} \AND
  \bauthor{\bsnm{Ellingson},~\bfnm{Leif}\binits{L.}}
(\byear{2015}).
\btitle{Nonparametric Statistics on Manifolds and Their Applications to Object
  Data Analysis}.
\bpublisher{CRC Press}.
\end{bbook}
\endbibitem

\bibitem[\protect\citeauthoryear{Pelletier}{2006}]{pell:06}
\begin{barticle}[author]
\bauthor{\bsnm{Pelletier},~\bfnm{Bruno}\binits{B.}}
(\byear{2006}).
\btitle{Non-parametric regression estimation on closed {R}iemannian manifolds}.
\bjournal{Journal of Nonparametric Statistics}
\bvolume{18}
\bpages{57--67}.
\end{barticle}
\endbibitem

\bibitem[\protect\citeauthoryear{Pigoli et~al.}{2014}]{pigo:14}
\begin{barticle}[author]
\bauthor{\bsnm{Pigoli},~\bfnm{Davide}\binits{D.}},
  \bauthor{\bsnm{Aston},~\bfnm{John~AD}\binits{J.~A.}},
  \bauthor{\bsnm{Dryden},~\bfnm{Ian~L}\binits{I.~L.}} \AND
  \bauthor{\bsnm{Secchi},~\bfnm{Piercesare}\binits{P.}}
(\byear{2014}).
\btitle{Distances and inference for covariance operators}.
\bjournal{Biometrika}
\bvolume{101}
\bpages{409--422}.
\end{barticle}
\endbibitem

\bibitem[\protect\citeauthoryear{Prentice}{1989}]{pren:89}
\begin{barticle}[author]
\bauthor{\bsnm{Prentice},~\bfnm{Michael~J}\binits{M.~J.}}
(\byear{1989}).
\btitle{Spherical regression on matched pairs of orientation statistics}.
\bjournal{Journal of the Royal Statistical Society: Series B}
\bpages{241--248}.
\end{barticle}
\endbibitem

\bibitem[\protect\citeauthoryear{Qi and Sun}{2006}]{qi:06}
\begin{barticle}[author]
\bauthor{\bsnm{Qi},~\bfnm{Houduo}\binits{H.}} \AND
  \bauthor{\bsnm{Sun},~\bfnm{Defeng}\binits{D.}}
(\byear{2006}).
\btitle{A quadratically convergent {N}ewton method for computing the nearest
  correlation matrix}.
\bjournal{SIAM Journal on Matrix Analysis and Applications}
\bvolume{28}
\bpages{360--385}.
\end{barticle}
\endbibitem

\bibitem[\protect\citeauthoryear{Sheline and Raichle}{2013}]{shel:13}
\begin{barticle}[author]
\bauthor{\bsnm{Sheline},~\bfnm{Yvette~I}\binits{Y.~I.}} \AND
  \bauthor{\bsnm{Raichle},~\bfnm{Marcus~E}\binits{M.~E.}}
(\byear{2013}).
\btitle{Resting state functional connectivity in preclinical {A}lzheimer's
  disease}.
\bjournal{Biological Psychiatry}
\bvolume{74}
\bpages{340--347}.
\end{barticle}
\endbibitem

\bibitem[\protect\citeauthoryear{Shi et~al.}{2009}]{shi:09}
\begin{bincollection}[author]
\bauthor{\bsnm{Shi},~\bfnm{Xiaoyan}\binits{X.}},
  \bauthor{\bsnm{Styner},~\bfnm{Martin}\binits{M.}},
  \bauthor{\bsnm{Lieberman},~\bfnm{Jeffrey}\binits{J.}},
  \bauthor{\bsnm{Ibrahim},~\bfnm{Joseph~G}\binits{J.~G.}},
  \bauthor{\bsnm{Lin},~\bfnm{Weili}\binits{W.}} \AND
  \bauthor{\bsnm{Zhu},~\bfnm{Hongtu}\binits{H.}}
(\byear{2009}).
\btitle{Intrinsic regression models for manifold-valued data}.
In \bbooktitle{Medical Image Computing and Computer-Assisted
  Intervention--MICCAI 2009}
\bpages{192--199}.
\bpublisher{Springer}.
\end{bincollection}
\endbibitem

\bibitem[\protect\citeauthoryear{Steinke and Hein}{2009}]{stei:09}
\begin{binproceedings}[author]
\bauthor{\bsnm{Steinke},~\bfnm{Florian}\binits{F.}} \AND
  \bauthor{\bsnm{Hein},~\bfnm{Matthias}\binits{M.}}
(\byear{2009}).
\btitle{Non-parametric regression between manifolds}.
In \bbooktitle{Advances in Neural Information Processing Systems}
\bpages{1561--1568}.
\end{binproceedings}
\endbibitem

\bibitem[\protect\citeauthoryear{Steinke, Hein and
  Sch{\"o}lkopf}{2010}]{stei:10}
\begin{barticle}[author]
\bauthor{\bsnm{Steinke},~\bfnm{Florian}\binits{F.}},
  \bauthor{\bsnm{Hein},~\bfnm{Matthias}\binits{M.}} \AND
  \bauthor{\bsnm{Sch{\"o}lkopf},~\bfnm{Bernhard}\binits{B.}}
(\byear{2010}).
\btitle{Nonparametric regression between general {R}iemannian manifolds}.
\bjournal{SIAM Journal on Imaging Sciences}
\bvolume{3}
\bpages{527--563}.
\end{barticle}
\endbibitem

\bibitem[\protect\citeauthoryear{Su et~al.}{2012}]{su:12}
\begin{barticle}[author]
\bauthor{\bsnm{Su},~\bfnm{Jingyong}\binits{J.}},
  \bauthor{\bsnm{Dryden},~\bfnm{Ian~L}\binits{I.~L.}},
  \bauthor{\bsnm{Klassen},~\bfnm{Eric}\binits{E.}},
  \bauthor{\bsnm{Le},~\bfnm{Huiling}\binits{H.}} \AND
  \bauthor{\bsnm{Srivastava},~\bfnm{Anuj}\binits{A.}}
(\byear{2012}).
\btitle{Fitting smoothing splines to time-indexed, noisy points on nonlinear
  manifolds}.
\bjournal{Image and Vision Computing}
\bvolume{30}
\bpages{428--442}.
\end{barticle}
\endbibitem

\bibitem[\protect\citeauthoryear{Takatsu}{2011}]{taka:11}
\begin{barticle}[author]
\bauthor{\bsnm{Takatsu},~\bfnm{Asuka}\binits{A.}}
(\byear{2011}).
\btitle{Wasserstein geometry of {G}aussian measures}.
\bjournal{Osaka Journal of Mathematics}
\bvolume{48}
\bpages{1005--1026}.
\end{barticle}
\endbibitem

\bibitem[\protect\citeauthoryear{Van~der Vaart and Wellner}{1996}]{well:96}
\begin{bbook}[author]
\bauthor{\bparticle{Van~der} \bsnm{Vaart},~\bfnm{Aad}\binits{A.}} \AND
  \bauthor{\bsnm{Wellner},~\bfnm{John}\binits{J.}}
(\byear{1996}).
\btitle{Weak Convergence and Empirical Processes}.
\bpublisher{Springer, New York}.
\end{bbook}
\endbibitem

\bibitem[\protect\citeauthoryear{Wang et~al.}{2007}]{wang:07:1}
\begin{barticle}[author]
\bauthor{\bsnm{Wang},~\bfnm{Haonan}\binits{H.}},
  \bauthor{\bsnm{Marron},~\bfnm{JS}\binits{J.}} \betal{et~al.}
(\byear{2007}).
\btitle{Object oriented data analysis: Sets of trees}.
\bjournal{Annals of Statistics}
\bvolume{35}
\bpages{1849--1873}.
\end{barticle}
\endbibitem

\bibitem[\protect\citeauthoryear{Yuan et~al.}{2012}]{yuan:12}
\begin{barticle}[author]
\bauthor{\bsnm{Yuan},~\bfnm{Ying}\binits{Y.}},
  \bauthor{\bsnm{Zhu},~\bfnm{Hongtu}\binits{H.}},
  \bauthor{\bsnm{Lin},~\bfnm{Weili}\binits{W.}} \AND
  \bauthor{\bsnm{Marron},~\bfnm{JS}\binits{J.}}
(\byear{2012}).
\btitle{Local polynomial regression for symmetric positive definite matrices}.
\bjournal{Journal of the Royal Statistical Society: Series B (Statistical
  Methodology)}
\bvolume{74}
\bpages{697--719}.
\end{barticle}
\endbibitem

\bibitem[\protect\citeauthoryear{Ziezold}{1977}]{ziez:77}
\begin{binproceedings}[author]
\bauthor{\bsnm{Ziezold},~\bfnm{Herbert}\binits{H.}}
(\bye